\newcommand{\yestick}{\ding{51}}
\newcommand{\notick}{\ding{55}}
\def\pro{\mathop{\rm PRO}}
\def\sep{\mathop{\rm SEP}}
\def\rank{\mathop{\rm rank}}
\def\tr{\mathop{\rm Tr}}
\def\iso{\mathop{\rm iso}}
\newcommand{\one}{\mathbbm{1}} 
\newcommand{\I}{\text{i}} 
\newcommand{\E}{\text{e}} 
\newcommand{\etal}{\textit{et al. }}
\newcommand{\Lam}{{\Lambda^{2}}}
\newcommand{\Lamf}{{\Lambda^{2}_{\text{f}}}}
\newcommand{\Lamm}{{\Lambda^{2}_{\text{m}}}}
\newcommand{\Lamx}{{\Lambda^{2}_{\text{x}}}}
\newcommand{\Ee}{{\text{E}}}
\newcommand{\G}{{\text{G}}}
\newcommand{\Gl}{{\text{G}}_{\text{l}}}
\newcommand{\Gc}{{\text{G}}^{\text{c}}}
\newcommand{\Gcl}{{\text{G}}^{\text{c}}_{\text{l}}}
\newcommand{\Gm}{{\text{G}}^{\text{m}}}
\newcommand{\Gml}{{\text{G}}^{\text{m}}_{\text{l}}}
\newcommand{\Gf}{{\text{G}}^{\text{f}}}
\newcommand{\Gfl}{{\text{G}}^{\text{f}}_{\text{l}}}
\newcommand{\Gfc}{{\text{G}}^{\text{f/c}}}
\newcommand{\Gt}{{\text{G}}^{{\text{t}}}}
\newcommand{\Gx}{{\text{G}}^{\text{x}}}
\newcommand{\Gxl}{{\text{G}}^{\text{x}}_{\text{l}}}
\newcommand{\ET}{{\text{E}}_{\text{T}}}
\newcommand{\EB}{{\text{E}}_{\text{B}}}
\newcommand{\ER}{{\text{E}}_{\text{R}}}
\newcommand{\DT}{D_{\text{T}}}
\newcommand{\DB}{D_{\text{B}}}
\newcommand{\ETw}{\widetilde{{\text{E}}_{\text{T}}}}
\newcommand{\EBw}{\widetilde{{\text{E}}_{\text{B}}}}
\newcommand{\DTw}{\widetilde{D_{\text{T}}}}
\newcommand{\DBw}{\widetilde{D_{\text{B}}}}
\newcommand{\Fw}{\widetilde{F}}
\newcommand{\VN}{{\text{S}}}
\newcommand{\LI}{{\text{M}}}
\newcommand{\channel}{{\mathcal{E}}}
\newcommand{\SH}{{\mathcal{S}({\mathcal{H}})}}
\newcommand{\co}{\text{c}_{\vartheta}}
\newcommand{\si}{\text{s}_{\vartheta}}
\newcommand{\cH}{{\mathcal{H}}}
\newcommand{\cC}{{\mathcal{C}}}
\newcommand{\cS}{{\mathcal{S}}}
\newcommand{\bC}{{\mathbb{C}}}
\newcommand{\bRp}{\mathbb{R}^{+}}
\newcommand{\bCd}{\mathbb{C}^{d}}
\newcommand{\bra}[1]{\langle#1|}
\newcommand{\ket}[1]{|#1\rangle}
\newcommand{\proj}[1]{| #1\rangle\!\langle #1 |}
\newcommand{\ketbra}[2]{|#1\rangle\!\langle#2|}
\newcommand{\braket}[2]{\langle#1|#2\rangle}
\newcommand{\abs}[1]{|#1|}
\newcommand{\phim}{\ket{\phi_{\text{m}}}}
\begin{document}

\title{A comparison of old and new definitions of the geometric
  measure of entanglement}

\author{Lin Chen\inst{1} \and Martin Aulbach\inst{2} \and Michal Hajdu\v{s}ek\inst{3}
}

\institute{Department of Pure Mathematics and Institute for Quantum
  Computing, University of Waterloo, Waterloo, Ontario, N2L 3G1,
  Canada, \email{linchen0529@gmail.com} \and Centre for Quantum
  Technologies, National University of Singapore, 3 Science Drive 2,
  Singapore 117542, \email{aulbach.martin@gmail.com} \and Department
  of Physics, Graduate School of Science, University of Tokyo, 7-3-1
  Hongo, Bunkyo-ku, Tokyo, Japan, 113-0033}

\date{}
\maketitle

\begin{abstract}
  Several inequivalent definitions of the geometric measure of
  entanglement (GM) have been introduced and studied in the past.
  Here we review several known and new definitions, with the
  qualifying criterion being that for pure states the measure is a
  linear or logarithmic function of the maximal fidelity with product
  states.  The entanglement axioms and properties of the measures are
  studied, and qualitative and quantitative comparisons are made
  between all definitions.  Streltsov \etal [New J.~Phys \textbf{12}
  123004 (2010)] proved the equivalence of two linear definitions of
  GM, whereas we show that the corresponding logarithmic definitions
  are distinct.  Certain classes of states such as ``maximally
  correlated states'' and isotropic states are particularly valuable
  for this analysis.  A little-known GM definition is found to be the
  first one to be both normalized and weakly monotonous, thus being a
  prime candidate for future studies of multipartite entanglement.  We
  also find that a large class of graph states, which includes all
  cluster states, have a ``universal'' closest separable state that
  minimizes the quantum relative entropy, the Bures distance and the
  trace distance.
\end{abstract}

\section{Introduction}\label{sec:introduction}

Entanglement measures lie at the heart of quantum information theory,
because they assess the usefulness of quantum states for tasks such as
quantum teleportation, quantum computation and cryptography protocols
\cite{pv07,hhh09}.  Numerous entanglement measures have been defined
in the past, each of which may capture different properties of a state
as a resource for certain tasks.  One well-known measure is the
geometric measure of entanglement (GM).  Originally introduced for
pure bipartite states \cite{shimony95,bh01}, the GM was subsequently
generalized to multipartite and to mixed states
\cite{bl01,wg03,ssb06}.  Two key benefits of GM are that it is an
inherently multipartite entanglement measure and that it is
comparatively easy to compute for many states
\cite{wg03,zch10,skb11PRA}.

The GM has a variety of operational interpretations: It assesses the
usability of initial states for Grover's algorithm \cite{bno02,ssb04},
the discrimination of quantum states under LOCC \cite{mmv07}, the
additivity and output purity of quantum channels \cite{wh02} and the
usefulness of states as resources for one-way quantum computation
\cite{zch10,gfe09,ndm07,mpm10}. Further uses of GM include the
construction and study of entanglement witnesses \cite{wg03,hmm08},
the derivation of a generalized Schmidt decomposition \cite{chs00} and
the study of condensed matter systems, such as ground state
characterization and detection of phase transitions
\cite{odv08,orus08,ow10}.

Several inequivalent definitions of GM have surfaced in the literature
\cite{bl01,wg03,ssb06,weg04,bno02,mmv07,hmm06,skb11PRA}.
Regarding pure states, GM is expressed either as a linear or
logarithmic function of the maximal fidelity with product
states. Regarding mixed states, the pioneering papers did not agree on
a unique definition,
which led to the emergence of several inequivalent extensions of GM to
mixed states.
Although some of the GM definitions have been compared to other
entanglement measures \cite{weg04}, a detailed comparison of all the
different definitions of GM to each other has not been done before. An
important milestone towards this goal was achieved by Streltsov
\emph{et al.}~\cite{skb11NJP} who proved the equivalency of two
frequently used definitions of GM.

In this paper we compare and characterize several known and new
definitions of GM. The only qualifying criterion for an entanglement
quantity to be regarded a GM definition is that on the subset of pure
states it coincides with the well-defined linear or logarithmic GM.

Five known definitions, one little-known and one new definition of GM
are studied in this paper, first with respect to their entanglement
axioms.  This is followed by a quantitative and qualitative comparison
of the definitions to each other.  The ``maximally correlated states''
(as defined in Definition \ref{defmaxcorr}) turn out to be a
particularly helpful class of states for this purpose.  We also
discover that a large class of graph states, including all cluster
states, have a ``universal'' closest separable state that minimizes
the quantum relative entropy, the Bures distance and the trace
distance.

The paper is organized as follows.  Section \ref{sec:prelim} reviews
some basic concepts of quantum information theory for later usage.  In
Section \ref{sec:definitions} the definitions of GM are introduced,
and some preliminary results, e.g.~with regard to entanglement axioms,
are obtained. The subsequent Section \ref{sec:relationships} closely
examines the relationship between all six distinct definitions of GM
from a variety of perspectives, and a hierarchy that allows for a
partitioning of state space is obtained.  A common closest separable
state with respect to three different distance measures is presented
for a large class of graph states in Section \ref{graphstates}.  The
concluding Section \ref{sec:con} summarizes our results.  For
convenience, Table \ref{tab:GMinequalities}, Table \ref{GMproperties},
Figure \ref{GMhierarchy} and Figure \ref{statespace} list some of our
findings in compact form.

\section{Preliminaries}\label{sec:prelim}

First, we review some basic concepts for later usage, in particular
axiomatic entanglement measures and distance measures.  For more
comprehensive reviews we refer to \cite{pv07,hhh09,nc00book,bz06book}.

\subsection{Axiomatic entanglement measures}\label{axioms}

Operationally motivated entanglement measures such as the Entanglement
Cost and the Distillable Entanglement have clear physical meanings,
but tend to be difficult to study from a mathematical viewpoint,
especially for multipartite systems.  On the other hand, axiomatically
motivated entanglement measures may not have operational implications.

Considering $n$ parties $A_1, \ldots, A_n$ with joint Hilbert space
$\cH = \otimes^{n}_{j=1} \cH_j$, a general $n$-partite state shared
over the parties is described by a density matrix $\rho \in \SH$
acting on $\cH$.  Such a state is considered separable if it can be
written in the form $\rho = \sum_i p_i \rho^{1}_{i} \otimes \cdots
\otimes \rho^{n}_{i}$, with $\sum_i p_i = 1$, and where $\rho^{j}_{i}$
is a single-particle state of the $j$-th party.  In the axiomatic
approach, an entanglement measure is a functional $\Ee : \SH \to \bRp$
that satisfies two fundamental axioms:
\begin{enumerate}
\item $\Ee ( \rho ) = 0 \:$ if $\rho$ is separable.
  
\item $\Ee$ does not increase on average under local operations and
  classical communication (LOCC). Depending on which quantum
  operations are considered, this is defined as
  \begin{enumerate}
  \item weak monotonicity: \quad $\Ee ( \rho ) \geq \Ee ( \sigma ) \,
    ,$ if $\rho \stackrel{\text{LOCC}}{\longmapsto} \sigma = \channel
    ( \rho ) = \sum_i \vec{P}_i \rho \vec{P}_i^{\dagger} \, .$

  \item strong monotonicity: $\Ee ( \rho ) \geq \sum_{i} p_{i} \Ee (
    \sigma_{i} ) \, ,$ if $\rho \stackrel{\text{LOCC}}{\longmapsto}
    \sigma_{i} = \frac{\channel_i (\rho)} {\tr \channel_i (\rho)}$
    with probability $p_i = \tr \channel_i (\rho) \, .$
  \end{enumerate}
\end{enumerate}
Here, the maps $\channel$ and $\channel_i$ stand for LOCC, and the
elements $\{ \vec{P}_i \}$ form a complete POVM, i.e.~$\sum_i
\vec{P}_i^{\dagger} \vec{P}_i = \one$.  Weak monotonicity corresponds
to trace-preserving quantum operations where the measurement outcome
is unknown or discarded. Strong monotonicity corresponds to selective
quantum operations, also known as measuring quantum operations
\cite{pv07}. Weak monotonicity is a special case of strong
monotonicity that follows when a single outcome is obtained with
probability $1$, i.e.~$\channel = \channel_1$.  We refer to measures
satisfying Axioms 1 and 2(a) as \emph{weak entanglement measures}, and
measures that additionally satisfy Axiom 2(b) as \emph{strong
  entanglement measures}.

Historically, strong monotonicity was required for entanglement
measures \cite{vidal00,vp98,pv07}, but there is now a consensus that
weak monotonicity suffices \cite{hhh09}. Weak entanglement measures
can thus be considered proper entanglement measures.  Another historic
requirement is that entanglement measures should coincide with the
entropy of entanglement for pure bipartite states
\cite{vp98,pv07}. However, many popular measures fail this property,
and the property cannot be easily extended to multipartite states, so
it is not considered essential anymore. Invariance under local unitary
(LU) operations is clearly important for entanglement measures, but it
does not need to be stated as a separate axiom, because it
automatically follows from weak monotonicity \cite{pv07}.

Apart from the axioms discussed above, many more desirable properties
could be specified.  Some common ones are the following\footnote{For
  brevity, we abbreviate $f ( \proj{\psi} )$ as $f ( \ket{\psi} )$ for
  functions $f ( \rho )$ defined on $\SH$.}:
\begin{itemize}
\item Normalization: \quad $\Ee ( \ket{\Phi}^{\otimes n} ) = n \, ,$
  \quad for 2 qubit Bell states $\ket{\Phi}$
  
\item Convexity: \quad $\Ee ( \rho ) \leq \sum_{i} p_{i} \Ee (
  \rho_{i} ) \, ,$ \quad for all $\rho = \sum_{i} p_{i} \rho_{i}$
  
\item Additivity: \quad $\Ee ( \rho^{\otimes 2} ) = 2 \, \Ee ( \rho )
  \, ,$ \quad for all $\rho \in \SH$
  
\item Strong Additivity: \quad $\Ee ( \rho \otimes \sigma ) = \Ee (
  \rho ) + \Ee ( \sigma ) \, ,$ \quad for all $\rho , \sigma \in \SH$
\end{itemize}

The desirability of normalization is clear from the perception that
Bell states carry 1 ebit of entanglement each.  Convexity is motivated
by the notion that entanglement should not increase under loss of
information, namely when a selection of identifiable states $\rho_i$
(r.h.s.) is transformed into a mixture $\rho$ (l.h.s.) \cite{pv07}.
One may assume that this process can be physically realized by
standard quantum operations, and thus strong monotonicity implies
convexity.  However, some additional properties (such as continuity)
need to be satisfied, and the logarithmic negativity constitutes a
counterexample by being a strong entanglement measure that is not
convex \cite{plenio05}.  An important consequence of convexity is that
the measure can be maximized on the subset of pure states, i.e.~there
exist maximally entangled states (MES) that can be cast as pure states
$\rho = \proj{\psi}$.

If $f$ and $g$ are two convex functions and $g$ is non-decreasing,
then $g \circ f$ is also convex. For example, if $f ( \rho ) \geq 0$
is a convex measure, then $f^2 ( \rho )$ is also convex, using $g(x) =
x^2$.  In analogy to convexity, \emph{concavity} is defined as $f (
\rho ) \geq \sum_{i} p_{i} f( \rho_{i} )$ for all $\rho = \sum_{i}
p_{i} \rho_{i}$.

Regarding the additivity axioms, the tensor products in their
definition have a specific physical meaning: Instead of enlarging the
Hilbert space, $\rho \otimes \sigma$ refers to two states acting on
the same Hilbert space \cite{zch10}. If $\rho$ and $\sigma$ are both
states of $n$ $d$-level subsystems, then $\rho \otimes \sigma$ is a
state of $n$ $d^2$-level subsystems (instead of $2n$ $d$-level
subsystems).  Obviously, strong additivity implies additivity.

From a mathematical viewpoint, two entanglement measures $\Ee_1$ and
$\Ee_2$ are equivalent, if $\Ee_1 (\rho) = \Ee_2 (\rho)$ holds for all
$\rho \in \SH$.  A less restrictive, yet physically sound criterion is
the property of \emph{ordering}: $\Ee_1$ and $\Ee_2$ have the same
entanglement ordering, if the same order is obtained when sorting all
states by their amount of entanglement.  This is the case if for all
$\rho , \sigma \in \SH$ the two statements $\Ee_1 ( \rho) > \Ee_1 (
\sigma )$ and $\Ee_2 ( \rho ) > \Ee_2 ( \sigma )$ are equivalent,
i.e.~they are either both true or both false. Entanglement measures
that are equivalent will be denoted as $\Ee_1 \equiv \Ee_2$, and
measures with the same ordering as $\Ee_1 \cong \Ee_2$.

Many different entanglement measures have been proposed, but here we
only consider measures that are based on the distance to the set of
separable states.  The \emph{relative entropy of entanglement} (REE)
measures the minimum distance in terms of relative entropy between the
given state $\rho$ and the set of separable states ($\sep$):
\begin{equation}\label{eq:REEdef}
  \ER ( \rho ) := \min_{\sigma \in \sep} S (\rho \vert \sigma) \, ,
\end{equation}
where
\begin{equation}\label{eq:QREdef}
  S( \rho \vert \sigma ) = \tr \rho( \log \rho - \log \sigma )
\end{equation}
is the \emph{quantum relative entropy} \cite{vp98}.  Any state
$\sigma$ minimizing $S(\rho \vert \sigma)$ is called a \emph{closest
  separable state} of $\rho$.  Since the definition involves the
minimization over all separable states, REE is known only for a few
examples, such as bipartite pure states \cite{vpr97,vp98,pv01}, Bell
diagonal states \cite{vpr97,rains99,rains99erratum}, Werner states
\cite{aej01,vw01,rains01}, maximally correlated states, isotropic
states \cite{rains99,rains99erratum}, generalized Dicke states
\cite{weg04,wei08,hmm08}, antisymmetric basis states
\cite{weg04,hmm08}, some graph states \cite{mmv07,hm13}, the Smolin
state, and D\"ur's multipartite entangled states \cite{wag04,wei08}. A
numeric method for computing REE of bipartite states has been proposed
\cite{vp98}.

The REE can be applied to arbitrary multipartite states, pure or
mixed. The central measure of this paper, the geometric measure of
entanglement (GM) -- to be discussed in Section \ref{sec:definitions}
-- is also an inherently multipartite measure, although its definition
for mixed states is not unique.

\subsection{Distance measures and fidelity}\label{fid}

A good measure of distance $D ( \rho , \sigma ) : \SH \times \SH \to
\bRp$ between two quantum states should be symmetric, zero iff $\rho =
\sigma$, and observe weak monotonicity, which in this context means
$D( \rho , \sigma ) \geq D ( \channel (\rho ) , \channel ( \sigma ) )$
for any trace-preserving quantum operation $\channel$
\cite{vpr97,hhh09,nc00book}.  The last property, also known as
\emph{contractivity} under quantum operations, guarantees LU
invariance: $D ( \rho , \sigma ) = D ( U \rho U^{\dagger} , U \sigma
U^{\dagger} )$.  Any distance function with these properties is called
a \emph{distance measure}.  One such distance measure is the trace
distance \cite{nc00book},
\begin{equation}\label{trdist}
  \DT (\rho , \sigma ) =
  \frac{1}{2} \tr \abs{\rho - \sigma } =
  \frac{1}{2} \tr \sqrt{( \rho - \sigma )^2} =
  \frac{1}{2} \sum_i \abs{\lambda_i} \, ,
\end{equation}
where the $\lambda_i$ are the eigenvalues of the matrix $(\rho -
\sigma )$.  For qubits $\DT (\rho , \sigma )$ is equal to half the
Euclidean distance between the corresponding Bloch vectors.  The trace
distance is convex in both arguments, and furthermore is jointly
convex \cite{nc00book}:
\begin{equation}\label{trjointconvex}
  \DT \Big( \sum_i p_i \rho_i , \sum_i p_i \sigma_i \Big)
  \leq \sum_i p_i \DT ( \rho_i , \sigma_i ) \, .
\end{equation}
Another distance measure is the Bures distance and the closely related
fidelity \cite{uhlmann76,jozsa94,nc00book}.  The Bures distance is
\begin{equation}\label{bures}
  \DB ( \rho , \sigma ) = \sqrt{ 2 - 2 F( \rho , \sigma )} \, ,
\end{equation}
where $F (\rho , \sigma )$ is the fidelity between two states, defined
as
\begin{equation}\label{fidelity}
  F ( \rho , \sigma ) = \tr \sqrt{\sqrt{\rho} \sigma
    \sqrt{\rho}} \, .
\end{equation}
Alternative definitions in the literature are $\DTw := \DT^2$ for the
trace distance, $\DBw := \DB^2$ for the Bures distance, and $\Fw :=
F^2$ for the fidelity\footnote{This ambiguity of the definitions led
  to an incorrect definition of the Bures distance in
  \protect\cite{cw07}.}. The three necessary properties of distance
measures outlined above remain invariant under exponentiation, and
therefore $\DTw$ and $\DBw$ are also distance measures.  Although
closely related to the Bures distance, the fidelity is not a distance
measure itself.  The fidelity is symmetric, unitarily invariant,
concave in both arguments, jointly concave, and has codomain $F \in
[0,1]$, with $F=1$ iff $\rho = \sigma$ \cite{nc00book}.  According to
Uhlmann's theorem, the fidelity has a clear physical interpretation as
the maximal overlap between all purifications of the input states
\cite{uhlmann76,nc00book}.

If at least one of the two arguments of the fidelity is pure, then
\eqref{fidelity} simplifies to $F^2 ( \rho , \sigma ) = \tr ( \rho
\sigma )$, thus yielding
\begin{subequations}\label{purefidelity}
  \begin{align}
    F^2 ( \ket{\psi} , \sigma )& =
    \abs{\bra{\psi} \sigma \ket{\psi}}  \, , \label{pf1} \\
    F^2 ( \ket{\psi} , \ket{\phi} )& =
    \abs{\braket{\psi}{\phi}}^2 \, . \label{pf2}
  \end{align}
\end{subequations}
In particular, for pure states the fidelity coincides with the
Fubini-Study metric, the natural geometry on $\cH$.

The fidelity also provides upper and lower bounds on the trace
distance, with the lower bound increasing with the purity of the input
states \cite{nc00book}:
\begin{subequations}\label{trfidineq}
  \begin{align}
    1- F ( \rho , \sigma ) \leq \DT ( \rho , \sigma )&
    \leq \sqrt{1- F^2 ( \rho , \sigma )} \, , \label{tf1} \\
    1- F^2 ( \ket{\psi} , \sigma ) \leq \DT ( \ket{\psi} , \sigma )&
    \leq \sqrt{1- F^2 ( \ket{\psi} , \sigma )} \, , \label{tf2} \\
    \DT ( \ket{\psi} , \ket{\phi} )&
    = \sqrt{1- F^2 ( \ket{\psi} , \ket{\phi} )} \, . \label{tf3}
  \end{align}
\end{subequations}
Vedral \etal \cite{vpr97} found that from every distance measure $D (
\rho , \sigma )$ a weak entanglement measure $\Ee (\rho )$ can be
constructed as
\begin{equation}\label{vdef}
  \Ee ( \rho ) := \min_{\sigma \in \sep} D ( \rho , \sigma ) \, .
\end{equation}
This construction directly yields weak entanglement measures from the
trace distance \eqref{trdist} and Bures distance \eqref{bures},
\begin{subequations}\label{traceburesentanglement}
  \begin{align}
    \ET ( \rho )& = \min_{\sigma \in \sep}
    \DT ( \rho , \sigma ) \, , \label{tr1} \\
    \EB ( \rho )& = \min_{\sigma \in \sep}
    \DB ( \rho , \sigma ) \, , \label{tr2}
  \end{align}
\end{subequations}
which we refer to as trace entanglement (TE) and Bures entanglement
(BE), respectively.  In analogy to $\DTw$ and $\DBw$, we define $\ETw
:= \ET^2$ and $\EBw := \EB^2$, and note that they are also weak
entanglement measures.  From \eqref{bures} and the l.h.s.~of
\eqref{tf1}, it follows that
\begin{equation*}
  \EBw ( \rho ) \leq
  2 \ET ( \rho ) \, , \quad \text{for all } \rho \in \SH \, .
\end{equation*}

The quantum relative entropy \eqref{eq:QREdef} is not symmetric, and
therefore not a proper distance measure. Nevertheless, by means of
\eqref{vdef}, it gives rise to the REE, which is a strong entanglement
measure \cite{vp98,horodecki05}.  One could also ask whether the
Hilbert-Schmidt distance $D_{\text{HS}} ( \rho , \sigma ) = \tr [ (
\rho - \sigma )^2 ]$, a metric in the mathematical sense, gives rise
to an entanglement measure.  However, this metric does not satisfy
weak monotonicity, and it is an open question whether inserting
$D_{\text{HS}}$ in \eqref{vdef} yields an entanglement measure
\cite{ozawa00}.

A simple, but important mathematical inequality for this paper is
\begin{equation}\label{elin}
  x - 1 \geq \log_b x \qquad \forall \, x \in (0,1] \:\:
  \forall \, 1 < b \leq \text{e} \, ,
\end{equation}
where $\E$ denotes the base of the natural logarithm.  We call
\eqref{elin} the \emph{elementary inequality}.  To demonstrate its
usefulness, consider the two most common entropic quantities in
quantum information theory, the \emph{linear entropy} $\LI (\rho ) = 1
- \tr (\rho^2 )$, and the \emph{von Neumann entropy} $\VN (\rho ) = -
\tr (\rho \log \rho )$.  The linear entropy can be understood as an
approximation of von Neumann entropy, obtained by the Taylor series
$\log (\rho ) \approx \rho - \one$, where $\one$ has the same range as
$\rho$.  For the commonly used logarithm bases $2$ and $\E$,
\eqref{elin} yields $\one - \rho \leq -\log \rho$, hence
\begin{equation}
  \LI ( \rho ) \leq \VN ( \rho ) \, , \quad \forall \, \rho \in \SH \, .
\end{equation}

\section{Geometric measure of entanglement}\label{sec:definitions}

In this section we review the two two common definitions of GM for
pure states, and introduce the known and unknown extensions to mixed
states.

\subsection{GM for pure states}\label{sec:pure}

The fundamental quantity for GM of pure states is
\begin{equation}\label{purelambda}
  \Lam (\ket{\psi}) := \max_{\ket{\varphi} \in \pro}
  \abs{\braket{\varphi}{\psi}}^2 \, ,
\end{equation}
where $\pro$ denotes the set of fully product pure states of $\cH$,
henceforth referred to as product states.  Comparing
\eqref{purelambda} with \eqref{purefidelity}, we see that $\Lam
(\ket{\psi})$ is the maximum fidelity between $\ket{\psi}$ and the set
$\pro$.  Furthermore, it is clear from \eqref{purefidelity} that the
maximal value of $F ( \ket{\psi} , \cdot )$ can be found among pure
states:
\begin{equation}\label{purelambda2}
  \Lam (\ket{\psi}) = \max_{\ket{\varphi} \in \pro}
  F^2 ( \ket{\psi} , \ket{\varphi} ) = \max_{ \sigma \in \sep}
  F^2 ( \ket{\psi} , \sigma ) \, .
\end{equation}
Any product state or separable state that maximizes the corresponding
fidelity expression in \eqref{purelambda2} is called \emph{closest
  product state} (CPS) or \emph{closest separable state} (CSS),
respectively. Note that the CPS or CSS is in general not unique. The
relationship between the CSSs and CPSs of a given state $\ket{\psi}
\in \cH$ is seen from \eqref{purefidelity}: If $\{ \ket{\phi_i} \}$ is
the set of CPSs, then any superposition $\sigma = \sum_i p_i
\proj{\phi_i}$ with $\sum_i p_i = 1$ is a CSS.  Conversely, if
$\sigma$ is a CSS, then there must exist a decomposition $\sigma =
\sum_i p_i \proj{\phi_i}$ such that all $\ket{\phi_i}$ are CPSs.

The two common definitions of GM for pure states are
\begin{subequations}\label{GMpure}
  \begin{align}
    \G (\ket{\psi} )&  := 1 - \Lam (\ket{\psi}) \, ,
    \label{GMpure1} \\
    \Gl (\ket{\psi})& := - \log \Lam (\ket{\psi}) \, ,
    \label{GMpure2}
  \end{align}
\end{subequations}
which we refer to as the \emph{linear GM} and \emph{logarithmic GM},
respectively. Unless denoted otherwise, the base 2 logarithm is used
in this paper.  Thanks to the elementary inequality \eqref{elin}, the
results of this paper are also valid for any other logarithm base up
to, and including, the base of the natural logarithm. For larger bases
this is not the case, because \eqref{elin} then no longer holds.  Both
$\G$ and $\Gl$ increase monotonically with $\Lam$, and eliminating
$\Lam$ yields
\begin{equation}\label{GGlrelation}
  \Gl ( \ket{\psi} ) = - \log ( 1 - \G ( \ket{\psi} )) \qquad
  \forall \, \ket{\psi} \in \cH \, .
\end{equation}
Due to this monotonic relationship, $\G$ and $\Gl$ have the same
ordering for pure states.  In particular, they have the same MESs. For
bipartite states the MES (up to LU) is $\ket{\Psi} = \frac{1}{\sqrt d}
\sum^d_{i=1} \ket{ii}$, yielding $\G (\ket{\Psi}) = 1 - \frac{1}{d}$
and $\Gl (\ket{\Psi}) = \log d$.  For the simplest multipartite case
of three qubits, the W state has been analytically determined as the
MES for the GM \cite{cxz10}.  For general multipartite systems,
however, W states only yield low entanglement in terms of GM, and the
identification of the MESs is an open problem.  For the subset of
permutation-symmetric states the MES are better understood
\cite{amm10,aulbach12}, because the CPSs of symmetric states are
symmetric themselves \cite{hkw09}, thus considerably simplifying the
optimization problem.

From \eqref{GGlrelation} and the elementary inequality \eqref{elin} it
follows that
\begin{equation}\label{ea:Gg<=Gl}
  \G ( \ket{\psi} ) \leq \Gl ( \ket{\psi} ) \qquad
  \forall \, \ket{\psi} \in \cH \, .
\end{equation}

Since $\Lam (\ket{\psi}) = \abs{\braket{\psi}{\psi}}^2 = 1$ holds for
all $\ket{\psi} \in \pro$, entanglement Axiom 1 is satisfied for pure
states for both definitions in \eqref{GMpure}.  Regarding Axiom 2, an
extension of $\G$ to mixed states that satisfies strong monotonicity
is known \cite{wg03}, which automatically implies strong monotonicity
of $\G$ on the subset of pure states. In contrast to this, an explicit
counterexample ruling out strong monotonicity is known for $\Gl$, and
since this counterexample considers pure states only \cite{weg04}, no
extension of $\Gl$ to mixed states can be strongly
monotonous. However, we will later see that extensions of $\Gl$ with
weak monotonicity do exist (cf.~$\Gfl$ defined in Section
\ref{subsec:fid}), and therefore $\Gl$ is weakly monotonous on the
subset of pure states.  The axiomatic properties of $\G$ and $\Gl$ are
summarized on the left of Table \ref{GMproperties}.

Regarding the optional axioms, it is easy to verify that $\Gl$ is
normalized whereas $\G$ is not. This makes $\Gl$ the natural choice
for quantitative studies of entanglement, such as scaling laws or
comparison with other measures.  The MES entanglement of $n$ qubits
($n \geq 3$) scales linearly as $n - 2 \log_2 (n) - O (1) < \Gl
(\ket{\Psi}) < n-1$ \cite{jhk08,gfe09,zch10}.  Restricting the
computational coefficients to real values does not affect this scaling
\cite{zch10}, but for positive states (i.e.~states whose coefficients
are all positive in the computational basis) the $n$-qubit MES are
bounded by $\Gl (\ket{\Psi}) \leq \frac{n}{2}$, and this bound is
strict for even $n$ (a trivial example being $n/2$ Bell pairs)
\cite{aulbach12}.  On the other hand, symmetric $n$-qubit MESs scale
logarithmically as $\log_2 (n + 1) - O (1) < \Gl (\ket{\Psi}) < \log_2
(n+1)$ \cite{aulbach12,mgb10}. These scaling laws readily generalize
to qudits, leading to the conclusion that the MESs of sufficiently
large multipartite systems are neither positive nor symmetric.
Furthermore, since generic states are nearly maximally entangled with
respect to GM \cite{gfe09}, the above scaling laws can also be applied
to random states.

Regarding the additivity axioms, neither $\G$ nor $\Gl$ are additive
in general.  For $\G$ this is obvious from its codomain $[0,1]$, and
for $\Gl$ it has been shown that states with a high amount of
entanglement are not additive \cite{zch10}.  Nevertheless, many states
of interest are additive or even strongly additive under $\Gl$.  In
particular, it has been shown that positive states are strongly
additive \cite{zch10}:
\begin{lemma}\label{le:pureadditivity}
  Let $\ket{\psi} \in \cH$ be a positive state.  Then $\ket{\psi}$ is
  strongly additive, i.e. $\Lam (\ket{\psi} \otimes \ket{\phi}) = \Lam
  (\ket{\psi}) \Lam (\ket{\phi})$ and $\Gl (\ket{\psi} \otimes
  \ket{\phi}) = \Gl (\ket{\psi}) + \Gl (\ket{\phi})$ holds for all
  $\ket{\phi} \in \cH$.
\end{lemma}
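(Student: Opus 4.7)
The inequality $\Lam(\ket{\psi}\otimes\ket{\phi}) \geq \Lam(\ket{\psi})\Lam(\ket{\phi})$ is immediate: tensoring a CPS of $\ket{\psi}$ with a CPS of $\ket{\phi}$ yields a legitimate product state across all $n$ parties (each party now holds two subsystems), and the squared overlap with $\ket{\psi}\otimes\ket{\phi}$ factorizes exactly to $\Lam(\ket{\psi})\Lam(\ket{\phi})$. The real content of the lemma is therefore the reverse inequality, for which positivity of $\ket{\psi}$ is essential; without positivity, counterexamples to the multiplicativity of $\Lam$ are known.

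My plan for the reverse inequality is to fix an arbitrary product state $\ket{\varphi} = \bigotimes_{k=1}^{n}\ket{\varphi_k}$ with $\ket{\varphi_k}\in\cH_k\otimes\cH_k'$ and bound $|\braket{\varphi}{\psi\otimes\phi}|^2 \leq \Lam(\ket{\psi})\Lam(\ket{\phi})$. The key structural move is asymmetric: rather than Schmidt-decomposing each $\ket{\varphi_k}$ symmetrically across the two tensor slots, I resolve the $\ket{\psi}$-side completely in its computational basis while keeping the other side as an unnormalized vector, writing $\ket{\varphi_k} = \sum_{i}\ket{i}_{A_k}\otimes\ket{g_i^{(k)}}_{B_k}$. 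Expanding the tensor product then produces $\braket{\varphi}{\psi\otimes\phi} = \sum_{\vec i}\psi_{\vec i}\langle G_{\vec i}|\phi\rangle$, where $\ket{G_{\vec i}} = \bigotimes_k\ket{g_{i_k}^{(k)}}$ is an unnormalized product vector in the $\ket{\phi}$-system.

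Positivity of $\ket{\psi}$ now enters through the triangle inequality: since $\psi_{\vec i}\geq 0$, one has $|\braket{\varphi}{\psi\otimes\phi}| \leq \sum_{\vec i}\psi_{\vec i}\,|\langle G_{\vec i}|\phi\rangle|$. Each overlap is bounded by $\norm{\ket{G_{\vec i}}}\sqrt{\Lam(\ket{\phi})}$, obtained by normalizing $\ket{G_{\vec i}}$ and applying the definition of $\Lam(\ket{\phi})$ to the resulting product state. Setting $\lambda_k^{(i)} = \norm{\ket{g_i^{(k)}}}$ (so $\sum_i(\lambda_k^{(i)})^2 = \norm{\ket{\varphi_k}}^2 = 1$), the residual factor is $\sum_{\vec i}\psi_{\vec i}\prod_k\lambda_k^{(i_k)} = \langle\Lambda|\psi\rangle$ with $\ket{\Lambda} = \bigotimes_k\sum_i \lambda_k^{(i)}\ket{i}_{A_k}\in\pro$; this is bounded in turn by $\sqrt{\Lam(\ket{\psi})}$. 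Assembling the pieces gives $|\braket{\varphi}{\psi\otimes\phi}|^2 \leq \Lam(\ket{\psi})\Lam(\ket{\phi})$, and maximizing over $\ket{\varphi}\in\pro$ yields the desired $\Lam$-identity. The $\Gl$-statement follows by taking $-\log$ of both sides.

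The main obstacle is spotting the correct asymmetric decomposition in the second step. A symmetric Schmidt decomposition on each $\cH_k\otimes\cH_k'$ naturally invites a Cauchy--Schwarz estimate that collapses one of the two factors $\Lam(\ket{\psi}), \Lam(\ket{\phi})$ into a total probability of $1$, giving only the too-weak bound $|\braket{\varphi}{\psi\otimes\phi}|^2 \leq \Lam(\ket{\phi})$ (or $\Lam(\ket{\psi})$). The asymmetric resolution above, together with the positivity $\psi_{\vec i}\geq 0$ that allows the triangle inequality to preserve the product structure (both by producing a positive weighted sum and by letting $\ket{\Lambda}$ act on $\ket{\psi}$ rather than on some ``absolute value'' of $\ket{\psi}$), is precisely what converts the estimate into the required product.
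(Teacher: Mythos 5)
Your proof is correct. The paper itself gives no proof of this lemma --- it is quoted from Zhu \emph{et al.}~\cite{zch10} --- and your argument (resolving each local factor $\ket{\varphi_k}$ asymmetrically in the computational basis of the $\ket{\psi}$-slot, applying the triangle inequality via $\psi_{\vec i}\geq 0$, and then estimating successively against $\sqrt{\Lam(\ket{\phi})}$ and $\sqrt{\Lam(\ket{\psi})}$) is a complete, self-contained version of essentially the standard argument from that reference.
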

Examples of positive states are multipartite Dicke states of arbitrary
dimension, and all bipartite pure states (by means of the Schmidt
decomposition). Lemma \ref{le:pureadditivity} can be readily
generalized to mixed states for an extension of GM that will be
discussed in Subsection \ref{subsec:trace}.

Note that the definition of the linear GM \eqref{GMpure1} coincides
with the Groverian entanglement measure $\text{E}_{\text{Gr}} (
\ket{\psi} ) = \G ( \ket{\psi} )^{1/2}$, which has a tangible
operational interpretation by means of a quantum algorithm
\cite{ssb06,bno02}.

\subsection{GM for mixed states}\label{sec:mixed}

With the GM defined for pure states $\ket{\psi} \in \cH$, we now
consider the possible extensions to mixed states $\rho \in \SH$.
Extensions of the linear GM will be labelled $\Gx ( \rho )$, and
extensions of the logarithmic GM as $\Gxl ( \rho )$, where $x$ stands
for a label to denote the extension.  Any valid extension must
coincide with \eqref{GMpure1} or \eqref{GMpure2} on the subset of pure
states $\rho = \proj{\psi}$. In other words, $\Gx ( \proj{\psi} ) = \G
( \ket{\psi} )$ and $\Gxl ( \proj{\psi} ) = \Gl ( \ket{\psi} )$ must
hold for all $\ket{\psi} \in \cH$.

Since the expressions ``pure'' and ``mixed'' can be ambiguous, we
briefly clarify their usage.  From a mathematical viewpoint, $\sigma =
\proj{\psi} \in \SH$ is a mixed state, but physically it is equivalent
to the pure state $\ket{\psi} \in \cH$.  Therefore, we refer to both
$\ket{\psi}$ and $\sigma = \proj{\psi}$ as \emph{pure} states.  On the
other hand, we refer to all states $\rho \in \SH$ as \emph{mixed}
states, so $\sigma = \proj{\psi}$ can be regarded as pure and
mixed. Mixed states that are not pure will be called \emph{genuinely
  mixed}. Mathematically, a state $\rho \in \SH$ is genuinely mixed,
iff $\rank \rho \geq 2$.

One strategy to extend \eqref{GMpure} to mixed states is to extend
\eqref{purelambda} to mixed states, i.e.~to introduce a function
$\Lamx ( \rho ) : \SH \rightarrow \bRp$ with the property that $\Lamx
( \proj{\psi} ) = \Lam ( \ket{\psi} )$ for all $\ket{\psi} \in \cH$.
The following lemma asserts the properties of extensions defined in
that manner.
\begin{lemma}\label{le:weakmonotonicity}
  Let $\Lamx ( \rho ) : \SH \rightarrow \bRp$ be an extension of
  \eqref{purelambda} to the set of all density matrices.  Then the
  following holds for $\Gx ( \rho ) := 1 - \Lamx ( \rho )$ and $\Gxl (
  \rho ) := - \log_2 \Lamx ( \rho )$:
  \begin{enumerate}
  \item $\Gx ( \rho ) \leq \Gxl ( \rho )$ holds for all $\rho$.

  \item $\Gxl ( \rho ) = - \log_2 (1- \Gx ( \rho ))$, or equivalently
    $\Gx ( \rho ) = 1 - 2^{- \Gxl ( \rho )}$ holds for all
    $\rho$. Furthermore, $\Gx \cong \Gxl$.

  \item $\Gx$ satisfies Axiom 1 if and only if $\Gxl$ does.

  \item $\Gx$ satisfies Axiom 2(a) if and only if $\Gxl$ does.

  \item If $\Gx$ is convex, then $\Gxl$ is also convex.

  \item If $\Gxl$ is concave, then $\Gx$ is also concave.
  \end{enumerate}
\end{lemma}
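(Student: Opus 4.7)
The plan is to derive all six items from a single algebraic identity together with elementary properties of the scalar function $g: y \mapsto -\log_2 (1-y)$ on $[0,1)$ and its inverse $h: y \mapsto 1 - 2^{-y}$ on $[0,\infty)$. All six statements are pointwise in $\rho$, so once the identity is in place the rest is essentially bookkeeping. The only implicit assumption I need is that $\Lamx(\rho) \in (0,1]$, which is reasonable since on pure states $\Lam \in (0,1]$ and any ``extension'' should inherit this range; otherwise $\Gxl$ would not even be well-defined.

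I would do item 2 first, since items 3--6 all feed off it. Substituting $\Lamx(\rho) = 1 - \Gx(\rho)$ into the definition of $\Gxl$ gives $\Gxl(\rho) = -\log_2(1 - \Gx(\rho))$, and inverting yields $\Gx(\rho) = 1 - 2^{-\Gxl(\rho)}$. Because $g$ is strictly increasing on $[0,1)$, both strict and non-strict inequalities between $\Gx$-values at any two states translate into the corresponding inequalities between the $\Gxl$-values, which simultaneously establishes $\Gx \cong \Gxl$ and takes care of item 4: weak monotonicity is an ordering condition along LOCC-related pairs $\rho \mapsto \sigma$, and a strictly monotonic reparametrization preserves it in both directions. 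Item 3 is immediate from $g(0) = 0$ and strict monotonicity: $\Gx(\rho) = 0 \iff \Lamx(\rho) = 1 \iff \Gxl(\rho) = 0$.

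For item 1, I would invoke the elementary inequality \eqref{elin} with $x = \Lamx(\rho) \in (0,1]$ and $b=2$: rearranging $x - 1 \geq \log_2 x$ gives $1 - x \leq -\log_2 x$, which is precisely $\Gx(\rho) \leq \Gxl(\rho)$. Items 5 and 6 then follow by applying the composition rule mentioned earlier in the excerpt (convex composed with convex non-decreasing is convex; the concave analogue holds by negation). For item 5, compute $g'(y) = \frac{1}{(1-y)\ln 2} > 0$ and $g''(y) = \frac{1}{(1-y)^2 \ln 2} > 0$, so $g$ is convex and non-decreasing on $[0,1)$, and therefore convexity of $\Gx$ implies convexity of $\Gxl = g \circ \Gx$. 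For item 6, note that $h'(y) = 2^{-y} \ln 2 > 0$ and $h''(y) = -2^{-y} (\ln 2)^2 < 0$, so $h$ is concave and non-decreasing, and the analogous composition rule gives concavity of $\Gx = h \circ \Gxl$ from concavity of $\Gxl$.

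There is no real obstacle here; the only thing that requires any care is item 4, where one must remember that weak monotonicity is an inequality condition rather than an equality, so the argument genuinely needs that $g$ (equivalently $h$) is strictly monotonic, not merely injective on the range in question. Everything else is an immediate substitution or an invocation of a general fact already stated in Section \ref{axioms}.
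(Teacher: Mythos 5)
Your proposal is correct and follows essentially the same route as the paper's proof: item 2 by eliminating $\Lamx$, item 1 via the elementary inequality \eqref{elin}, items 3 and 4 by strict monotonicity of the reparametrization, and items 5 and 6 by the convex/concave composition rules (the paper proves item 6 by negating and composing with the convex function $y \mapsto 2^{y}$, which is trivially equivalent to your direct use of the concave non-decreasing map $y \mapsto 1 - 2^{-y}$). Your explicit remark that $\Lamx(\rho) \in (0,1]$ must be assumed for $\Gxl$ to be well-defined is a small but legitimate point that the paper leaves implicit.
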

\begin{proof}
  \begin{enumerate}
  \item This directly follows from the elementary inequality
    \eqref{elin}.

  \item The relationships between $\Gx ( \rho )$ and $\Gxl ( \rho )$
    are obtained by eliminating $\Lamx ( \rho )$.  Since $f (y) = -
    \log_2 (1-y)$ and $g (y) = 1 - 2^{-y}$ are monotonously increasing
    functions in $y \in [0,1]$, $\Gx$ and $\Gxl$ have the same
    ordering.

  \item For any $\rho \in \sep$ we have: $\Gx ( \rho ) = 0
    \Leftrightarrow 1 - \Lam ( \rho ) = 0 \Leftrightarrow \Lam ( \rho
    ) = 1 \Leftrightarrow - \log_2 \Lam ( \rho ) = 0 \Leftrightarrow
    \Gxl ( \rho ) = 0 \, .$

  \item For any $\rho \mapsto \sigma = \sum_i \vec{P}_i \rho
    \vec{P}_i^{\dagger}$ we have: $\Gx ( \rho ) \geq \Gx ( \sigma )
    \Leftrightarrow 1 - \Lam ( \rho ) \geq 1 - \Lam ( \sigma )
    \Leftrightarrow \Lam ( \rho ) \leq \Lam ( \sigma ) \Leftrightarrow
    - \log_2 \Lam ( \rho ) \geq - \log_2 \Lam ( \sigma )
    \Leftrightarrow \Gxl ( \rho ) \geq \Gxl ( \sigma ) \, .$

  \item Let $\Gx ( \rho ) = 1 - \Lam ( \rho )$ be convex. Since $g ( y
    ) := - \log_2 (1-y)$ is a convex non-decreasing function, $g ( \Gx (
    \rho )) = - \log_2 \Lam ( \rho )$ is also convex.

  \item Let $\Gxl ( \rho ) = - \log_2 \Lam ( \rho )$ be concave. Its
    additive inverse $- \Gxl ( \rho )$ is therefore convex. Since $g (
    y ) := 2^{y}$ is a convex non-decreasing function, $g ( - \Gxl (
    \rho )) = \Lam ( \rho )$ is also convex. Therefore, $1 - \Lam (
    \rho )$ is concave.
    \qed
  \end{enumerate}
\end{proof}
With regard to items 5.~and 6.~of Lemma \ref{le:weakmonotonicity}, it
should be noted that convexity of $\Gx$ does not follow from convexity
of $\Gxl$, and that concavity of $\Gxl$ does not follow from concavity
of $\Gx$. A counterexample for the latter case are the measures $\Gm$
and $\Gml$ introduced in Section \ref{subsec:trace}.

Several extensions of GM have been proposed in the past, and below we
will introduce these as well as new ones.  The first approach is based
on a convex roof construction, akin to the entanglement of formation
\cite{pv07}.  The second and third approach are based on extending the
definition \eqref{purelambda} to mixed states by means of the fidelity
between the given state and the set of all pure or mixed states,
respectively. Consequently, Lemma \ref{le:weakmonotonicity} applies to
these two approaches.  The fourth approach is to extend the linear GM
by means of the trace distance \eqref{trdist}.

\subsubsection{Extension by convex roof: $\Gc /
  \Gcl$}\label{subsec:cr}

Based on definitions \eqref{purelambda} and \eqref{GMpure}, the convex
roofs of the linear and logarithmic GM are
\begin{subequations}\label{ea:convexroof}
  \begin{align}
    \label{ea:convexroof1}
    \Gc ( \rho ) := \min_{ \{ p_{i}, \ket{\psi_{i}} \} }
    \sum_{i} p_{i} \G \left( \ket{\psi_{i}} \right) \, , \\
    \label{ea:convexroof2}
    \Gcl ( \rho ) := \min_{ \{ p_{i}, \ket{\psi_{i}} \} }
    \sum_{i} p_{i} \Gl \left( \ket{\psi_{i}} \right) \, ,
  \end{align}
\end{subequations}
where the minimum runs over all decompositions of $\rho = \sum_{i}
p_{i} \proj{\psi_{i}}$.  Decompositions that maximize
\eqref{ea:convexroof1} or \eqref{ea:convexroof2} will be called
\emph{optimal decompositions}, and labeled $\{ P_i , \ket{\Psi_i} \}$.
It is natural to ask whether for a given $\rho$ the two functionals
are minimized for the same decomposition.  We will later show that for
many states $\Gc$ and $\Gcl$ have the same optimal decompositions
(e.g. for all isotropic states and two qubit states), but that there
also exist states for which $\Gc$ and $\Gcl$ do not have a common
optimal decomposition (e.g. for some maximally correlated
states). Another open question is how many pure components
$\ket{\psi_i}$ are necessary for an optimal decomposition of $\Gc$ or
$\Gcl$. At least for $\Gc$ it is known that $(\dim \cH )^2$ pure
components suffice \cite{skb11NJP}.

Mathematically, any two decompositions $\{ p_i , \ket{\psi_i} \}$ and
$\{ q_j , \ket{\phi_j} \}$ of the same $\rho$ are related by a unitary
matrix $u_{ij}$ (i.e.~$\sum_k u_{ki}^{*} u_{kj}=\delta_{ij}$), so that
\begin{equation}\label{unitaryconversion}
  \sqrt{p_i} \ket{\psi_{i}} = \sum_j u_{ij} \sqrt{q_j} \ket{\phi_j}
\end{equation}
holds for all $i$ \cite{hjw93,nc00book}.  This identity will later be
used in some proofs.

Regarding the entanglement axioms, Axiom 1 and convexity directly
follow from the convex roof definitions for both $\Gc$ and $\Gcl$.
The quantity $\Gc$ was first studied in detail in the seminal paper of
Wei \etal \cite{wg03}, and found to be a strong entanglement measure.
On the other hand, $\Gcl$ cannot be a strong entanglement measure
\cite{weg04}.  However, it is an open question whether $\Gcl$ is
weakly monotonous, i.e.~whether $\Gcl(\rho) \geq \Gcl(\channel
(\rho))$ holds for all channels $\channel$ and all $\rho \in \SH$. For
$\rho \in \sep$ this is satisfied because of Axiom 1, and for all
bipartite pure MES $\ket{\Psi} = \frac{1}{\sqrt{d}} \sum_{i=1}^{d}
\ket{ii}$ this is also satisfied, because
\begin{equation*}
  \Gcl (\channel (\ket{\Psi})) =
  \min_{ \{ p_i , \ket{\psi_i} \} } \sum_i p_i \Gl (\ket{\psi_i}) \leq
  \min_{ \{ p_i , \ket{\psi_i} \} } - \sum_i p_i \log \tfrac{1}{d} =
  \log d = \Gcl (\ket{\Psi}) \, .
\end{equation*}
In the two qubit case $\Gcl$ is a weak entanglement measure.  In
Appendix \ref{ap:weak} we present a direct proof employing the
concurrence and the optimal decomposition of the entanglement of
formation \cite{wootters98}.  The same result will later follow from a
different line of argumentation.

Since the definitions \eqref{ea:convexroof} are not based on extending
\eqref{purelambda} to mixed states, Lemma \ref{le:weakmonotonicity}
does not apply to $\Gc$ and $\Gcl$. In particular, we do not know if
there exists an exact analytic relation between $\Gc$ and $\Gcl$, or
if the two quantities have at least the same ordering (i.e.~$\Gc \cong
\Gcl$).  The following lemma provides an analytic relation in the form
of an inequality.
\begin{lemma}\label{gcandgcl}
  For every state $\rho$ the following holds.
  \begin{enumerate}
  \item $\Gc ( \rho ) \leq \Gcl ( \rho )$.

  \item $\Gcl ( \rho ) \geq - \log_2 (1- \Gc ( \rho ))$, or
    equivalently $\Gc ( \rho ) \leq 1 - 2^{- \Gcl ( \rho )}$.
  \end{enumerate}
\end{lemma}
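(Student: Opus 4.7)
The plan is to establish item 2 first via a Jensen-type argument, and then deduce item 1 as an immediate corollary using the elementary inequality. The key tool is the pure-state identity \eqref{GGlrelation}, which lets us rewrite each summand in the convex roof \eqref{ea:convexroof2} as
\begin{equation*}
  \Gl(\ket{\psi_i}) \;=\; -\log_2\bigl(1-\G(\ket{\psi_i})\bigr)\,,
\end{equation*}
so that $\Gcl(\rho)$ becomes a convex-roof minimisation of a composition of $-\log_2(1-\cdot)$ with $\G(\ket{\psi_i})$. The auxiliary function $f(y)=-\log_2(1-y)$ is convex and non-decreasing on $[0,1)$, and this is exactly what is needed.

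For item 2, I would fix any pure-state decomposition $\rho=\sum_i p_i \proj{\psi_i}$. Writing $y_i := \G(\ket{\psi_i})\in[0,1)$, Jensen's inequality for the convex function $f$ gives
\begin{equation*}
  \sum_i p_i \Gl(\ket{\psi_i}) \;=\; \sum_i p_i f(y_i)
  \;\geq\; f\Bigl(\sum_i p_i y_i\Bigr)
  \;=\; -\log_2\Bigl(1-\sum_i p_i \G(\ket{\psi_i})\Bigr).
\end{equation*}
By the definition \eqref{ea:convexroof1} of $\Gc$, one has $\sum_i p_i \G(\ket{\psi_i}) \geq \Gc(\rho)$, and since $f$ is non-decreasing this further implies
\begin{equation*}
  \sum_i p_i \Gl(\ket{\psi_i}) \;\geq\; -\log_2\bigl(1-\Gc(\rho)\bigr).
\end{equation*}
Because the right-hand side is independent of the chosen decomposition, taking the infimum over all decompositions on the left yields $\Gcl(\rho) \geq -\log_2(1-\Gc(\rho))$, which is item 2. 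Rearranging gives the equivalent form $\Gc(\rho) \leq 1-2^{-\Gcl(\rho)}$.

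Item 1 then follows without further work: the elementary inequality \eqref{elin}, applied in the form $-\log_2(1-y)\geq y$ for $y\in[0,1)$ (equivalently \eqref{ea:Gg<=Gl} at the pure-state level), gives $-\log_2(1-\Gc(\rho))\geq \Gc(\rho)$, and combining with item 2 produces $\Gcl(\rho)\geq \Gc(\rho)$.

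I do not anticipate a genuine obstacle; the argument is essentially two lines once the right convex function is identified. The one point requiring minor care is that the domain of $f$ must be $[0,1)$, so I should note that $\Gc(\rho)<1$ on any finite-dimensional $\SH$ (since $\G(\ket{\psi})\leq 1-1/d<1$ for every pure state), which ensures all quantities in the chain of inequalities are finite and well-defined, and that passing from $\min$ to $\inf$ in the convex roof makes no difference to the validity of the bound.
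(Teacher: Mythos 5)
Your proof is correct and follows essentially the same route as the paper's: the heart of item 2 is the same Jensen-type estimate for the convex function $-\log_2$, merely phrased in terms of $\G(\ket{\psi_i})$ rather than $\Lam(\ket{\psi_i})$, and your domain remark about $\Gc(\rho)<1$ is a harmless extra precaution. The only cosmetic difference is that the paper proves item 1 directly from the pointwise inequality \eqref{ea:Gg<=Gl} applied term by term in the convex roof, whereas you deduce it from item 2 via the elementary inequality \eqref{elin}; both are one-line consequences of the same fact.
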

\begin{proof}
  \begin{enumerate}
  \item This inequality readily follows from \eqref{ea:Gg<=Gl} and
    \eqref{ea:convexroof} for all $\rho$.

  \item Let $\{ P_i , \ket{\Psi_i} \}$ be an optimal decomposition of
    $\rho$ for $\Gcl (\rho )$. Then
    \begin{equation*}
      \begin{split}
        \Gcl (\rho) &=
        \sum_i P_i \big[ - \log \Lam (\ket{\Psi_i}) \big] \geq
        - \log \Big[ \sum_i P_i \Lam (\ket{\Psi_i}) \Big] \\
        &\geq
        - \log \Big[ \max_{ \{ p_i , \ket{\psi_i} \} }
        \sum_i p_i \Lam (\ket{\psi_i}) \Big]
        = - \log \big[ 1-  \Gc ( \rho ) \big] \, ,
      \end{split}
    \end{equation*}
    where the inequalities follow from that fact that $f(y) = - \log_2
    (y)$ is a convex and monotonically decreasing function.  An
    analogous derivation yields $\Gc ( \rho ) \leq 1 - 2^{- \Gcl (
      \rho )}$.
     \qed
  \end{enumerate}
\end{proof}
Regarding the ordering of $\Gc$ and $\Gcl$, it will later be shown
that the two measures do not have the same ordering in general
(cf.~Corollary \ref{ineqorder}).

\subsubsection{Extension by trace inner product: $\Gm /
  \Gml$}\label{subsec:trace}

Apart from the convex roof, the most widely studied extension of GM to
mixed states is obtained by extending \eqref{purelambda} to mixed
states via the Hilbert-Schmidt inner product $\tr (A^{\dagger} B)$,
also known as trace inner product \cite{nc00book}.  As proved below,
this is equivalent to maximizing the fidelity between the input state
and set of pure product states.
\begin{equation}\label{eq:EE}
  \Lamm (\rho) := \max_{\sigma \in \sep} \tr ( \rho \sigma )
  = \max_{\ket{\varphi} \in \pro}  \bra{\varphi} \rho \ket{\varphi}
  = \max_{\ket{\varphi} \in \pro} F^2 ( \rho , \ket{\varphi} ) \, .
\end{equation}
\begin{proof}
  The last equality is clear from \eqref{pf1}, and the $\ge$ part of
  the middle equality follows from the fact that the set of fully
  separable states contains the pure product states.  The $\le$ part
  of the middle equation follows as
  \begin{multline*}
    \max_{\sigma \in \sep} \tr ( \rho \sigma ) =
    \tr \Big( \rho \sum_i p_i \proj{\phi_i} \Big)
    = \sum_i p_i \bra{\phi_i} \rho \ket{\phi_i} \\
    \leq \sum_i p_i \max_{\ket{\varphi} \in \pro}
    \bra{\varphi} \rho \ket{\varphi} =
    \max_{\ket{\varphi} \in \pro}
    \bra{\varphi} \rho \ket{\varphi} \, ,
  \end{multline*}
  where $\sigma_{\text{m}} = \sum_i p_i \proj{\phi_i}$, with
  $\ket{\phi_i} \in \pro$ for all $i$, is the separable state that
  maximizes $\tr ( \rho \sigma )$.
  \qed
\end{proof}

For pure states $\Lamm (\rho)$ obviously coincides with $\Lam (
\ket{\psi} )$ from \eqref{purelambda}. Therefore, the functionals
\eqref{GMpure1} and \eqref{GMpure2} are extended to mixed states as
\begin{subequations}\label{GMmixed}
  \begin{align}
    \Gm ( \rho ) &:= 1 - \Lamm ( \rho ) \, ,
    \label{GMmixed1} \\
    \Gml ( \rho ) &:= - \log \Lamm ( \rho ) \, .
    \label{GMmixed2}
  \end{align}
\end{subequations}
Since Lemma \ref{le:weakmonotonicity} applies to these measures, there
is some interdependence in their entanglement axioms.  Indeed, neither
$\Gm$ nor $\Gml$ is an entanglement measure.  This can be readily seen
from that fact that the two measures attain their maximum for the
maximally mixed state $\frac{\one}{\dim ( \cH )}$, a separable state
\cite{hmm06}, thus violating Axiom 1 and Axiom 2(a).

With regard to convexity, the maximally mixed state is also a
counterexample, because it can be decomposed into pure product states,
$\one = \sum_i \proj{i}$, with $\G ( \ket{i} ) = \Gl ( \ket{i} ) =
0$. To check whether $\Gm$ or $\Gml$ are concave, consider an
arbitrary decomposition $\rho = \sum_i p_i \rho_i$. We have
\begin{equation*}
  \Lamm ( \rho ) = \max_{\ket{\varphi} \in \pro}
  \sum_i p_i \bra{\varphi} \rho_i \ket{\varphi} \leq
  \sum_i p_i \max_{\ket{\varphi_i} \in \pro} \bra{\varphi_i}
  \rho_i \ket{\varphi_i}
  = \sum_i p_i \Lamm ( \rho_i ) \, .
\end{equation*}
Therefore, $\Lamm ( \rho )$ is convex.  From this it directly follows
that $\Gm ( \rho ) = 1 - \Lamm ( \rho )$ is concave. It remains to
investigate whether $\Gml$ is also concave. Let us consider the
isotropic state $\rho_{\iso} := p \one /d^2 + (1-p) \proj{\Psi}$,
where $p \in [0,1]$ and $\ket{\Psi} = \frac{1}{\sqrt d} \sum^d_{i=1}
\ket{ii}$ \cite{hh99}. We easily see that $\Lamm(\rho_{\iso}) = p/d^2
+ (1-p)/d$. The concavity of the logarithm yields $\Gml(\rho_{\iso})
\leq p \Gml( \one /d^2) + (1-p) \Gml(\ket{\Psi})$, and the inequality
is strict for $p \in (0,1)$. So, the isotropic state is a
counterexample to the concavity of $\Gml$. To conclude, $\Gml$ is
neither convex nor concave.

Although not entanglement measures, the quantities $\Gm$ and $\Gml$
are easier to calculate than other definitions of GM, and have
received a considerable amount of attention
\cite{hmm06,mmv07,hmm06,hmm08,wei08,jhk08,zch10}.  The measure $\Gml$
has been found to be closely related to the relative entropy of
entanglement and the logarithmic global robustness of entanglement
\cite{weg04,hmm08,aulbach12,zch10}. Furthermore, $\Gml$ has been
employed for the construction of optimal entanglement witnesses
\cite{hmm08} and for the study of state discrimination under LOCC
\cite{hmm08,hmm06}.  Zhu \etal \cite{zch10} calculated $\Lamm ( \rho
)$ for many states of interest, and Jung \etal \cite{jhk08} found that
tracing out one subsystem of an $n$-partite pure state does not change
this quantity, i.e.~$\Lamm (\ket{\psi}) = \Lamm (\rho)$, with $\rho =
\tr_i (\proj{\psi})$, holds for all $\ket{\psi} \in \cH$ and all $1
\leq i \leq n$.

The measure $\Gml$ allows to generalize Lemma \ref{le:pureadditivity}
to mixed states \cite{zch10}.  A density matrix is called positive if
all its entries in the computational basis are positive.
\begin{lemma}\label{le:mixedadditivity}
  Let $\rho \in \SH$ be a positive state.  Then $\rho$ is strongly
  additive, i.e. $\Lam (\rho \otimes \sigma) = \Lam (\rho) \Lam
  (\sigma)$ and $\Gml (\rho \otimes \sigma) = \Gml ( \rho ) + \Gml (
  \sigma )$ holds for all $\sigma \in \SH$.
\end{lemma}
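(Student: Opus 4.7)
The additivity of $\Gml$ reduces immediately to the multiplicative identity $\Lamm(\rho\otimes\sigma)=\Lamm(\rho)\Lamm(\sigma)$ by taking logarithms, so my whole plan concentrates on the latter.

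For the easy lower bound $\Lamm(\rho\otimes\sigma)\geq\Lamm(\rho)\Lamm(\sigma)$, let $\ket{\phi_\rho}\in\pro$ and $\ket{\phi_\sigma}\in\pro$ achieve $\Lamm(\rho)$ and $\Lamm(\sigma)$ on $\cH_A$ and $\cH_B$ respectively. Regrouping the combined system so that party $k$ owns $\cH_{A_k}\otimes\cH_{B_k}$, the tensor $\ket{\phi_\rho}\otimes\ket{\phi_\sigma}$ remains an $n$-partite product state of $\rho\otimes\sigma$, and its fidelity factorises as $\bra{\phi_\rho}\rho\ket{\phi_\rho}\cdot\bra{\phi_\sigma}\sigma\ket{\phi_\sigma}=\Lamm(\rho)\Lamm(\sigma)$, a valid candidate in the maximisation defining $\Lamm(\rho\otimes\sigma)$.

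For the upper bound I would pass to the vectorisation isomorphism. Any $n$-partite product test state $\ket{\varphi}=\otimes_k\ket{\varphi_k}$ with $\ket{\varphi_k}\in\cH_{A_k}\otimes\cH_{B_k}$ corresponds to a linear map $\Phi=\otimes_k\Phi_k:\cH_B\to\cH_A$ with $\tr(\Phi\Phi^\dagger)=1$, and a short calculation gives
\begin{equation*}
\bra{\varphi}\rho\otimes\sigma\ket{\varphi} \;=\; \tr\bigl(\rho\,\Phi\,\sigma^T\,\Phi^\dagger\bigr).
\end{equation*}
The product form of $\Phi$ makes $\Phi\Phi^\dagger=\otimes_k(\Phi_k\Phi_k^\dagger)$ a product (hence separable) density matrix on $\cH_A$, and $\Phi^\dagger\Phi$ similarly on $\cH_B$; together with the identity $\Lamm(\sigma^T)=\Lamm(\sigma)$ (product states are stable under entry-wise complex conjugation), this yields the two marginal estimates $\tr(\rho\Phi\Phi^\dagger)\leq\Lamm(\rho)$ and $\tr(\sigma^T\Phi^\dagger\Phi)\leq\Lamm(\sigma)$ with no reference to positivity.

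The real work is to couple these two bounds into the single joint estimate $\tr(\rho\Phi\sigma^T\Phi^\dagger)\leq\Lamm(\rho)\Lamm(\sigma)$, since the coupled trace is genuinely not the product of the two marginal traces (easy rank-two examples already violate the naive product bound). Positivity of $\rho$ enters here through the Perron--Frobenius-type estimate $\bra{x}\rho\ket{x}\leq\bra{|x|}\rho\ket{|x|}$, valid for every $\ket{x}\in\cH_A$ whenever all entries $\rho_{\vec i\vec j}$ are non-negative. I would spectrally decompose $\sigma=\sum_m q_m\proj{\psi_m}$, so that the trace becomes $\sum_m q_m\bra{\Phi\overline{\psi_m}}\rho\ket{\Phi\overline{\psi_m}}$, apply the Perron--Frobenius inequality to each summand, and exploit the factorised form $|\Phi|=\otimes_k|\Phi_k|$ of the entry-wise modulus so that the resulting non-negative vectors decompose as superpositions of positive product vectors in $\cH_A$. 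Reassembling with the identity $\sum_m q_m\|\Phi\overline{\psi_m}\|^2=\tr(\sigma^T\Phi^\dagger\Phi)\leq\Lamm(\sigma)$ then closes the argument. The single step I expect to be the main obstacle is precisely controlling the $\rho$-expectation of these non-product positive vectors by $\Lamm(\rho)\|\cdot\|^2$: it is the mixed-state analogue of the Perron--Frobenius trick behind the pure-state Lemma~\ref{le:pureadditivity} of \cite{zch10}, and it is the exact place where the positivity hypothesis on $\rho$ is indispensable.
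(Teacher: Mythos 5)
The paper does not actually prove this lemma --- it is imported verbatim from Zhu \emph{et al.}~\cite{zch10} --- so your proposal has to stand on its own, and it does not. The correct parts are the reduction to $\Lamm(\rho\otimes\sigma)=\Lamm(\rho)\Lamm(\sigma)$, the lower bound, the vectorization identity $\bra{\varphi}\rho\otimes\sigma\ket{\varphi}=\tr(\rho\,\Phi\,\sigma^{T}\Phi^{\dagger})$, and the two marginal bounds. But the step you flag as ``the main obstacle'' is not a loose end: it is the entire content of the lemma, and the route you sketch towards it provably fails. For the ``reassembling'' with $\sum_m q_m\norm{\Phi\overline{\psi_m}}^{2}=\tr(\sigma^{T}\Phi^{\dagger}\Phi)\leq\Lamm(\sigma)$ to close the argument, you would need the termwise bound $\bra{y_m}\rho\ket{y_m}\leq\Lamm(\rho)\norm{y_m}^{2}$ for $y_m=\Phi\overline{\psi_m}$. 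This is false. Take two parties, $\rho=\proj{\psi}$ and $\sigma=\proj{\psi}$ with $\ket{\psi}=\tfrac{1}{\sqrt{2}}(\ket{00}+\ket{11})$ (all amplitudes non-negative, so $\rho$ is a positive state with $\Lamm(\rho)=\Lamm(\sigma)=\tfrac12$), and the product test state $\ket{\varphi}=\ket{\varphi_1}\otimes\ket{\varphi_2}$ with $\ket{\varphi_k}=\tfrac{1}{\sqrt{2}}(\ket{0_{A_k}0_{B_k}}+\ket{1_{A_k}1_{B_k}})$, i.e.\ $\Phi=\tfrac12\one_4$. Then $y=\tfrac12\ket{\psi}$ is already entrywise non-negative, yet $\bra{y}\rho\ket{y}=\tfrac14$ while $\Lamm(\rho)\norm{y}^{2}=\tfrac18$. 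Indeed this single term already saturates the true optimum $\Lamm(\rho)\Lamm(\sigma)=\tfrac14$, whereas your chain of inequalities would cap it at $\tfrac18$; so no estimate of the form ``each term contributes at most $\Lamm(\rho)$ times its weight in the $\sigma$-marginal'' can be true.

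The failure is structural, not technical. The Perron--Frobenius replacement $\bra{x}\rho\ket{x}\leq\bra{|x|}\rho\ket{|x|}$ is fine, but $|\Phi\overline{\psi_m}|$ is a \emph{non-product} vector of $\cH_A$ whenever $\psi_m$ is entangled (in the example above it is proportional to $\ket{\psi}$ itself), and for non-product $z$ the bound $\bra{z}\rho\ket{z}\leq\Lamm(\rho)\norm{z}^{2}$ simply does not hold, positivity of $\rho$ notwithstanding. Expanding $|\psi_m|$ in the computational basis and controlling the cross terms $\bra{|\Phi|\vec{\jmath}\,}\rho\ket{|\Phi|\vec{\jmath}\,'}$ by Cauchy--Schwarz only gives $\sum_m q_m\bra{y_m}\rho\ket{y_m}\leq\Lamm(\rho)$, losing the factor $\Lamm(\sigma)$ entirely. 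The genuinely hard coupling of the two marginal bounds into $\tr(\rho\,\Phi\,\sigma^{T}\Phi^{\dagger})\leq\Lamm(\rho)\Lamm(\sigma)$ is exactly what the proof in \cite{zch10} supplies and what is absent here; as written, the proposal establishes only the trivial direction.
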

This lemma has been employed to show the strong additivity of many
mixed states, such as mixtures of Dicke states, Bell diagonal states,
isotropic states, multiqubit D\"{u}r states and the Smolin state
\cite{zch10}.  The additivity problem of $\Gml$ is closely related to
that of the relative entropy and the logarithmic global robustness
\cite{hmm08}, which facilitated the study of additivity under these
two entanglement measures, as well \cite{zch10}.

\subsubsection{Extension by fidelity: $\Gf / \Gfl$}\label{subsec:fid}

We can also extend $\Lam$ to mixed states by means of the fidelity:
\begin{equation}\label{LamFid}
  \Lamf ( \rho ) := \max_{\sigma \in \sep}
  F^2 ( \rho , \sigma ) \, .
\end{equation}
This quantity has been previously studied
\cite{bno02,ssb06,cw07,skb11NJP,skb11PRA,hmw13}, and has been
described as \emph{fidelity of separability} \cite{skb11NJP}.  In the
bipartite case \eqref{LamFid} is equivalent to the so-called maximum
$k$-extendible fidelity of a state in the limit $k \rightarrow \infty$
\cite{hmw13}. The maximum $k$-extendible fidelity has an operational
interpretation as the maximum probability with which one party can
convince another party that $\rho$ is separable in a specific protocol
\cite{hmw13}.

It is easy to verify that for pure states $\Lamf$ coincides with
$\Lam$:
\begin{equation*}
  \Lamf (\ket{\psi}) =
  \max_{\sigma \in \sep} F^2 ( \ket{\psi} , \sigma ) =
  \max_{\sigma \in \sep}
  \abs{\bra{\psi} \sigma \ket{\psi}}
  = \max_{\ket{\varphi} \in \pro}
  \abs{\braket{\varphi}{\psi}}^2
  = \Lam ( \ket{\psi} ) \, .
\end{equation*}
The corresponding extensions of the linear and logarithmic GM are
\begin{subequations}\label{ea:GMfidel}
  \begin{align}
    \Gf ( \rho ) &:= 1 - \Lamf ( \rho ) \, , \label{GMfidel1} \\
    \Gfl ( \rho ) &:= - \log \Lamf ( \rho ) \, . \label{GMfidel2}
  \end{align}
\end{subequations}
As seen for $\Gm$ and $\Gml$, Lemma \ref{le:weakmonotonicity} applies
to these measures.  $\Gf$ is intimately related to the Groverian
entanglement measure \cite{bno02,ssb06}, thus giving it an operational
interpretation by a quantum algorithm.  $\Gf$ has been shown to be a
weak entanglement measure \cite{ssb06}, and has also been studied in
\cite{cw07}.  On the other hand, little is known about $\Gfl$. It has
been touched upon in the context of additivity in \cite{skb11PRA}, but
to our knowledge its properties have not been studied before.

Intriguingly, it was discovered that $\Gf ( \rho )$ is equivalent to
its convex roof \cite{skb11NJP}, and since the convex roof is
precisely $\Gc ( \rho )$, the definitions \eqref{GMfidel1} and
\eqref{ea:convexroof1} are equivalent:
\begin{proposition}\label{prop:Gc=Gf}
  $\Gf \equiv \Gc$, i.e.~$\Gf( \rho ) = \Gc( \rho )$ holds for all
  states $\rho$.
\end{proposition}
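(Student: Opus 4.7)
The plan is to reduce the claim to the identity
\[
  \Lamf(\rho)\;=\;\max_{\{p_i,\ket{\psi_i}\}}\sum_i p_i\,\Lam(\ket{\psi_i}),
\]
where the maximum runs over pure-state decompositions $\rho=\sum_i p_i\proj{\psi_i}$; this is the concave-roof counterpart of \eqref{ea:convexroof1} and directly yields $\Gf(\rho)=\Gc(\rho)$. I would prove both inequalities by combining Uhlmann's theorem with the standard bijection between pure-state decompositions of $\rho$ and its purifications of the form $\ket{\Psi}=\sum_i\sqrt{p_i}\ket{\psi_i}\ket{i}_E$.

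For the upper bound $\Lamf(\rho)\le\max\sum_i p_i\Lam(\ket{\psi_i})$, I would fix any $\sigma=\sum_i q_i\proj{\phi_i}\in\sep$ with $\ket{\phi_i}\in\pro$, purify it canonically as $\ket{\Phi}=\sum_i\sqrt{q_i}\ket{\phi_i}\ket{i}_E$, and invoke Uhlmann to obtain a purification $\ket{\Psi}=\sum_i\sqrt{p_i}\ket{\psi_i}\ket{i}_E$ of $\rho$ (in the same ancilla basis) satisfying $F(\rho,\sigma)=|\sum_i\sqrt{p_iq_i}\braket{\psi_i}{\phi_i}|$. Cauchy--Schwarz applied with the split $a_i=\sqrt{q_i}$, $b_i=\sqrt{p_i}\braket{\psi_i}{\phi_i}$ then gives
\[
  F^2(\rho,\sigma)\;\le\;\sum_i p_i\,|\braket{\psi_i}{\phi_i}|^2\;\le\;\sum_i p_i\,\Lam(\ket{\psi_i}),
\]
since each $\ket{\phi_i}\in\pro$ is admissible in the definition of $\Lam(\ket{\psi_i})$. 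Maximising the left side over $\sigma$ and the right side over decompositions finishes this direction.

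For the reverse inequality $\Lamf(\rho)\ge\max\sum_i p_i\Lam(\ket{\psi_i})$, I would fix a decomposition $\{p_i,\ket{\psi_i}\}$, pick CPSs $\ket{\phi_i}$ of $\ket{\psi_i}$, and absorb phases so that $\braket{\psi_i}{\phi_i}=\sqrt{\Lam(\ket{\psi_i})}\ge0$. Setting $\mathcal{N}:=\sum_j p_j\Lam(\ket{\psi_j})$, the candidate separable state would be the \emph{reweighted} mixture
\[
  \sigma\;=\;\frac{1}{\mathcal{N}}\sum_i p_i\,\Lam(\ket{\psi_i})\,\proj{\phi_i},
\]
paired with the purifications $\ket{\Psi}=\sum_i\sqrt{p_i}\ket{\psi_i}\ket{i}_E$ and $\ket{\Phi}=\sum_i\sqrt{p_i\Lam(\ket{\psi_i})/\mathcal{N}}\,\ket{\phi_i}\ket{i}_E$. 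A direct computation gives $\braket{\Psi}{\Phi}=\mathcal{N}^{-1/2}\sum_i p_i\Lam(\ket{\psi_i})=\sqrt{\mathcal{N}}$, so Uhlmann yields $F^2(\rho,\sigma)\ge\mathcal{N}=\sum_i p_i\Lam(\ket{\psi_i})$, and taking the supremum over decompositions closes the argument.

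The hard part will be this lower bound. The naive uniform mixture $\sigma=\sum_i p_i\proj{\phi_i}$, combined with joint concavity of $F$, only produces $F^2(\rho,\sigma)\ge(\sum_i p_i\sqrt{\Lam(\ket{\psi_i})})^2$, which is strictly weaker than $\sum_i p_i\Lam(\ket{\psi_i})$ by Jensen's inequality. The decisive trick is to weight the mixture by $p_i\Lam(\ket{\psi_i})$ rather than by $p_i$: the square roots in the matching purification of $\sigma$ then conspire with $\sqrt{p_i}$ in the purification of $\rho$ to collapse the Uhlmann inner product to exactly $\sqrt{\mathcal{N}}$, exactly cancelling the concavity loss.
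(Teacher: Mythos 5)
Your proof is correct. Note that the paper itself does not prove this proposition: it imports it from Streltsov, Kampermann and Bru\ss{} \cite{skb11NJP}, and your argument is essentially a self-contained reconstruction of theirs --- the correspondence between pure-state decompositions and purifications $\sum_i\sqrt{p_i}\ket{\psi_i}\ket{i}_E$, Uhlmann plus Cauchy--Schwarz for the bound $\Lamf(\rho)\le\max\sum_i p_i\Lam(\ket{\psi_i})$, and the reweighted separable mixture $\sigma\propto\sum_i p_i\Lam(\ket{\psi_i})\proj{\phi_i}$ for the reverse. Both key steps check out: the Cauchy--Schwarz split with $a_i=\sqrt{q_i}$ does collapse to $\sum_i p_i|\braket{\psi_i}{\phi_i}|^2$, and the reweighted purification gives $\braket{\Psi}{\Phi}=\sqrt{\mathcal N}$ exactly, with $\mathcal N>0$ guaranteed since $\Lam(\ket{\psi_i})>0$ for every pure state. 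The only cosmetic point is the usual Uhlmann bookkeeping: the ancilla dimension must be taken large enough to accommodate both the separable decomposition of $\sigma$ and a purification of $\rho$, which is handled by padding either decomposition with zero-weight terms.
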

We will jointly refer to these two definitions as $\Gfc$ in the
following, except in situations where the emphasis is on their formal
definitions (i.e.~fidelity-based vs.~convex roof-based).  The
relationship between $\sigma_{\text{f}}$, the CSS of $\rho$ in terms
of \eqref{GMfidel1}, and the optimal decomposition $\{ P_i ,
\ket{\Psi_i} \}$ of $\rho$ in terms of \eqref{ea:convexroof1} is also
fully understood, and outlined in \cite{skb11NJP}.  Since $\Gc$ is
known to be a strong entanglement measure with convexity, the same is
true for $\Gf$. In particular, the convexity of \eqref{GMfidel1}
implies that $\Lamf ( \rho )$ is concave.  There are many states for
which $\Gc ( \rho )$ has been computed \cite{wg03}, and from
Proposition \ref{prop:Gc=Gf} and Lemma \ref{le:weakmonotonicity} the
values of $\Gf ( \rho )$ and $\Gfl ( \rho )$ directly follow.

With the known properties of $\Gf$, it follows from Lemma
\ref{le:weakmonotonicity} that $\Gfl$ is a weak entanglement measure
with convexity.  From the convexity of $\Gfl$ it then follows that
$\Gfl ( \rho ) \leq \Gcl ( \rho )$ holds for all $\rho$:
\begin{equation}\label{ea:convexineq}
  \Gfl ( \rho ) \leq
  \sum_i P_{i} \Gl \left( \ket{\Psi_{i}} \right) =
  \min_{ \{ p_{i}, \ket{\psi_{i}} \} } \sum_{i} p_{i}
  \Gl \left( \ket{\psi_{i}} \right)
  = \Gcl ( \rho ) \, ,
\end{equation}
where $\{P_i, \ket{\Psi_{i}}\}$ is an optimal decomposition for $\Gcl
( \rho )$.  The question whether $\Gfl ( \rho ) \leq \Gcl ( \rho )$ is
strict for some states will be extensively studied in Section
\ref{sec:relationships}.

\subsubsection{Extension by trace distance: $\Gt$}\label{sub:gt}

It is tempting to introduce another mixed extension of GM, based on
the trace distance defined in \eqref{trdist}.  From \eqref{tf3} we
obtain $\DT^2 ( \ket{\psi} , \ket{\phi} ) = 1 - F^2 ( \ket{\psi} ,
\ket{\phi} ) = 1 - \abs{\braket{\psi}{\phi}}^2$, an expression with
the form of \eqref{GMpure1}.  We therefore define
\begin{equation}\label{ea:Gtr}
  \Gt ( \rho ) := \min_{\ket{\varphi} \in \pro}
  \DT^2 ( \rho , \ket{\varphi} ) =
  \frac{1}{4} \min_{\ket{\varphi} \in \pro}
  \big( \tr \big\vert \rho - \proj{\varphi} \big\vert \big)^2 \, ,
\end{equation}
and call this measure the \emph{trace extension of GM}.  For pure
input states $\Gt$ obviously coincides with \eqref{GMpure1}, so
\eqref{ea:Gtr} is an extension of the linear GM.  Note that the
related definition,
\begin{equation}\label{ea:GTpET}
  \ETw ( \rho ) =
  \min_{\sigma \in \sep} \DT^2 ( \rho , \sigma ) \, ,
\end{equation}
was already introduced (up to a square operation) as the trace
entanglement in \eqref{tr1} and shown to be a weak entanglement
measure.  From \eqref{ea:Gtr} and \eqref{ea:GTpET} it immediately
follows that $\ETw ( \rho ) \leq \Gt ( \rho )$ for all $\rho$.  To see
whether $\ETw$ also coincides with \eqref{GMpure1} for the subset of
pure states, we need to answer the question whether for pure input
states the closest separable state $\sigma_{\text{t}} \in \SH$ in
terms of the trace distance can always be chosen to be pure,
i.e. $\sigma_{\text{t}} = \ketbra{\phi}{\phi}$.  The cluster states
provide a counterexample for this:
\begin{corollary}\label{cor:traceent}
  There exist pure states $\ket{\psi}$ for which $\ETw ( \ket{\psi} )
  < \Gt ( \ket{\psi} )$ holds.
\end{corollary}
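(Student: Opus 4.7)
The plan is to exhibit a concrete pure state whose closest separable state in trace distance is genuinely mixed, thereby forcing the strict inequality. The key observation is that $\Gt$ reduces to the linear GM on pure inputs: by \eqref{tf3}, $\DT^2(\ket{\psi}, \ket{\varphi}) = 1 - \abs{\braket{\psi}{\varphi}}^2$ for every pure $\ket{\varphi} \in \pro$, so $\Gt(\ket{\psi}) = 1 - \Lam(\ket{\psi})$. It therefore suffices to display a separable (necessarily mixed) state $\sigma$ with $\DT^2(\ket{\psi}, \sigma) < 1 - \Lam(\ket{\psi})$.

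Cluster states are the natural candidates, since Section \ref{graphstates} will construct for them a universal CSS in trace distance that is a genuine mixture of product states; invoking that construction immediately yields the strict inequality for every cluster state. For a self-contained counterexample, I would take the simplest case, the two-qubit cluster state, which is LU-equivalent to the Bell state $\ket{\Phi^+} = (\ket{00} + \ket{11})/\sqrt{2}$. A short Cauchy--Schwarz argument gives $\Lam(\ket{\Phi^+}) = \tfrac{1}{2}$, hence $\Gt(\ket{\Phi^+}) = \tfrac{1}{2}$.

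For the trace-distance side, I would take the classically correlated mixture $\sigma = \tfrac{1}{2}(\proj{00} + \proj{11}) \in \sep$. The only nonzero entries of $\proj{\Phi^+} - \sigma$ in the computational basis are the off-diagonal coherences $\tfrac{1}{2}\ketbra{00}{11} + \tfrac{1}{2}\ketbra{11}{00}$, whose eigenvalues are $\pm \tfrac{1}{2}$. Thus $\DT^2(\ket{\Phi^+}, \sigma) = \tfrac{1}{4}$, and hence $\ETw(\ket{\Phi^+}) \leq \tfrac{1}{4} < \tfrac{1}{2} = \Gt(\ket{\Phi^+})$.

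The main obstacle is conceptual rather than computational: one must recognize that the bound \eqref{tf2} between trace distance and fidelity is generally not tight when the target is mixed. Here $F^2(\ket{\Phi^+}, \sigma) = \tfrac{1}{2}$ already matches the best pure product overlap $\Lam(\ket{\Phi^+})$, yet the actual trace distance strictly beats the pure-state bound $\sqrt{1-F^2}$; this gap is what drives $\ETw$ and $\Gt$ apart. The systematic route to extend the counterexample beyond this toy case is the universal CSS construction for graph states developed in Section \ref{graphstates}.
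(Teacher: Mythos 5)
Your proof is correct and follows essentially the same route as the paper's: the paper takes $n$-qubit cluster states with even $n$ together with the mixed separable state $\delta$ of Theorem \ref{th:traceent}, obtaining $\ETw(\ket{\text{C}_n}) = (1-2^{-n/2})^2 < 1-2^{-n/2} = \Gt(\ket{\text{C}_n})$, and your Bell-state example is exactly the $n=2$ instance of this family, with $\sigma = \tfrac{1}{2}(\proj{00}+\proj{11})$ playing the role of $\delta$. The only (minor, favorable) difference is that your computation is self-contained and uses only the trivial upper bound $\ETw(\ket{\Phi^+}) \leq \DT^2(\ket{\Phi^+},\sigma)$, so it does not rely on the optimality of $\delta$ established in Theorem \ref{th:traceent}.
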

\begin{proof}
  From Theorem \ref{th:traceent} and the succeeding paragraph,
  together with \eqref{tf3} it follows that $\ETw ( \ket{\text{C}_n} )
  = \DT^2 ( \ket{\text{C}_n} , \delta ) = (1- 2^{-\frac{n}{2}})^2 < 1
  - 2^{-\frac{n}{2}} = 1 - \Lam ( \ket{\text{C}_n} ) = \Gt (
  \ket{\text{C}_n} )$ holds for all $n$ qubit cluster states
  $\ket{\text{C}_n}$ with even $n$.
  \qed
\end{proof}
As a consequence, $\ETw$ is not an extension of the linear GM, because
$\DT ( \ket{\psi} , \cdot )$ is in general not minimized by pure
states.  However, $\ETw$ is an interesting quantity on its own,
because it is a weak entanglement measure and a lower bound to $\Gt$.
Furthermore, we will see in Section \ref{relgtet} that $\ETw$ is also
a lower bound to $\Gfc$, which makes it a joint lower bound to all the
GM definitions discussed in this paper.

The convexity of $\ET$, and thus the convexity of $\ETw = \ET^2$ can
be proved with the joint convexity of the trace distance. For any
$\rho = \sum_i p_i \rho_i$ we have
\begin{multline*}
  \sum_i p_i \ET ( \rho_i ) =
  \sum_i p_i \! \min_{\sigma_i \in \sep} \! \DT ( \rho_i , \sigma_i )
  = \! \min_{\{ \sigma_i \} \in \sep}
  \bigg[ \sum_i p_i \DT ( \rho_i , \sigma_i ) \bigg] \\
  \geq
  \min_{ \{ \sigma_i \} \in \sep}
  \Big[ \DT \big( \rho , {\textstyle \sum_i} p_i \sigma_i \big) \Big]
  = \min_{\sigma \in \sep} \DT ( \rho , \sigma )
  = \ET ( \rho ) \, ,
\end{multline*}
where the inequality follows from \eqref{trjointconvex}.

On the other hand, $\Gt$ is not an entanglement measure.  For this,
note that $\Gt ( \ket{\varphi} ) = 0$ holds for all $\ket{\varphi} \in
\pro$, and that $\Gt ( \rho ) > 0$ holds for all genuinely mixed $\rho
\in \sep$.  From this it is not only clear that Axiom 1 is violated,
but one can also immediately construct counterexamples for the
concavity and the weak monotonicity (e.g.~with the depolarizing
channel \cite{nc00book}).  Using the isotropic state as a
counterexample, it is shown in Appendix \ref{ap:gtpconcave} that $\Gt$
is not concave either.

\section{Relationships between the GM
  definitions}\label{sec:relationships}

In the previous section we introduced and discussed seven different
definitions of GM, of which two are equivalent. In the following we
analyze the relationship between these different definitions.

For an arbitrary mixed state $\rho \in \SH$ the quantities $\Gm ( \rho
)$ and $\Gml ( \rho )$ correspond to the same closest product state
$\phim \in \pro$, and the quantities $\Gf ( \rho )$ and $\Gfl ( \rho
)$ correspond to the same closest separable state $\sigma_{\text{f}}
\in \sep$.  In contrast to this, $\Gc ( \rho )$ and $\Gcl ( \rho )$
correspond to optimal decompositions $\{ P_i , \ket{\Psi_i} \}$ that
may be different for the two measures.  The quantity $\Gt ( \rho )$
corresponds to a closest product state $\ket{\phi_{\text{t}}} \in
\pro$.  In total, with the exception of the convex roof-based
measures, the different values of GM for a given state are determined
by two product states $\phim$, $\ket{\phi_{\text{t}}}$, and one
separable state $\sigma_{\text{f}}$.

\subsection{Comparison between $\Gfc$, $\Gcl$ and $\Gfl$}

From the previous discussion we already know that $\Gf \equiv \Gc$,
$\Gfc \cong \Gfl$, and that $\Gcl (\rho ) \geq \Gfl (\rho) = - \log_2
\big( 1- \Gfc (\rho ) \big)$.  In this subsection, we study the
connection between the fidelity-based and convex roof-based extensions
in further detail by addressing some open problems. For example, it is
neither obvious nor known whether $\Gcl$ and $\Gfl$ are equivalent
($\Gcl \equiv \Gfl$), and if not, whether they have at least the same
ordering ($\Gcl \cong \Gfl$).  For this purpose, we will first derive
necessary and sufficient conditions for $\Gcl ( \rho ) = \Gfl ( \rho
)$, and then investigate optimal decompositions for specific classes
of states (e.g.~maximally correlated states, isotropic states, 2 qubit
states).
\begin{theorem}\label{thm:Gfl=Gcl}
  For any state $\rho$ the following four conditions are equivalent:
  \begin{enumerate}
  \item $\Gcl (\rho ) = \Gfl (\rho )$ holds.

  \item $\Gcl (\rho ) = - \log_2 \big( 1- \Gfc (\rho ) \big)$ holds.

  \item There exists a decomposition $\{P_i, \ket{\Psi_{i}}\}$, so
    that
    \begin{enumerate}
    \item $\{P_i, \ket{\Psi_{i}}\}$ is optimal for $\Gc ( \rho )$ and
      $\Gcl ( \rho )$, and

    \item the $\ket{\Psi_i}$ are all equally entangled.
    \end{enumerate}

  \item For every optimal decomposition $\{P_i, \ket{\Psi_{i}}\}$ of
    $\Gcl ( \rho )$ the following holds
    \begin{enumerate}
    \item $\{P_i, \ket{\Psi_{i}}\}$ is also optimal for $\Gc ( \rho
      )$, and

    \item the $\ket{\Psi_i}$ are all equally entangled.
    \end{enumerate}
  \end{enumerate}
  Here, the meaning of 3(b) and 4(b) is that $\Lam (\ket{\Psi_i}) =
  \Lam(\ket{\Psi_j})$ holds for all $i,j$.
\end{theorem}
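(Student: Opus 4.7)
The plan is to build everything around the Jensen-type inequality applied to $-\log$ on the convex combination $\sum_i P_i \Lam(\ket{\Psi_i})$. The key identity to keep in mind throughout is $\Gfl(\rho)=-\log(1-\Gfc(\rho))$, which holds by the definition of $\Gfl$ together with Proposition~\ref{prop:Gc=Gf}. This immediately gives the equivalence $(1)\Leftrightarrow(2)$, so the real work concerns $(1)\Leftrightarrow(3)\Leftrightarrow(4)$.

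The core computation is the following chain, valid for any optimal decomposition $\{P_i,\ket{\Psi_i}\}$ of $\Gcl(\rho)$:
\begin{equation*}
  \Gcl(\rho) = -\sum_i P_i \log \Lam(\ket{\Psi_i})
  \;\geq\; -\log \Big( \sum_i P_i \Lam(\ket{\Psi_i}) \Big)
  \;\geq\; -\log(1-\Gc(\rho)) = \Gfl(\rho).
\end{equation*}
The first inequality is Jensen's inequality for the strictly convex function $-\log$; it is tight iff all $\Lam(\ket{\Psi_i})$ coincide, i.e.~iff all $\ket{\Psi_i}$ are equally entangled in the sense of $3(b)/4(b)$. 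The second inequality uses $\sum_i P_i \Lam(\ket{\Psi_i}) = 1-\sum_i P_i \G(\ket{\Psi_i}) \leq 1-\Gc(\rho)$, with equality iff $\{P_i,\ket{\Psi_i}\}$ is also optimal for $\Gc(\rho)$. So $\Gcl(\rho)=\Gfl(\rho)$ forces both to be equalities simultaneously.

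From this I would read off the remaining implications. $(1)\Rightarrow(4)$: if $\Gcl(\rho)=\Gfl(\rho)$, then for \emph{every} optimal decomposition of $\Gcl(\rho)$ both inequalities above must be equalities, giving $4(a)$ and $4(b)$. $(4)\Rightarrow(3)$ is immediate (any optimal decomposition of $\Gcl(\rho)$ witnesses $3$, and such a decomposition exists since the minimum in \eqref{ea:convexroof2} is attained). $(3)\Rightarrow(1)$: starting from a decomposition satisfying $3(a)$ and $3(b)$, compute
\begin{equation*}
  \Gcl(\rho) = \sum_i P_i \Gl(\ket{\Psi_i})
  = -\log \Lam(\ket{\Psi_1})
  = -\log \Big( \sum_i P_i \Lam(\ket{\Psi_i}) \Big)
  = -\log(1-\Gc(\rho)) = \Gfl(\rho),
\end{equation*}
where the second equality uses that all $\Lam(\ket{\Psi_i})$ agree and the fourth uses optimality for $\Gc$.

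I expect the only subtle point to be the strictness of Jensen, which is needed for the direction $(1)\Rightarrow(4)$. One must rely on $-\log$ being strictly convex on $(0,1]$, so that equality $\sum_i P_i(-\log\Lam(\ket{\Psi_i})) = -\log\sum_i P_i\Lam(\ket{\Psi_i})$ forces the values $\Lam(\ket{\Psi_i})$ attached to strictly positive weights $P_i$ to coincide. Components carrying weight $P_i=0$ can be discarded from the decomposition without loss of generality, so the interpretation of $4(b)$ must be understood as ``for all $i,j$ with $P_i,P_j>0$''. Apart from that caveat, the proof is essentially a bookkeeping exercise built on one Jensen inequality, Proposition~\ref{prop:Gc=Gf}, and the definitions.
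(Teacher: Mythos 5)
Your proof is correct and follows essentially the same route as the paper's: the same two-step inequality chain linking $\Gfl(\rho)$ and $\Gcl(\rho)$ (optimality for $\Gc$ in one step, strict concavity/convexity of the logarithm in the other), with the equivalences read off from when each inequality is tight. Your remark about discarding zero-weight components is a harmless refinement the paper leaves implicit.
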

\begin{proof}
  Let $\{ P_i , \ket{\Psi_i} \}$ be some optimal decomposition of
  $\Gcl (\rho)$.  Using Lemma \ref{le:weakmonotonicity}, Proposition
  \ref{prop:Gc=Gf} and \eqref{ea:convexroof}, we have
  \begin{equation}\label{ea:Gfl=Gcl}
    \begin{split}
      \Gfl ( \rho ) &=
      - \log \big[ 1- \Gfc ( \rho ) \big] =
      -\log \Big[ \max_{\{p_i, \ket{\psi_i} \}}
      \sum_i p_i \Lam ( \ket{\psi_{i}} ) \Big] \\
      &\leq
      -\log \Big[ \sum_i P_i \Lam (\ket{\Psi_{i}}) \Big]
      \leq -\sum_i P_i \log \Lam (\ket{\Psi_{i}})
      = \Gcl ( \rho ) \, ,
    \end{split}
  \end{equation}
  where the second inequality follows from the concavity of the
  logarithm.
  \begin{itemize}
  \item $1. \Leftrightarrow 2.$: This equivalency follows immediately
    from $\Gfl (\rho ) = - \log_2 \big( 1- \Gfc (\rho ) \big)$.

  \item $1. \Leftrightarrow 4.$: Apparently, $\Gfl ( \rho ) = \Gcl (
    \rho )$ holds if and only if both inequalities in
    \eqref{ea:Gfl=Gcl} become equalities.  Regarding the first
    inequality in \eqref{ea:Gfl=Gcl}, this inequality becomes an
    equality iff $\{ P_i , \ket{\Psi_i} \}$ is also an optimal
    decomposition for $\Gc ( \rho )$.  Regarding the second inequality
    in \eqref{ea:Gfl=Gcl}, the strict concavity of the logarithm
    implies that this inequality becomes an equality iff $\Lam
    (\ket{\Psi_{i}})=\Lam (\ket{\Psi_{j}})$ holds for all
    $i,j$. Together, this yields that condition 1. holds if and only
    if condition 4. holds.

  \item $4. \Rightarrow 3.$: Obvious, since $\Gcl (\rho)$ has at least
    one optimal decomposition.

  \item $3. \Rightarrow 1.$: Using the decomposition $\{ P_i ,
    \ket{\Psi_i} \}$ postulated by condition 3, the two inequalities
    in \eqref{ea:Gfl=Gcl} turn into equalities. Therefore, $\Gfl (
    \rho ) = \Gcl ( \rho )$.
    \qed
  \end{itemize}
\end{proof}
Note that for states $\rho$ that fall under Theorem \ref{thm:Gfl=Gcl}
any optimal decomposition for $\Gcl( \rho )$ is also optimal for $\Gc(
\rho )$, but the converse is not necessarily true.  In other words,
for states $\rho$ that satisfy $\Gcl( \rho ) = \Gfl ( \rho )$, the set
of optimal decompositions for $\Gcl ( \rho )$ is a non-empty subset of
the set of optimal decompositions for $\Gc ( \rho )$. It is an open
question whether states $\rho$ with $\Gcl( \rho ) = \Gfl ( \rho )$
exist, for which the set of optimal decompositions of $\Gc ( \rho )$
is strictly larger than that of $\Gcl ( \rho )$.  Without the
condition $\Gcl( \rho ) = \Gfl ( \rho )$, $\Gc ( \rho )$ and $\Gcl (
\rho )$ may not even have any common optimal decomposition, as shown
later in Corollary \ref{exclusiv}.

The following corollary helps to understand the relationship between
optimal decompositions of $\Gc ( \rho )$ and $\Gcl ( \rho )$.
\begin{corollary}\label{coroptgcgcl}
  Let $\{ P_i ,\ket{\Psi_i} \}$ be some decomposition of $\rho$.  The
  following two conditions are equivalent:
  \begin{enumerate}
  \item $\{ P_i ,\ket{\Psi_i} \}$ is an optimal decomposition for $\Gc
    ( \rho )$, and the $\ket{\Psi_i}$ are all equally entangled.

  \item $\{ P_i ,\ket{\Psi_i} \}$ is an optimal decomposition for
    $\Gcl ( \rho )$, and $\Gcl ( \rho ) = \Gfl ( \rho )$ holds.
  \end{enumerate}
\end{corollary}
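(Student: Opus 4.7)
The plan is to deduce Corollary~\ref{coroptgcgcl} directly from Theorem~\ref{thm:Gfl=Gcl} by exploiting the two-step chain of inequalities
\begin{equation*}
  \Gfl ( \rho ) \leq -\log_2 \Big[ \sum_i P_i \Lam ( \ket{\Psi_i} ) \Big]
  \leq \sum_i P_i \Gl ( \ket{\Psi_i} )
\end{equation*}
established in the proof of that theorem (together with $\Gfl(\rho) = -\log_2(1-\Gfc(\rho))$, valid by Lemma~\ref{le:weakmonotonicity} and Proposition~\ref{prop:Gc=Gf}). The right-hand side coincides with $\Gcl(\rho)$ precisely when $\{P_i,\ket{\Psi_i}\}$ is optimal for $\Gcl(\rho)$. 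The first inequality is saturated iff $\sum_i P_i \Lam(\ket{\Psi_i}) = \Lamf(\rho) = 1 - \Gfc(\rho) = 1 - \Gc(\rho)$, that is, iff the decomposition is optimal for $\Gc(\rho)$. The second inequality, being an instance of Jensen for the strictly concave logarithm, is saturated iff all $\Lam(\ket{\Psi_i})$ agree, i.e.\ the $\ket{\Psi_i}$ are equally entangled.

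For the direction $1 \Rightarrow 2$, I would start from a decomposition $\{P_i,\ket{\Psi_i}\}$ that is optimal for $\Gc(\rho)$ with $\Lam(\ket{\Psi_i}) =: c$ constant in $i$. Then $c = \sum_i P_i \Lam(\ket{\Psi_i}) = \Lamf(\rho) = 1 - \Gc(\rho)$, so both inequalities in the chain above become equalities for this decomposition; in particular $\sum_i P_i \Gl(\ket{\Psi_i}) = -\log_2 c = \Gfl(\rho)$. To conclude optimality for $\Gcl(\rho)$, I would apply the same chain to an arbitrary decomposition $\{q_j,\ket{\phi_j}\}$ of $\rho$: Jensen's inequality and the optimality of $\{P_i,\ket{\Psi_i}\}$ for $\Gc(\rho)$ give $\sum_j q_j \Gl(\ket{\phi_j}) \geq -\log_2 \sum_j q_j \Lam(\ket{\phi_j}) \geq -\log_2 c$. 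Hence the equally-entangled decomposition realizes the minimum in~\eqref{ea:convexroof2}, proving optimality for $\Gcl(\rho)$ and $\Gcl(\rho) = \Gfl(\rho)$.

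For the direction $2 \Rightarrow 1$, I would start from a decomposition that is optimal for $\Gcl(\rho)$ and assume $\Gcl(\rho) = \Gfl(\rho)$. Then the chain above forces both inequalities to collapse to equalities. The equality case in the first inequality yields $\sum_i P_i \Lam(\ket{\Psi_i}) = \Lamf(\rho) = 1-\Gc(\rho)$, hence optimality for $\Gc(\rho)$; the equality case in the second, by strict concavity of the logarithm, forces $\Lam(\ket{\Psi_i}) = \Lam(\ket{\Psi_j})$ for all $i,j$. This is exactly condition~1.

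No step presents a genuine obstacle, since Theorem~\ref{thm:Gfl=Gcl} has already done the analytic work. The only care needed is bookkeeping of quantifiers: in the direction $1 \Rightarrow 2$, the optimality of $\{P_i,\ket{\Psi_i}\}$ for $\Gcl(\rho)$ must be verified against \emph{every} other decomposition, not merely realised by the decomposition at hand, which is why the Jensen-plus-$\Gc$-optimality argument in the second paragraph is required rather than a direct appeal to the equality chain.
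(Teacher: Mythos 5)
Your proposal is correct and follows essentially the same route as the paper: both directions rest on the two-step inequality chain from the proof of Theorem~\ref{thm:Gfl=Gcl} and its saturation conditions ($\Gc$-optimality for the first inequality, strict concavity of the logarithm for the second). The only cosmetic difference is that for $1 \Rightarrow 2$ the paper closes the argument by invoking the universal bound $\Gfl(\rho) \leq \Gcl(\rho)$ from~\eqref{ea:convexineq}, whereas you verify optimality for $\Gcl$ directly against an arbitrary competing decomposition --- the same content, unpacked.
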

\begin{proof}
  \begin{itemize}
  \item 2 $\Rightarrow$ 1: This easily follows from items 1. and 4. of
    Theorem \ref{thm:Gfl=Gcl}.

  \item 1 $\Rightarrow$ 2: Let $\{ P_i , \ket{\Psi_i} \}$ be an
    optimal decomposition of $\Gc ( \rho )$ where the $\ket{\Psi_i}$
    are all equally entangled.  In analogy to \eqref{ea:Gfl=Gcl}, we
    have
    \begin{equation}\label{ea:Gfl=Gcl2}
      \begin{split}
        \Gfl ( \rho ) &= - \log \big[ 1- \Gfc ( \rho ) \big]
        =
        -\log \Big[ \sum_i P_i \Lam (\ket{\Psi_{i}}) \Big] \\
        &= \sum_i P_i \big[ - \log \Lam (\ket{\Psi_{i}}) \big]
        \geq \min_{\{p_i, \ket{\psi_i} \}}
        \sum_i p_i \Gl ( \ket{\psi_{i}} ) =
        \Gcl ( \rho ) \, ,
      \end{split}
    \end{equation}
    where the third equality follows from the fact that the
    $\ket{\Psi_i}$ are all equally entangled.  As shown in
    \eqref{ea:convexineq}, $\Gfl ( \rho ) \leq \Gcl ( \rho )$ holds
    for all $\rho$, so the inequality in \eqref{ea:Gfl=Gcl2} must be
    an equality. Therefore, $\{ P_i ,\ket{\Psi_i} \}$ is also an
    optimal decomposition for $\Gcl ( \rho )$, and $\Gfl ( \rho ) =
    \Gcl ( \rho )$ holds.
    \qed
  \end{itemize}
\end{proof}
Following the derivation of general results, we next investigate
classes of states for whom $\Gfl ( \rho )$ and $\Gcl ( \rho )$
coincide. We will see that for two-qubit systems $\Gfl \equiv \Gcl$
holds, and that for general bipartite systems $\Gfl ( \rho ) = \Gcl (
\rho )$ holds for the subset of isotropic states.
\begin{proposition}\label{pr:fctwoqubits}
  $\Gfl \equiv \Gcl$ holds for $\bC^2 \otimes \bC^2$.
\end{proposition}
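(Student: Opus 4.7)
My plan is to reduce the claim to Corollary \ref{coroptgcgcl}: if for every two-qubit state $\rho$ I can exhibit a decomposition that is optimal for $\Gc(\rho)$ whose pure components are all equally entangled, then condition 1 of that corollary holds, and its condition 2 immediately yields $\Gcl(\rho) = \Gfl(\rho)$. Since $\rho$ is arbitrary and $\Gfl(\rho) \leq \Gcl(\rho)$ always (cf.~\eqref{ea:convexineq}), this gives $\Gfl \equiv \Gcl$ on $\bC^2 \otimes \bC^2$.

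The first key observation is that for pure two-qubit states, $\Lam$ and the concurrence $C$ are in one-to-one monotonic correspondence. Writing the Schmidt decomposition $\ket{\psi} = \sqrt{\lambda_1}\ket{e_1 f_1} + \sqrt{\lambda_2}\ket{e_2 f_2}$ with $\lambda_1 \geq \lambda_2$ and $\lambda_1+\lambda_2 = 1$, we have $\Lam(\ket{\psi}) = \lambda_1$ and $C(\ket{\psi}) = 2\sqrt{\lambda_1\lambda_2}$, hence $\Lam = \tfrac{1}{2}\bigl(1 + \sqrt{1-C^2}\bigr)$. Thus $\G$, viewed as a function of the concurrence on $[0,1]$, is $\G(C) = \tfrac{1}{2}\bigl(1 - \sqrt{1-C^2}\bigr)$, which is strictly increasing and (a quick calculation of the second derivative gives $\G''(C) = \tfrac{1}{2}(1-C^2)^{-3/2}$) strictly convex. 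In particular, pure states with the same concurrence have the same linear GM.

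The second key input is Wootters' theorem \cite{wootters98}: for any two-qubit $\rho$ there exists a decomposition $\{P_i, \ket{\Psi_i}\}$ with $C(\ket{\Psi_i}) = C(\rho)$ for every $i$, and every decomposition $\{q_j, \ket{\phi_j}\}$ of $\rho$ satisfies $\sum_j q_j C(\ket{\phi_j}) \geq C(\rho)$. Combining this with Jensen's inequality applied to the convex, monotonically increasing function $\G(C)$ gives
\begin{equation*}
\sum_j q_j \G(\ket{\phi_j}) = \sum_j q_j \G\bigl(C(\ket{\phi_j})\bigr) \geq \G\Bigl(\sum_j q_j C(\ket{\phi_j})\Bigr) \geq \G(C(\rho)) = \sum_i P_i \G(\ket{\Psi_i}),
\end{equation*}
so the Wootters decomposition is optimal for $\Gc(\rho)$, and by construction its pure components all have the same $C$ and hence the same $\G$, i.e.~they are equally entangled.

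Applying Corollary \ref{coroptgcgcl} in the direction ``$1 \Rightarrow 2$'' yields $\Gcl(\rho) = \Gfl(\rho)$, completing the proof. I don't anticipate a genuine obstacle: the only nontrivial external input is Wootters' explicit optimal decomposition of the entanglement of formation, which the paper already invokes earlier in Appendix \ref{ap:weak}; the rest is elementary calculus on the one-variable function $\G(C)$ together with a direct appeal to the machinery already assembled in Theorem \ref{thm:Gfl=Gcl} and Corollary \ref{coroptgcgcl}.
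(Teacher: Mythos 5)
Your proof is correct, but it takes a genuinely different route from the paper. The paper's proof is a two-line appeal to Proposition 4 of \cite{skb11NJP}, which states that for two-qubit systems any non-negative convex function $f$ with $f(0)=0$ applied to $\Gc(\rho)$ equals its own convex roof; taking $f(x)=-\log_2(1-x)$ immediately identifies $\Gfl$ with $\Gcl$. You instead open up the two-qubit structure explicitly: the identity $\Lam = \tfrac{1}{2}(1+\sqrt{1-C^2})$, the convexity and monotonicity of $\G$ as a function of the concurrence, and Wootters' optimal decomposition (in which every pure component has concurrence $C(\rho)$) together show that this decomposition is optimal for $\Gc(\rho)$ with all components equally entangled, whence Corollary \ref{coroptgcgcl} delivers $\Gcl(\rho)=\Gfl(\rho)$. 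All the individual steps check out, including $\G''(C)=\tfrac{1}{2}(1-C^2)^{-3/2}>0$ and the chain of Jensen/monotonicity inequalities. What your approach buys is self-containedness (Wootters' theorem is the only external input, and the paper already invokes it in Appendix \ref{ap:weak}) plus the explicit closed form $\Gc(\rho)=\tfrac{1}{2}\bigl(1-\sqrt{1-C(\rho)^2}\bigr)$ and an explicit common optimal decomposition for $\Gc$ and $\Gcl$; what the paper's route buys is brevity. Your argument is in fact the natural companion to the paper's own Lemma \ref{le:GCL=twoqubit}, which runs the analogous Wootters-plus-convexity computation directly for $\Gcl$, so the two are mutually consistent.
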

\begin{proof}
  According to Proposition 4 of \cite{skb11NJP}, if $f (x)$ is a
  non-negative convex function for $x \geq 0$ and obeys $f (0) = 0$,
  then for two-qubit systems $f ( \Gc ( \rho ))$ is equal to its
  convex roof.  The function $f(x) := - \log_2 (1-x)$ satisfies the
  requirements, so $f ( \Gc ( \rho )) = f ( \Gf ( \rho )) = \Gfl (
  \rho )$ is equal to its convex roof, which is precisely $\min_{ \{
    p_{i}, \ket{\psi_{i}} \} } \sum_{i} p_{i} \Gl \left(
    \ket{\psi_{i}} \right) = \Gcl ( \rho )$. Therefore, $\Gfl ( \rho )
  = \Gcl ( \rho )$ holds for all $\rho$.
  \qed
\end{proof}

\begin{proposition}\label{pr:isotropic}
  $\Gfl ( \rho ) = \Gcl ( \rho )$ holds for isotropic states in $\bCd
  \otimes \bCd$.
\end{proposition}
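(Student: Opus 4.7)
The plan is to apply Corollary \ref{coroptgcgcl}: it suffices to exhibit an optimal decomposition of $\rho_{\iso}$ for $\Gc$ whose pure components are all equally entangled (share a common value of $\Lam$). The construction exploits the $U \otimes U^*$ symmetry that characterizes isotropic states.

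Set $F := \bra{\Psi}\rho_{\iso}\ket{\Psi}$. The twirl $T(X) := \int dU\,(U \otimes U^*)X(U \otimes U^*)^\dagger$ over Haar measure sends any operator to a $U \otimes U^*$-invariant one with the same overlap with $\proj{\Psi}$; in particular, $T(\proj{\psi}) = \rho_{\iso}$ for any pure $\ket{\psi}$ with $|\braket{\Psi}{\psi}|^2 = F$. Pick $\ket{\psi^*}$ to maximize $\Lam$ subject to this constraint, and write $M(F) := \Lam(\ket{\psi^*})$. Then
\[
\rho_{\iso} = \int dU\,(U \otimes U^*)\proj{\psi^*}(U \otimes U^*)^\dagger
\]
is a decomposition of $\rho_{\iso}$ into pure states all LU-equivalent to $\ket{\psi^*}$, each with common $\Lam = M(F)$. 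This yields the upper bound $\Gc(\rho_{\iso}) \leq 1 - M(F)$.

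For the matching lower bound, take any decomposition $\{p_i, \ket{\psi_i}\}$ of $\rho_{\iso}$ and set $F_i := |\braket{\Psi}{\psi_i}|^2$, so that $\sum_i p_i F_i = F$ (by taking $\bra{\Psi}\cdot\ket{\Psi}$). Because $\Lam(\ket{\psi_i}) \leq M(F_i)$ by definition of $M$, and $M$ is concave (see below), I obtain
\[
\sum_i p_i \Lam(\ket{\psi_i}) \leq \sum_i p_i M(F_i) \leq M\!\left(\textstyle\sum_i p_i F_i\right) = M(F),
\]
hence $\Gc(\rho_{\iso}) \geq 1 - M(F)$. The symmetric decomposition above is therefore optimal for $\Gc$, and its components are all equally entangled, so Corollary \ref{coroptgcgcl} immediately delivers $\Gcl(\rho_{\iso}) = \Gfl(\rho_{\iso})$.

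The chief technical step is the concavity of $M$. A standard Lagrange-multiplier analysis on Schmidt coefficients (aligning all phases to maximize the overlap with $\ket{\Psi}$ and setting $\lambda_2 = \cdots = \lambda_d$) gives the closed form $M(F) = \frac{1}{d}\bigl(\sqrt{F} + \sqrt{(d-1)(1-F)}\bigr)^2$ for $F \geq 1/d$, with $M(F) \equiv 1$ on $[0, 1/d]$ since any such fidelity is attained by a pure product state; a direct second-derivative computation shows $M'' \leq 0$ on both pieces, and the pieces meet smoothly with matching derivative zero at $F = 1/d$. For $F \leq 1/d$ the isotropic state is separable, so $\Gcl(\rho_{\iso}) = \Gfl(\rho_{\iso}) = 0$ and the proposition holds trivially.
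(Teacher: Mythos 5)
Your argument is correct, and it takes a genuinely different route from the paper's. The paper imports the Wei--Goldbart result for $\Gfc(\rho_{\iso})$ wholesale, computes $\Gfl$ from it, and then evaluates $\Gcl$ directly as the convex hull of $-\log\mu$ over the fidelity constraint, concluding equality because that convex hull is already convex in $F$. You instead rederive the optimal decomposition from first principles -- the $U\otimes U^{*}$ twirl gives a decomposition into LU-equivalent (hence equally entangled) pure states, and the matching lower bound follows from Jensen's inequality applied to the concave envelope $M(F)$ -- and then you never touch $\Gcl$ directly, delegating the conclusion to Corollary \ref{coroptgcgcl}. This is arguably cleaner: it makes the proposition a genuine corollary of the structural results of this section rather than of an external computation, and the explicit identification of the optimal decomposition is of independent interest (it also shows, via the strict concavity of $M$ on $(1/d,1]$, that \emph{every} optimal decomposition for $\Gc(\rho_{\iso})$ has all components with $\Lam = M(F)$, not just the symmetric one).

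Two small points to tidy up. First, the twirl decomposition is a continuous ensemble, whereas the convex roof \eqref{ea:convexroof1} and Corollary \ref{coroptgcgcl} are phrased for discrete decompositions; either discretize the twirl (a finite set of unitaries implements the isotropic twirl, since its image is two-dimensional) or, more economically, note that your Jensen step forces $F_i = F$ and $\Lam(\ket{\psi_i}) = M(F)$ for every component of \emph{any} optimal discrete decomposition -- whose existence is guaranteed, as noted after \eqref{ea:convexroof} -- so the hypothesis of Corollary \ref{coroptgcgcl} is met without ever invoking the continuous ensemble. Second, the bound $\Lam(\ket{\psi_i}) \leq M(F_i)$ and the closed form for $M$ rest on the reduction to aligned Schmidt bases; this is the standard Cauchy--Schwarz argument ($|\braket{\Psi}{\psi}|^2 \leq \frac{1}{d}(\sum_i\sqrt{\mu_i})^2$ for Schmidt coefficients $\mu_i$), which deserves one explicit line since the constraint set in the definition of $M$ is not a priori restricted to such states.
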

\begin{proof}
  The isotropic states are $\rho_{\iso} = p \one /d^2 + (1-p)
  \proj{\Psi}$, with $p \in [0,1]$ and $\ket{\Psi} = \frac{1}{\sqrt d}
  \sum^d_{i=1} \ket{ii}$.  The parametrization employed in \cite{wg03}
  translates to ours as $p=\frac{d^2}{d^2-1}(1-F)$, where $F\in
  [0,1]$.  For $F \in [0,\frac{1}{d}]$, $\rho$ is separable
  \cite{wg03}, which implies $\Gfl ( \rho ) = \Gcl ( \rho ) =0$.
  Next, consider the entangled region $F\in (\frac{1}{d},1]$.  From
  \cite{wg03} (Eq.~(36), (54)) it follows that $E_{\sin^2} = \Gfc$ and
  $\Gfc ( \rho_{\iso}(F) )=1- \frac{1}{d}(\sqrt F + \sqrt{(d-1)(1-F)}
  )^2$.  From Lemma \ref{le:weakmonotonicity} it follows that $\Gfl (
  \rho_{\iso}(F) )=-\log \frac{1}{d}(\sqrt F + \sqrt{(d-1)(1-F)}
  )^2$. To compute $\Gcl ( \rho_{\iso}(F) )$, we follow the idea of
  \cite{wg03} to obtain the first and second equalities of the
  following equation:
  \begin{equation}\label{ea:Gcl=Gfl}
    \begin{split}
      \Gcl ( \rho_{\iso}(F) )
      &= \cC_{\text{conv}}
       \bigg[ - \max_{\{\mu_i\}} \Big\{ \log \mu_i \Big\vert F =
        \frac{1}{d} \Big( \sum^d_{i=1}
        \sqrt{\mu_i} \Big)^2 \Big\} \bigg] \\
      &= \cC_{\text{conv}}
      \left [ - \log \frac{1}{d}
        (\sqrt{F} + \sqrt{(d-1)(1-F)} )^2 \right] \\
      &= \cC_{\text{conv}} \left [ \Gfl ( \rho_{\iso}(F) ) \right]
      = \Gfl ( \rho_{\iso}(F) ) \, .
    \end{split}
  \end{equation}
  Here $\cC_{\text{conv}}$ is the convex hull. The last equality in
  \eqref{ea:Gcl=Gfl} follows from the fact that $ \Gfl ( \rho_{\iso}(F)
  )$ is convex. This completes the proof.
  \qed
\end{proof}
Next, we investigate maximally correlated states, because necessary
and sufficient conditions for $\Gcl ( \rho ) = \Gfl ( \rho )$ can be
derived for these states.  The maximally correlated states have been
extensively studied in terms of the entanglement of formation and
distillable entanglement \cite{rains01,dw05}. In this paper we focus
on a special type of maximally correlated states defined as follows:
\begin{definition}\label{defmaxcorr}
  Given a bipartite $d$-level system $\cH = \bCd \otimes \bCd$, let
  $\{ n_i \}_{i=0}^{r}$ with $r < d$ be a set of integers with $0 =
  n_0 < \cdots < n_r = d$, and let $\ket{\Theta_i} =
  \frac{1}{\sqrt{n_i-n_{i-1}}} \sum^{n_i}_{k=n_{i-1}+1} \ket{kk}$ be
  the $({n_i - n_{i-1}})$-level MES.  Then, $\rho = \sum^{r}_{i=1} q_i
  \proj{\Theta_i}$, with $\sum^r_{i=1} q_i =1$ and $q_i \in (0,1)$, is
  a $d \times d$ maximally correlated state.
\end{definition}
From now on, we refer to maximally correlated states as the states
defined above. The integer $r$ can be readily identified as the rank
of the maximally correlated state, i.e. $\rank \rho = r$.

\begin{lemma}\label{le:Gcl>Gfl}
  Let $\rho = \sum^{r}_{i=1} q_i \proj{\Theta_i}$ be a $d \times d$
  maximally correlated state. Then,
  \begin{enumerate}
  \item The unique optimal decomposition for $\Gc( \rho )$, up to
    overall phases, is $\{q_i, \ket{\Theta_i} \}$.
    
  \item $\Gc( \rho ) = \Gf ( \rho ) = 1 - \sum^r_{i=1}
    \frac{q_i}{n_i-n_{i-1}}$ .
    
  \item $\Gcl( \rho ) \geq \Gfl( \rho ) = - \log \big( \sum^r_{i=1}
    \frac{q_i}{n_i-n_{i-1}} \big)$ .
    
  \item $\Gcl( \rho ) = \Gfl( \rho )$ iff the $\ket{\Theta_i}$ are all
    equally entangled.

  \item $\Gcl( \rho ) = \Gfl( \rho )$ iff ${n_i-n_{i-1}}$ is the same
    for all $i=1, \ldots ,r$ .
    
  \item If $\Gcl( \rho ) = \Gfl( \rho )$, then $\Gcl (\rho)$ has the
    same optimal decompositions as $\Gc (\rho)$.
  \end{enumerate}
\end{lemma}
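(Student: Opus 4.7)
The plan is to prove items 1 and 2 by direct optimization, and then derive items 3--6 as corollaries of results already established in the paper (Lemma~\ref{le:weakmonotonicity}, Proposition~\ref{prop:Gc=Gf}, Theorem~\ref{thm:Gfl=Gcl} and Corollary~\ref{coroptgcgcl}).

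The structural observation I would exploit is that the support of $\rho = \sum_{i=1}^{r} q_i \proj{\Theta_i}$ is the $r$-dimensional subspace $V = \mathrm{span}\{\ket{\Theta_i}\}$, so any pure-state decomposition $\rho = \sum_j p_j \proj{\psi_j}$ has each component in $V$, i.e.~$\ket{\psi_j} = \sum_{i=1}^{r} \alpha_i^{(j)} \ket{\Theta_i}$ with $\sum_i |\alpha_i^{(j)}|^2 = 1$. Because the $\ket{\Theta_i}$ have uniform Schmidt coefficients $1/\sqrt{n_i - n_{i-1}}$ on disjoint index ranges, the computational-basis expansion of $\ket{\psi_j}$ is itself a Schmidt decomposition with coefficients $\alpha_i^{(j)}/\sqrt{n_i - n_{i-1}}$, so
\begin{equation*}
  \Lam (\ket{\psi_j}) = \max_{i=1, \ldots, r}
  \frac{|\alpha_i^{(j)}|^2}{n_i - n_{i-1}} \, .
\end{equation*}
Taking $\bra{\Theta_i} \rho \ket{\Theta_i}$ and using the orthogonality of the $\ket{\Theta_i}$ yields the moment constraint $\sum_j p_j |\alpha_i^{(j)}|^2 = q_i$.

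For items 1 and 2, the key step is the elementary bound $\max_i x_i \leq \sum_i x_i$ for nonnegative $x_i$, applied with $x_i = |\alpha_i^{(j)}|^2/(n_i - n_{i-1})$. Weighting by $p_j$ and swapping sums gives
\begin{equation*}
  \sum_j p_j \Lam (\ket{\psi_j}) \leq
  \sum_i \frac{1}{n_i - n_{i-1}} \sum_j p_j |\alpha_i^{(j)}|^2 =
  \sum_{i=1}^{r} \frac{q_i}{n_i - n_{i-1}} \, ,
\end{equation*}
with equality iff, for every $j$, at most one $|\alpha_i^{(j)}|^2$ is nonzero. Combined with $\sum_i |\alpha_i^{(j)}|^2 = 1$, this forces $\ket{\psi_j} = e^{\I \theta_j} \ket{\Theta_{i(j)}}$ for some $i(j)$; since $q_i > 0$ for all $i$, each $\ket{\Theta_i}$ must appear, and grouping terms with equal $i(j)$ recovers the decomposition $\{q_i, \ket{\Theta_i}\}$ up to overall phases. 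This establishes items 1 and 2, the statement about $\Gf$ then following from $\Gf \equiv \Gc$ (Proposition~\ref{prop:Gc=Gf}).

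The remaining items are bookkeeping. Item 3 combines item 2 of Lemma~\ref{le:weakmonotonicity} applied to $\Gf$ and $\Gfl$, giving $\Gfl(\rho) = -\log(1 - \Gf(\rho)) = -\log \sum_i q_i/(n_i - n_{i-1})$, with the already-known inequality~\eqref{ea:convexineq}. For items 4 and 5 I would invoke condition~3 of Theorem~\ref{thm:Gfl=Gcl}: $\Gcl(\rho) = \Gfl(\rho)$ iff there exists a decomposition jointly optimal for $\Gc$ and $\Gcl$ with equally entangled components. The forward direction uses item 1 to conclude that any such decomposition must be $\{q_i, \ket{\Theta_i}\}$ up to phases, reducing the criterion to $\Lam(\ket{\Theta_i}) = 1/(n_i - n_{i-1})$ being independent of $i$, i.e.~$n_i - n_{i-1}$ being constant; the backward direction uses Corollary~\ref{coroptgcgcl} ($1 \Rightarrow 2$). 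Item 6 follows from condition~4 of Theorem~\ref{thm:Gfl=Gcl}, which forces every optimal decomposition of $\Gcl$ to be optimal for $\Gc$ whenever $\Gcl = \Gfl$; by item 1 this forces the $\Gcl$-optimal decomposition to be $\{q_i, \ket{\Theta_i}\}$ as well.

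The main obstacle is the sharp equality analysis underlying item 1: the upper bound $\sum_j p_j \Lam(\ket{\psi_j}) \leq \sum_i q_i/(n_i - n_{i-1})$ itself is immediate, but extracting the \emph{uniqueness} of the maximizing decomposition requires carefully tracking the tightness conditions in $\max_i x_i \leq \sum_i x_i$ together with the normalization $\sum_i |\alpha_i^{(j)}|^2 = 1$ and the strict positivity of the $q_i$. Once this is handled, all subsequent items fall out of results already proved in the paper.
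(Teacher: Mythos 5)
Your proof is correct and follows essentially the same route as the paper's: the core of item~1 is the bound $\max_i x_i \leq \sum_i x_i$ applied to the squared Schmidt coefficients of each decomposition component, with the equality analysis forcing each component to be a single $\ket{\Theta_{i(j)}}$ up to phase, and items~3--6 are then read off from Lemma~\ref{le:weakmonotonicity}, \eqref{ea:convexineq}, Theorem~\ref{thm:Gfl=Gcl} and Corollary~\ref{coroptgcgcl} exactly as the paper does. The only (cosmetic) difference is that you expand the components in the orthonormal basis $\{\ket{\Theta_i}\}$ of the support and use the diagonal moment constraint $\sum_j p_j |\alpha_i^{(j)}|^2 = q_i$, whereas the paper invokes the unitary relation \eqref{unitaryconversion} between decompositions and uses column normalization of that unitary -- these are the same constraint in different notation.
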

\begin{proof}
  First we prove item 1. Let $\{P_i, \ket{\Psi_{i}} \}$ be some
  optimal decomposition of $\Gc ( \rho )$.  Then
  \begin{equation}\label{ea:Gcl>Gfl}
    1 - \Gc( \rho ) =
    \sum_i P_i \Lam (\ket{\Psi_{i}}) \geq
    \sum^r_{j=1} q_i \Lam ( \ket{\Theta_{j}} ) =
    \sum^r_{j=1} \frac{q_j}{n_j - n_{j-1}} \, .
  \end{equation}
  According to \eqref{unitaryconversion}, the relationship between the
  decompositions $\{P_i, \ket{\Psi_{i}} \}$ and $\{q_j,
  \ket{\Theta_{j}} \}$ is $\sqrt{P_i} \ket{\Psi_{i}} = \sum^{r}_{j=1}
  u_{ij} \sqrt{q_j} \ket{\Theta_j}$, where $u_{ij}$ is some unitary
  matrix.  Because of the form of the $\ket{\Theta_j}$, this
  immediately yields the Schmidt decomposition of $\ket{\Psi_i}$. The
  GM of pure bipartite states is determined by their largest Schmidt
  coefficient, so we have
  \begin{equation}\label{ea:Gcl>Gfl2}
    P_i \Lam (\ket{\Psi_{i}}) =
    \max_{ j \in \{1, \ldots ,r\} }
    \Bigg\{ \bigg| \frac{u_{ij}\sqrt{q_j}}{\sqrt{n_j - n_{j-1} } }
    \bigg|^2 \Bigg\}
    \leq
    \sum^r_{j=1} \frac{u_{ij}^{*} u_{ij} \, q_j}{n_j - n_{j-1}} \, .
  \end{equation}
  By summing over all $i$, we obtain $\sum_i P_i \Lam (\ket{\Psi_{i}})
  \leq \sum^{r}_{j=1} \frac{q_j}{n_j - n_{j-1}}$. Comparing this
  inequality to \eqref{ea:Gcl>Gfl}, we see that the inequalities must
  become equalities, and therefore $\{q_j, \ket{\Theta_j} \}$ is also
  optimal for $\Gc( \rho )$.  Since the inequality in
  \eqref{ea:Gcl>Gfl2} becomes an equality, all but one $u_{i1} ,
  \ldots , u_{ir}$ are zero. Hence, the state $\ket{\Psi_{i}}$ is
  identical to one of the states $\ket{\Theta_j}$, $j=1, \ldots, r$,
  up to an overall phase. Therefore, $\{q_i, \ket{\Theta_i} \}$ is the
  unique optimal decomposition for $\Gc( \rho )$, up to overall
  phases.

  With item 1 proved, the other items easily follow:
  \begin{itemize}
  \item Item 2 and 3: These follow directly from item 1 and
    $\Lam(\ket{\Theta_{i}}) = \frac{1}{n_i-n_{i-1}}$, together with
    Proposition \ref{prop:Gc=Gf} and \eqref{ea:convexineq},
    respectively.

  \item Item 4: If $\Gcl( \rho ) = \Gfl( \rho )$, then it follows from
    item 1 and Theorem \ref{thm:Gfl=Gcl} that the $\ket{\Theta_i}$ are
    all equally entangled. Conversely, if the $\ket{\Theta_i}$ are all
    equally entangled, then it follows from item 1 and Corollary
    \ref{coroptgcgcl} that $\Gcl( \rho ) = \Gfl( \rho )$.

  \item Item 5: Because of $\Lam(\ket{\Theta_{i}}) =
    \frac{1}{n_i-n_{i-1}}$, items 4 and 5 are equivalent.

  \item Item 6: If $\Gcl ( \rho ) = \Gfl ( \rho )$, then it follows
    from item 1 and Theorem \ref{thm:Gfl=Gcl} that every optimal
    decomposition of $\Gcl ( \rho )$ must be of the form $\{ q_i ,
    \ket{\Theta_i} \}$, up to overall phases.  Since overall phases do
    not change the value of \eqref{ea:convexroof1} or
    \eqref{ea:convexroof2}, $\Gc ( \rho )$ and $\Gcl ( \rho )$ have
    the same optimal decompositions.
    \qed
   \end{itemize}
\end{proof}
One may wonder whether the optimal decomposition for $\Gc(\rho)$ in
Lemma \ref{le:Gcl>Gfl} is also optimal for $\Gcl(\rho)$ when
$\Gcl(\rho) \neq \Gfl(\rho)$. In the following we show that this is
the case for all maximally correlated qutrit states. Only rank-2
states need to be considered, because for $d=3$ this is the only
non-trivial case.
\begin{proposition}\label{pp:twoqutritGcl}
  Let $q \in (0,1)$ and $\ket{\psi} = \frac{1}{\sqrt2} (\ket{11} +
  \ket{22} )$.  The maximally correlated two-qutrit state $\rho = q
  \proj{00} + (1-q) \proj{\psi}$ has $\Gcl(\rho) = 1-q$, with $\{ q,
  \ket{00}; 1-q, \ket{\psi} \}$ being an optimal decomposition.
\end{proposition}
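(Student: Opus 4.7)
The proof splits into matching upper and lower bounds on $\Gcl(\rho)$.

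\textbf{Upper bound.} Evaluating the convex-roof functional on the explicit decomposition $\{q,\ket{00};\,1-q,\ket{\psi}\}$: since $\Gl(\ket{00})=-\log_2 1=0$ and $\Gl(\ket{\psi})=-\log_2(1/2)=1$, one immediately obtains $\Gcl(\rho)\leq q\cdot 0+(1-q)\cdot 1=1-q$.

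\textbf{Lower bound.} I would take an arbitrary decomposition $\{P_i,\ket{\Psi_i}\}$ of $\rho$ and apply the HJW identity \eqref{unitaryconversion} with $\{q,\ket{00};\,1-q,\ket{\psi}\}$ as the reference decomposition. This yields an isometry $(u_{ij})$ with $\sum_i u_{ij}^{*}u_{ik}=\delta_{jk}$ such that
\begin{equation*}
\sqrt{P_i}\ket{\Psi_i} \;=\; u_{i0}\sqrt{q}\,\ket{00}\;+\;u_{i1}\sqrt{\tfrac{1-q}{2}}\bigl(\ket{11}+\ket{22}\bigr).
\end{equation*}
Because the three kets $\ket{00},\ket{11},\ket{22}$ live on disjoint Schmidt labels, this expansion \emph{is} the Schmidt form of $\ket{\Psi_i}$ in the computational basis. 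Writing $a_i:=|u_{i0}|^2 q$ and $b_i:=|u_{i1}|^2(1-q)/2$, the squared Schmidt coefficients of $\ket{\Psi_i}$ are $a_i/P_i,\,b_i/P_i,\,b_i/P_i$, so $\Lam(\ket{\Psi_i})=\max(a_i,b_i)/P_i$ and $P_i=a_i+2b_i$. The isometry constraints give the marginals $\sum_i a_i=q$ and $\sum_i b_i=(1-q)/2$.

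The lower bound $\sum_i P_i\Gl(\ket{\Psi_i})\geq 1-q=2\sum_i b_i$ therefore reduces to the pointwise inequality
\begin{equation*}
(a+2b)\,\log_2\frac{a+2b}{\max(a,b)} \;\geq\; 2b \qquad\text{for all } a,b\geq 0,
\end{equation*}
which I would verify by a case split. If $a\geq b$, substituting $t:=2b/a\in[0,2]$ reduces the claim to $h(t):=(1+t)\log_2(1+t)-t\geq 0$, which follows from $h(0)=0$ together with $h'(t)=\log_2(1+t)+\log_2 e-1>0$ on $[0,\infty)$. If $a<b$, substituting $s:=a/b\in[0,1)$ reduces it to $(s+2)\log_2(s+2)-2\geq 0$, which follows similarly from the endpoint value at $s=0$ and positivity of the derivative $\log_2(s+2)+\log_2 e$. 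Summing the pointwise inequality over $i$ and using $\sum_i b_i=(1-q)/2$ yields $\Gcl(\rho)\geq 1-q$, matching the upper bound.

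The main obstacle is the pointwise inequality: it is elementary but not immediately transparent because the $\max(a,b)$ in the denominator forces two cases. Everything else is routine — the upper bound is a single line, and the HJW identity turns the space of decompositions into a concrete two-parameter family whose Schmidt structure can be read off directly thanks to the special form of the maximally correlated state.
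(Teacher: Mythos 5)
Your proof is correct, and the lower-bound argument takes a genuinely different route from the paper's. Both proofs start identically: the trivial upper bound from the decomposition $\{q,\ket{00};1-q,\ket{\psi}\}$, then the HJW identity to parametrize an arbitrary decomposition by weights $(a_i,b_i)$ (the paper uses $x_i=a_i$, $y_i=2b_i$) with $\Lam(\ket{\Psi_i})=\max(a_i,b_i)/P_i$ read off from the Schmidt form. The divergence is in how the resulting sum is bounded below. The paper partitions the index set according to which entry achieves the maximum, aggregates each block using Lemma \ref{le:inequality} (the appendix result that $\sum_i(x_i+y_i)\log[n(1+y_i/x_i)]$ is minimized when all ratios $y_i/x_i$ coincide), and finishes with the global estimate $\inf_{x>0}(1+\tfrac1x)\log_2(1+x)=1/\ln2$. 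You instead prove the \emph{termwise} inequality $(a+2b)\log_2\frac{a+2b}{\max(a,b)}\geq 2b$ and simply sum it against the marginal constraint $\sum_i b_i=(1-q)/2$; your two-case calculus verification of that inequality is sound (equality holds at $a=0$ and at $b=0$, which is exactly why the claimed decomposition is optimal). Your argument is more elementary and self-contained for this proposition, since it avoids the appendix lemma entirely; the price is that a pointwise bound of this kind cannot be pushed to the higher-dimensional Proposition \ref{pp:twoquditGcl}, where the optimal decompositions genuinely mix the two reference states and the minimum depends on the global distribution of the ratios --- which is the reason the paper routes the qutrit case through the same aggregation machinery it needs later.
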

\begin{proof}
  Let $\{P_i, \ket{\Psi_{i}}\}$ be some optimal decomposition of
  $\rho$ for $\Gcl ( \rho )$.  According to \eqref{unitaryconversion},
  there exists a unitary $u_{ij}$, so that $\sqrt{P_i} \ket{\Psi_{i}}
  = u_{i1} \sqrt{q} \ket{00}+u_{i2} \sqrt{1-q} \ket{\psi}$ for each
  $i$.  Setting $x_i := \abs{u_{i1}\sqrt{q}}^2$ and $y_i :=
  \abs{u_{i2}\sqrt{1-q}}^2$, we have $P_i = x_i + y_i$ and $\Lam
  (\ket{\Psi_{i}}) = \max \{ \frac{x_i}{P_i} , \frac{y_i}{2 P_i} \}$.
  Therefore,
  \begin{equation}
    \Gcl(\rho) = - \sum^{r}_{i=1} P_i \log \left[ \max
    \Big\{ \frac{x_i}{P_i} , \frac{y_i}{2 P_i} \Big\} \right] \, .
  \end{equation}
  Without loss of generality we assume $2 x_i \geq y_i$ for
  $i\in[1,k]$ and $2 x_i \leq y_i$ for $i \in [k+1, r]$.  We define $Y
  := \sum_{i=1}^{k} y_i \geq 0$ and $X := \sum_{i=1}^{k} x_i > 0$.
  Then
  \begin{equation*}
    \begin{split}
      \Gcl(\rho) &= \sum^k_{i=1} P_i \log \Big( \frac{P_i}{x_i} \Big) +
      \sum^r_{j=k+1} P_j \log \Big( \frac{2 P_j}{y_j} \Big) \\
      &=
      \sum^k_{i=1} (x_i + y_i) \log \Big( 1 + \frac{y_i}{x_i} \Big)
      +
      \sum^r_{j=k+1} (x_j + y_j) \log \Big(2 + \frac{2 x_j}{y_j} \Big) \\
      &\geq
      Y \Big[ ( 1 + \tfrac{X}{Y}) \log \big( 1 + \tfrac{Y}{X} \big) \Big]
      + \sum^r_{j=k+1} y_j
      \geq \frac{Y}{\ln 2} + \sum^r_{j=k+1} y_j
      \geq \sum_{i=1}^{r} y_i = 1-q \, .
    \end{split}
  \end{equation*}
  The first inequality follows by applying Lemma \ref{le:inequality}
  from Appendix \ref{ap:minimum} to the first sum, and using $x_j \geq
  0$, $y_j > 0$ in the second sum.  The application of Lemma
  \ref{le:inequality} is possible despite the restriction $2 x_i \geq
  y_i$, because the minimum in \eqref{ea:inequality} cannot be smaller
  with additional restrictions than without. The second inequality
  follows from $\inf_{x > 0} (1 + \frac{1}{x}) \log_2 (1+x) = 1/\ln2$.

  On the other hand, the decomposition $\{ q, \ket{00}; 1-q,
  \ket{\psi} \}$ yields $\Gcl(\rho) \leq 1-q$. This completes the
  proof.
  \qed
\end{proof}
One may conjecture that this proposition can be generalized to higher
dimensions, with $\ket{\psi} = \frac{1}{\sqrt n} \sum_{i=1}^{n}
\ket{ii}$ being a pure MES of any dimension, i.e.~that $\{ q ,
\ket{00} ; 1-q , \ket{\psi} \}$ is an optimal decomposition of $\rho =
q \proj{00} + (1-q) \proj{\psi}$ for $\Gcl ( \rho )$, yielding $\Gcl (
\rho ) = (1-q) \log n$.  From Proposition \ref{pp:twoquditGcl} it will
follow that this is not the case for $n>2$.  We will see that --
compared to the qutrit case -- the optimal decomposition for $\Gcl (
\rho )$ of higher-dimensional maximally correlated states is more
complex, even in the comparatively easy rank-2 case.  In the
following, $\E$ denotes the base of the natural logarithm.
\begin{proposition}\label{pp:twoquditGcl}
  Let $m, n \in \mathbb{N}$ with $\frac{m}{n} \leq 1$ and $q \in
  (0,1)$ be constants that define the rank-two maximally correlated
  qudit state $\rho = q \proj{\psi_m} + (1-q) \proj{\psi_n}$, with
  $\ket{\psi_m} = \frac{1}{\sqrt{m}} \sum^{m}_{i=1} \ket{ii}$ and
  $\ket{\psi_n} = \frac{1}{\sqrt{n}} \sum^{m+n}_{j=m+1} \ket{jj}$.
  Depending on the constants $m$, $n$ and $q$, an optimal
  decomposition and the value of $\Gcl (\rho)$ are
  \begin{itemize}
  \item $\frac{m}{n} \geq \frac{1}{\E}$: $\{ q , \ket{\psi_m} ; 1-q ,
    \ket{\psi_n} \}$, yielding $\Gcl(\rho) = q \log m + (1-q) \log n$.

  \item $\frac{m}{n} < \frac{1}{\E}$ and $q \geq \frac{\E m}{n}$: $\{
    \frac{1}{2}, \sqrt{q} \ket{\psi_m} \pm \sqrt{1-q} \ket{\psi_n}
    \}$, yielding $\Gcl(\rho) = \log (\frac{m}{q})$.

  \item $\frac{m}{n} < \frac{1}{\E}$ and $q < \frac{\E m}{n}$: $\{
    1-\frac{nq}{\E m} , \ket{\psi_n} ; \frac{nq}{2 \E m} ,
    \sqrt{\frac{\E m}{n}} \ket{\psi_m} \pm \sqrt{1-\frac{\E m}{n}}
    \ket{\psi_n} \}$, yielding $\Gcl(\rho) = \log n - q \tfrac{n \log
      \E}{m \E}$.
  \end{itemize}
\end{proposition}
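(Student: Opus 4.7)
The plan is to reduce $\Gcl(\rho)$ to evaluating the lower convex envelope of a one-variable function, and then extract the three cases from a single tangent-line construction. First I would parametrize all pure decompositions. Since $\rank\rho=2$ with range $\mathrm{span}(\ket{\psi_m},\ket{\psi_n})$, each component of $\rho=\sum_i P_i\proj{\Psi_i}$ has the form $\ket{\Psi_i}=\sqrt{\theta_i}\ket{\psi_m}+\E^{\I\phi_i}\sqrt{1-\theta_i}\ket{\psi_n}$ with $\theta_i\in[0,1]$. Because $\ket{\psi_m}$ and $\ket{\psi_n}$ are supported on disjoint computational basis strings, this expression is already a Schmidt decomposition, so $\Lam(\ket{\Psi_i})=\max\{\theta_i/m,(1-\theta_i)/n\}$ and $\Gl(\ket{\Psi_i})=g(\theta_i):=-\log\max\{\theta_i/m,(1-\theta_i)/n\}$.

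Next, matching the diagonals of $\rho$ forces $\sum_i P_i=1$ and $\sum_i P_i\theta_i=q$, while the off-diagonal constraint $\sum_i P_i\sqrt{\theta_i(1-\theta_i)}\E^{-\I\phi_i}=0$ is not restrictive: splitting each component with $\theta_i\in(0,1)$ into two equal-weight copies at phases $\phi_i$ and $\phi_i+\pi$ kills every off-diagonal without altering $\sum_i P_i g(\theta_i)$. Hence $\Gcl(\rho)=\overline g(q)$, where $\overline g$ is the lower convex envelope of $g$ on $[0,1]$. The function $g$ is convex and increasing on $[0,\theta^*]$ and convex and decreasing on $[\theta^*,1]$, with $\theta^*:=m/(m+n)$, peak $g(\theta^*)=\log(m+n)$, and endpoints $g(0)=\log n$, $g(1)=\log m$. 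Seeking a line through $(0,\log n)$ tangent to the right branch $\log(m/\theta)$ of $g$ yields the tangency condition $\log(m/(n\theta_R))=-\log\E$, hence $\theta_R=\E m/n$, which lies in $[\theta^*,1]$ iff $m/n\le 1/\E$. For $m/n\ge 1/\E$ no interior tangent exists and the chord from $(0,\log n)$ to $(1,\log m)$ already lies below $g$, yielding $\overline g(q)=q\log m+(1-q)\log n$. For $m/n<1/\E$ the envelope equals the tangent line on $[0,\theta_R]$ and coincides with $g$ on $[\theta_R,1]$, giving $\overline g(q)=\log(m/q)$ when $q\ge\E m/n$ and $\overline g(q)=\log n-q\,n\log\E/(\E m)$ when $q<\E m/n$.

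Finally, I would confirm attainability by plugging the stated decompositions into $\sum_i P_i g(\theta_i)$: each uses precisely the $\theta_i$ at which the envelope touches $g$ (the endpoints $\theta\in\{0,1\}$ in Case~1, the single value $\theta=q$ in Case~2, and $\theta\in\{0,\theta_R\}$ in Case~3), and the $\pm$ sign structure automatically cancels the off-diagonals. The main obstacle is the envelope computation itself, namely rigorously showing that the proposed piecewise-affine function is the largest convex minorant of $g$. The cleanest route is to exhibit in each case a concrete supporting affine function $\ell_q(\theta)\le g(\theta)$ attaining equality at $q$; verifying $\ell_q\le g$ requires tracking the sign of $g-\ell_q$ on each of the two branches of $g$ via a monotonicity argument on its derivative, and the threshold $m/n=1/\E$ appears naturally as the sign of $(d/d\theta)[g(\theta)-\text{chord}(\theta)]$ at $\theta=1$.
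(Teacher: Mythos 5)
Your proposal is correct, and it takes a genuinely different route from the paper. The paper relates an arbitrary decomposition to the spectral one via a unitary, sets $x_i=\abs{u_{i1}\sqrt q}^2$, $y_i=\abs{u_{i2}\sqrt{1-q}}^2$, groups the indices by which of $x_i/m$, $y_i/n$ dominates, aggregates each group with a log-sum-type inequality (Lemma \ref{le:inequality}), and is left minimizing a two-variable function $f(h,s)$ over a rectangle, which it does by a boundary analysis (Lemma \ref{le:fhs}); it also needs a separate step ruling out $nx_i=my_i$ for all $i$. Your reduction is cleaner: since every $\ket{\Psi_i}$ lies in $\mathrm{span}(\ket{\psi_m},\ket{\psi_n})$ and the two blocks are disjoint in the computational basis, each component is already in Schmidt form with $\Gl(\ket{\Psi_i})=g(\theta_i)=-\log\max\{\theta_i/m,(1-\theta_i)/n\}$, the phase-doubling trick correctly shows the off-diagonal constraint is vacuous, and hence $\Gcl(\rho)$ is exactly the lower convex envelope of the single-variable function $g$ evaluated at $q$. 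This makes the threshold $\frac{m}{n}=\frac{1}{\E}$ (existence of an interior tangency point $\theta_R=\E m/n$ on the right branch) and the structure of the optimal decompositions (the support points where the envelope touches $g$) completely transparent, and it subsumes the paper's exclusion of the peak $\theta^{*}=m/(m+n)$ automatically. The one piece you defer --- verifying that your piecewise-affine candidate is a convex minorant of $g$ --- goes through exactly as you sketch: on the left branch the chord/tangent inequality is immediate, and on the right branch the function $u(\theta)=g(\theta)-\ell(\theta)$ is convex with $u$ and $u'$ controlled at the touching point, so the sign condition at $\theta=1$ (equivalently $\theta_R\lessgtr 1$) decides between the chord and the tangent-plus-branch envelope. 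Your approach buys a shorter, more conceptual proof and a geometric explanation of why three regimes appear; the paper's approach is more computational but stays entirely within the $u_{ij}$-parametrization it uses elsewhere (e.g.\ Proposition \ref{pp:twoqutritGcl}).
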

\begin{proof}
  The above decompositions provide trivial upper bounds (note that in
  the second case $\Lam (\sqrt{q} \ket{\psi_m} \pm \sqrt{1-q}
  \ket{\psi_n}) = \frac{q}{m}$ follows from $\frac{q}{m} > \frac{q
    (1-q)}{m} \geq \frac{\E (1-q)}{n} > \frac{1-q}{n}$, and in the
  third case $\Lam (\sqrt{\frac{\E m}{n}} \ket{\psi_m} \pm \sqrt{1-
    \frac{\E m}{n}} \ket{\psi_n}) = \frac{\E}{n}$ follows from
  $\frac{e}{n} > \frac{1}{n} \geq \frac{1}{n} (1 - \frac{\E m}{n})$).
  Below we show that these are also lower bounds.

  Let $\{P_i, \ket{\Psi_{i}}\}$ be some optimal decomposition for
  $\Gcl( \rho )$.  According to \eqref{unitaryconversion}, there
  exists a unitary $u_{ij}$, so that $\sqrt{P_i} \ket{\Psi_{i}} =
  u_{i1}\sqrt{q} \ket{\psi_m} + u_{i2} \sqrt{1-q} \ket{\psi_n}$ for
  each $i$.  Setting $x_i := \abs{u_{i1}\sqrt{q}}^2$, and $y_i :=
  \abs{u_{i2}\sqrt{1-q}}^2$, we have $P_i = x_i + y_i$ and $\Lam
  (\ket{\Psi_{i}}) = \max \{ \frac{x_i}{m P_i} , \frac{y_i}{n P_i}
  \}$.  Therefore,
  \begin{equation*}
    \Gcl(\rho) = - \sum^{r}_{i=1} P_i \log \left[ \max
      \Big\{ \frac{x_i}{m P_i} , \frac{y_i}{n P_i} \Big\} \right] \, .
  \end{equation*}
  First, we rule out $n x_i = m y_i \:\: \forall i$ by showing that
  $\Gcl(\rho) = \sum_i P_i \log ( m P_i / x_i ) = \sum_i P_i \log ( m
  + n ) = \log (m + n)$ surpasses the upper bounds outlined above: For
  all $\frac{m}{n} \leq 1$ we have $\log (m + n) > \log n \geq q \log
  m + (1-q) \log n$, as well as $\log (m + n) > \log n \geq \log n - q
  \tfrac{n \log \E}{m \E}$.  Furthermore, for $\frac{m}{n} <
  \frac{1}{\E}$ and $q \geq \frac{\E m}{n}$ we have $\log (m + n) >
  \log n > \log (\frac{m n}{m \E}) \geq \log (\frac{m}{q})$.  In the
  following we therefore assume that $n x_i \neq m y_i$ holds for at
  least one $i \in [1,r]$.

  Without loss of generality we assume $n x_i \geq m y_i$ for
  $i\in[1,k]$ and $n x_i \leq m y_i$ for $i\in[k+1,r]$.  We define $Y
  := \sum_{i=1}^{k} y_i \geq 0$, $X := \sum_{i=1}^{k} x_i > 0$,
  $\widetilde{Y} := \sum_{i=k+1}^{r} y_i > 0$, and $\widetilde{X} :=
  \sum_{i=k+1}^{r} x_i \geq 0$, as well as $h := \frac{Y}{X} \geq 0$
  and $s := \frac{\widetilde{X}}{\widetilde{Y}} \geq 0$.  Note that $0
  \leq hs < 1$, because of $hs = \frac{Y}{X}
  \frac{\widetilde{X}}{\widetilde{Y}} < \frac{n}{m} \frac{m}{n} = 1$
  (the inequality is strict, because $n x_i > m y_i$ or $n x_i < m
  y_i$ holds for at least one $i$).  Using $Y + \widetilde{Y} = 1-q$
  and $X + \widetilde{X} = q$, it is easy to verify that $X = \frac{q
    - s(1-q)}{1 - hs}$ and $\widetilde{Y} = \frac{(1-q) - hq}{1 -
    hs}$.  From $X > 0$ and $\widetilde{Y} > 0$, it then follows that
  $s \in [0, \frac{q}{1-q}]$ and $h \in [0,\frac{1-q}{q}]$,
  respectively.
  \begin{equation}\label{eq:mN1}
    \begin{split}
      \Gcl(\rho) &=
      \sum^k_{i=1} P_i \log \Big( \frac{m P_i}{x_i} \Big) +
      \sum^r_{i=k+1} P_i \log \Big( \frac{n P_i}{y_i} \Big) \\
      &= \sum^{k}_{i=1} ( x_i + y_i )
      \log \Big[ m \Big( 1 +  \frac{y_i}{x_i} \Big) \Big] +
      \sum^r_{i=k+1} ( x_i + y_i )
      \log \Big[ n \Big( 1 + \frac{x_i}{y_i} \Big) \Big] \\
      &\geq X (1 + h)
      \log \big[ m ( 1 + h ) \big] +
      \widetilde{Y} ( 1 + s ) \log \big[ n ( 1 + s ) \big] \, ,
    \end{split}
  \end{equation}
  where the inequality follows from applying Lemma \ref{le:inequality}
  to each of the two sums.  In Lemma \ref{le:fhs} of Appendix
  \ref{ap:minimum} we show that the last line of \eqref{eq:mN1} is an
  upper bound to the values outlined in the proposition.  Hence, the
  upper and lower bounds coincide. This completes the proof.
  \qed
\end{proof}
Note that for qutrits ($m + n = 3$) Proposition \ref{pp:twoquditGcl}
simplifies to Proposition \ref{pp:twoqutritGcl}, because $\frac{1}{\E}
< \frac{1}{2} \leq \frac{m}{n}$.  From the symmetry of $\rho = q
\proj{\psi_m} + (1-q) \proj{\psi_n}$, it can be seen that Proposition
\ref{pp:twoquditGcl} can be extended to the case $\frac{m}{n} > 1$
simply by swapping $q$ and $1-q$.  Importantly, Proposition
\ref{pp:twoquditGcl} yields necessary and sufficient conditions for
rank-2 maximally correlated qudit states to have common optimal
decompositions for $\Gc (\rho)$ and $\Gcl (\rho)$:
\begin{corollary}\label{exclusiv}
  Rank-2 maximally correlated states $\rho = q \proj{\psi_m} + (1-q)
  \proj{\psi_n}$ have common optimal decompositions for $\Gc ( \rho )$
  and $\Gcl ( \rho )$ iff $\frac{1}{\E} \leq \frac{m}{n} \leq \E$.
\end{corollary}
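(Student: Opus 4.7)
The plan is to leverage the uniqueness result of Lemma~\ref{le:Gcl>Gfl}, item~1, which identifies $\{q,\ket{\psi_m};\, 1-q,\ket{\psi_n}\}$ as the unique optimal decomposition for $\Gc(\rho)$, up to overall phases. Hence $\Gc(\rho)$ and $\Gcl(\rho)$ share a common optimal decomposition if and only if this canonical decomposition is also optimal for $\Gcl(\rho)$. Its value under the $\Gcl$ functional is $q \log m + (1-q) \log n$, so the problem reduces to deciding for which parameters $\Gcl(\rho) = q \log m + (1-q) \log n$.

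For the sufficient direction, first assume without loss of generality that $m \leq n$, so Proposition~\ref{pp:twoquditGcl} applies directly. Its first case yields $\Gcl(\rho) = q \log m + (1-q) \log n$ precisely when $m/n \geq 1/\E$, so the canonical decomposition is optimal there. The symmetry of $\rho$ under the joint swap $q \leftrightarrow 1-q$, $m \leftrightarrow n$ (noted after Proposition~\ref{pp:twoquditGcl}) extends this to $1 \leq m/n \leq \E$. Combined, this gives one direction of the equivalence.

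For the necessary direction, suppose $m/n < 1/\E$ (the case $m/n > \E$ is again symmetric). I must show that both sub-case values produced by Proposition~\ref{pp:twoquditGcl} are strictly smaller than $q \log m + (1-q) \log n$. When $q < \E m/n$, the required strict inequality simplifies, after cancelling $q$, to $\log(n/m) < \bigl(n/(m \E)\bigr) \log \E$; setting $t := n/m > \E$ this becomes $(\log t)/t < (\log \E)/\E$, which holds strictly for all $t > \E$ because the function $x \mapsto (\log x)/x$ has its unique maximum at $x = \E$. When $q \geq \E m/n$, the inequality $\log(m/q) < q \log m + (1-q) \log n$ rearranges to $q \, (n/m)^{1-q} > 1$.

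The main obstacle is this last inequality, since it couples $q$ and $t := n/m$ nontrivially. I would study $h(q) := q \, t^{1-q}$ on $[\E/t, 1)$ and show it stays strictly above $1$ throughout. Its derivative $h'(q) = t^{1-q}(1 - q \ln t)$ has a unique zero at $q^{*} = 1/\ln t$, which lies strictly inside the interval because $\E/t < 1/\ln t < 1$ for $t > \E$ (via the elementary fact $t > \E \ln t$ for $t > \E$, which follows from a single derivative check). Hence $h$ is unimodal with maximum at $q^{*}$ and endpoint values $h(\E/t) = \E \, t^{-\E/t}$ and $h(1) = 1$; the former exceeds $1$ for $t > \E$, again using $t > \E \ln t$, so $h(q) > 1$ strictly on $[\E/t, 1)$, finishing the proof.
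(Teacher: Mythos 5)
Your proof is correct, and its skeleton matches the paper's: both reduce the question via item~1 of Lemma~\ref{le:Gcl>Gfl} (the $\Gc$-optimal decomposition is unique up to phases, so a common optimal decomposition exists iff the canonical one is also $\Gcl$-optimal, i.e.\ iff $\Gcl(\rho)=q\log m+(1-q)\log n$), handle $m/n\ge 1/\E$ by reading off the first case of Proposition~\ref{pp:twoquditGcl}, and dispose of $m/n>1$ by the swap symmetry. Where you diverge is the necessity direction for $m/n<1/\E$: the paper points back into the \emph{proof} of Lemma~\ref{le:fhs}, observing that $f_h'(0)=q\log(\E m/n)<0$ there, so the minimum of $f(h,s)$ is strictly below $f(0,0)=q\log m+(1-q)\log n$. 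You instead compare the two explicit closed-form values of $\Gcl(\rho)$ from Proposition~\ref{pp:twoquditGcl} directly against $q\log m+(1-q)\log n$, reducing the third case to $(\log t)/t<(\log\E)/\E$ for $t=n/m>\E$ and the second case to $q\,t^{1-q}>1$ on $[\E/t,1)$, which you settle by a clean unimodality argument for $h(q)=q\,t^{1-q}$ (interior critical point at $q^*=1/\ln t$, endpoint values $\E\,t^{-\E/t}>1$ and $h(1)=1$, both via $t>\E\ln t$). Your route is slightly longer but self-contained at the level of the \emph{statement} of Proposition~\ref{pp:twoquditGcl}, whereas the paper's one-line derivative argument is shorter but requires the reader to re-enter the proof of Lemma~\ref{le:fhs}. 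Both are valid; all of your intermediate inequalities check out.
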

\begin{proof}
  According to Lemma \ref{le:Gcl>Gfl}, the unique optimal
  decomposition for $\Gc (\rho)$ is $\{ q , \ket{\psi_m} ; 1-q ,
  \ket{\psi_n} \}$, up to overall phases.  For symmetry reasons it
  suffices to consider $\frac{m}{n} \leq 1$.  For $\frac{m}{n} \geq
  \frac{1}{\E}$ the statement immediately follows from Proposition
  \ref{pp:twoquditGcl}.  For $\frac{m}{n} < \frac{1}{\E}$ it is seen
  in the proof of Lemma \ref{le:fhs} that for all $q \in (0,1)$ the
  minimum of $f (h,s)$ is strictly smaller than $f (0,0) = q \log m +
  (1-q) \log n$. Therefore, $\{ q , \ket{\psi_m} ; 1-q , \ket{\psi_n}
  \}$ cannot be an optimal decomposition of $\Gcl (\rho)$ for
  $\frac{m}{n} < \frac{1}{\E}$.
  \qed
\end{proof}
Let us sum up the preceding findings.  Theorem \ref{thm:Gfl=Gcl} gives
necessary and sufficient conditions for $\Gcl(\rho)=\Gfl(\rho)$.
Apart from the trivial classes of pure states and separable states,
this includes all two-qubit states (Proposition \ref{pr:fctwoqubits}),
and all isotropic states (Proposition \ref{pr:isotropic}).  Further
examples are the 4-qubit Smolin state, and multiqubit D\"{u}r states,
for whom $\Gcl ( \rho ) = - \log_2 ( 1 - \Gc ( \rho ))$ can be easily
verified from \cite{wag04}.

According to Lemma \ref{le:Gcl>Gfl}, $\Gcl( \rho ) > \Gfl( \rho )$
holds for maximally correlated states iff there are two states
$\ket{\Theta_i}$ and $\ket{\Theta_j}$ in the optimal decomposition of
$\Gc ( \rho )$ that are not equally entangled. This is the case for
the two-qutrit states of Proposition \ref{pp:twoqutritGcl}. Despite
this, Proposition \ref{pp:twoqutritGcl} shows that $\Gc ( \rho )$ and
$\Gcl ( \rho )$ still have common optimal decompositions, i.e.~item
3(a) of Theorem \ref{thm:Gfl=Gcl} can be true, while item 1 is
false. In this case, the first inequality of \eqref{ea:Gfl=Gcl} is an
equality, while the second inequality is strict.

Optimal decompositions of two-qutrit isotropic states were found in
\cite{tv00} for the entanglement of formation, a convex roof-based
entanglement measure.  In all of these optimal decompositions
\emph{some} of the pure states are not equally entangled (although it
is unknown whether \cite{tv00} exhausts all optimal decompositions for
two-qutrit isotropic states).  Using Lemma \ref{le:Gcl>Gfl}, one can
easily construct a state $\rho$ in whose optimal decomposition $\{P_i,
\ket{\Psi_{i}}\}$ for $\Gc(\rho)$ \emph{any} two states $\ket{\Psi_i}$
and $\ket{\Psi_j}$ are not equally entangled. In some sense, this is a
stronger result than the one of \cite{tv00}.

Next, we address the question whether $\Gcl$ has the same ordering as
$\Gfc$ or $\Gfl$. Because of $\Gfc \cong \Gfl$, the statement $\Gcl
\cong \Gfc$ is equivalent to $\Gcl \cong \Gfl$. The two-qubit case is
trivial, because of $\Gcl \equiv \Gfl$ (cf.~Proposition
\ref{pr:fctwoqubits}), so we need to consider higher-dimensional
systems.
\begin{corollary}\label{ineqorder}
  In general, $\Gfc$ and $\Gcl$ do not have the same ordering ($\Gfc
  \ncong \Gcl$).  Equivalently, $\Gfl$ and $\Gcl$ do not have the same
  ordering ($\Gfl \ncong \Gcl$).
\end{corollary}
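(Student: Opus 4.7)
The plan is to exhibit two concrete bipartite states $\rho_1 , \rho_2$ with $\Gfl ( \rho_1 ) > \Gfl ( \rho_2 )$ while $\Gcl ( \rho_1 ) < \Gcl ( \rho_2 )$; since the preceding discussion already establishes $\Gfc \cong \Gfl$, any such pair simultaneously settles both formulations of the corollary. My strategy is to exploit the strict gap $\Gcl ( \rho ) > \Gfl ( \rho )$ opened up in Proposition~\ref{pp:twoquditGcl} for rank-two maximally correlated qudit states, paired with the trivial coincidence $\Gcl = \Gfl$ on pure states, and then to sandwich a pure-state value strictly between $\Gfl ( \rho )$ and $\Gcl ( \rho )$ for a carefully chosen $\rho$.

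For the mixed partner I would take $\rho := ( \E / n ) \proj{\psi_1} + ( 1 - \E / n ) \proj{\psi_n}$ in $\bC^{n+1} \otimes \bC^{n+1}$, corresponding to the choices $m = 1$ and $q = \E m / n$ in Proposition~\ref{pp:twoquditGcl}, so that $\rho$ sits exactly on the boundary between its cases~2 and~3. That proposition then yields $\Gcl ( \rho ) = \log ( n / \E )$, while item~3 of Lemma~\ref{le:Gcl>Gfl} gives $\Gfl ( \rho ) = - \log ( \E / n + ( 1 - \E / n ) / n ) = \log n - \log ( \E + 1 - \E / n )$. Subtracting produces $\Gcl ( \rho ) - \Gfl ( \rho ) = \log ( 1 + 1 / \E - 1 / n )$, which is strictly positive for every $n \geq 3$ and tends to $\log ( 1 + 1 / \E ) > 0$ as $n \to \infty$.

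For the pure partner I would pick a maximally entangled $\ket{\Phi_k} := \frac{1}{\sqrt{k}} \sum_{i=1}^{k} \ket{ii}$, embedded in the same $\bC^{n+1} \otimes \bC^{n+1}$, for an integer $k$ satisfying $n / ( \E + 1 - \E / n ) < k < n / \E$. On this pure state both $\Gfl$ and $\Gcl$ reduce to $\Gl ( \ket{\Phi_k} ) = \log k$, so the bracketing yields $\Gfl ( \ket{\Phi_k} ) > \Gfl ( \rho )$ together with $\Gcl ( \ket{\Phi_k} ) < \Gcl ( \rho )$, which is exactly the desired order reversal. The only real obstacle is the arithmetic check that the open interval $( n / ( \E + 1 - \E / n ) , n / \E )$ contains an integer; since its width behaves like $n / ( \E ( \E + 1 ) )$ for large $n$, this is automatic for all sufficiently large $n$ (e.g.\ $n = 11$ already admits $k = 4$), so no conceptual difficulty is expected beyond this routine verification.
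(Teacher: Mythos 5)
Your proposal is correct, but it takes a genuinely different route from the paper's. The paper's proof is more elementary: it exhibits two genuinely mixed six-level maximally correlated states, $\rho = \tfrac{1}{2}\proj{\Psi_{123}} + \tfrac{1}{2}\proj{\Psi_{456}}$ and $\sigma = \tfrac{1}{3}\proj{\Psi_{12}} + \tfrac{2}{3}\proj{\Psi_{3456}}$, chosen so that item 2 of Lemma \ref{le:Gcl>Gfl} gives $\Gfc(\rho) = \Gfc(\sigma) = \tfrac{2}{3}$, while item 4 forces $\Gcl(\rho) = \Gfl(\rho) = \log 3$ (equally entangled components) but $\Gcl(\sigma) > \Gfl(\sigma) = \log 3$ (unequally entangled components); equality in one measure combined with strict inequality in the other already violates the paper's definition of same ordering, so no exact values of $\Gcl(\sigma)$ are needed. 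Your construction instead invokes the full strength of Proposition \ref{pp:twoquditGcl} (case 2, at the boundary $q = \E m/n$ with $m=1$) to pin down $\Gcl(\rho) = \log(n/\E)$ exactly, and then sandwiches the pure-state value $\log k$ of a maximally entangled $\ket{\Phi_k}$ strictly between $\Gfl(\rho)$ and $\Gcl(\rho)$. This is heavier machinery --- the exact $\Gcl$ value ultimately rests on Lemma \ref{le:fhs} --- but it buys a genuine strict reversal of both inequalities, with one of the two witnesses pure; it is essentially the construction the authors themselves sketch in the remark immediately following the corollary, with your pure MES playing the role of their isotropic state. Your arithmetic checks out: $\Gcl(\rho) - \Gfl(\rho) = \log\bigl(1 + \tfrac{1}{\E} - \tfrac{1}{n}\bigr) > 0$ for all $n \geq 3$, and the interval $\bigl(n/(\E + 1 - \E/n),\, n/\E\bigr)$ already contains an integer for $n = 6$ (namely $k = 2$), and for all sufficiently large $n$ since its width grows like $n/(\E(\E+1))$.
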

\begin{proof}
  A simple counterexample are the two maximally correlated six-level
  states $\rho = \frac{1}{2} \proj{\Psi_{123}} + \frac{1}{2}
  \proj{\Psi_{456}}$ and $\sigma = \frac{1}{3} \proj{\Psi_{12}} +
  \frac{2}{3} \proj{\Psi_{3456}}$, where $\ket{\Psi_{a \ldots z}} :=
  \frac{1}{\sqrt{z-a+1}} (\ket{aa} + \ldots + \ket{zz})$.  From items
  2 and 4 of Lemma \ref{le:Gcl>Gfl} it follows that $\Gfc ( \rho ) =
  \Gfc ( \sigma ) = \frac{2}{3}$, but $\Gcl ( \rho ) < \Gcl ( \sigma
  )$.
  \qed
\end{proof}
We remark that Corollary \ref{ineqorder} can be easily verified for a
much wider range of systems, e.g.~all bipartite $d$-level systems with
$d \geq 4$, by considering any rank-2 maximally correlated state
$\rho$ belonging to the second class outlined in Proposition
\ref{pp:twoquditGcl}, together with a suitably chosen isotropic state
$\sigma$, yielding $\Gfc ( \rho ) < \Gfc ( \sigma )$ and $\Gcl ( \rho
) > \Gcl ( \sigma )$.

\subsection{Comparison between $\Gm / \Gml$ and $\Gc / \Gcl$}

In contrast to the convex roof-based extensions, $\Gm$ and $\Gml$ are
demonstrably not convex, and they attain their maximum for the
maximally mixed state. It is therefore intuitive to expect that $\Gm (
\rho ) \geq \Gc ( \rho )$ and $\Gml ( \rho ) \geq \Gcl ( \rho )$ hold
for all $\rho$. To prove these statements, we need the following
lemma.
\begin{lemma}\label{le:GMequalPURE}
  Let $\rho , \ket{\varphi}$ be two arbitrary states, and
  $\bra{\varphi} \rho \ket{\varphi} = g$. Then, there exists a
  decomposition $\rho = \sum^r_{i=1} p_i \proj{\psi_i}$, such that $r=
  \rank \rho$, and $\abs{\braket{\varphi}{\psi_i}}^2 = g$ for all $i$.
\end{lemma}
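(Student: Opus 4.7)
The plan is to parametrize all rank-$r$ decompositions of $\rho$ by an $r \times r$ unitary via \eqref{unitaryconversion}, reformulate the target condition as the vanishing of the diagonal of a specific Hermitian matrix, and then invoke the Schur--Horn theorem.

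Concretely, I take the spectral decomposition $\rho = \sum_{k=1}^{r} \lambda_k \proj{e_k}$ with $r = \rank \rho$ and $\lambda_k > 0$, and set $D := \mathrm{diag}(\lambda_1,\ldots,\lambda_r)$. By \eqref{unitaryconversion}, every decomposition of $\rho$ with exactly $r$ terms arises from some $r \times r$ unitary $u$ as $\sqrt{p_i}\ket{\psi_i} = \sum_k u_{ik}\sqrt{\lambda_k}\ket{e_k}$, with $p_i = (uDu^{\dagger})_{ii} > 0$ automatically, since $uDu^{\dagger}$ is positive definite. Introducing the vector $c \in \bC^r$ with entries $c_k := \sqrt{\lambda_k}\braket{\varphi}{e_k}$, a short calculation gives $\sqrt{p_i}\braket{\varphi}{\psi_i} = (uc)_i$, so the requirement $\abs{\braket{\varphi}{\psi_i}}^2 = g$ for every $i$ is equivalent to every diagonal entry of $u(gD - cc^{\dagger})u^{\dagger}$ being zero.

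Crucially, $gD - cc^{\dagger}$ is Hermitian with trace $g - \sum_k \abs{c_k}^2 = g - \bra{\varphi}\rho\ket{\varphi} = 0$. For any trace-zero Hermitian operator the all-zero diagonal is majorized by the eigenvalue tuple (since the top-$k$ partial sum of a decreasingly ordered real sequence summing to zero is always non-negative), so the Schur--Horn theorem supplies a unitary $u$ that makes the diagonal of the conjugate vanish. This $u$ yields the desired decomposition, and the $p_i$ are automatically positive. The main obstacle is spotting the right reformulation: one must notice that the relevant operator $gD - cc^{\dagger}$ has trace exactly zero precisely because $g = \bra{\varphi}\rho\ket{\varphi}$; once this is in hand, both the majorization check and the positivity of the $p_i$ are immediate.
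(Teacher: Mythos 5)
Your proof is correct, and it takes a genuinely different route from the one in the paper. The paper proceeds by induction on $\rank\rho$: starting from the spectral decomposition, it picks two eigenvectors whose overlaps with $\ket{\varphi}$ straddle $g$, mixes them with a one-parameter family of $2\times 2$ unitaries, and uses continuity (an intermediate-value argument in the rotation angle $\vartheta$) to produce one component with overlap exactly $g$, then recurses on the remaining rank-$(r-1)$ part. Your argument instead handles all $r$ components at once: writing $\sqrt{p_i}\ket{\psi_i}=\sum_k u_{ik}\sqrt{\lambda_k}\ket{e_k}$ and $c_k=\sqrt{\lambda_k}\braket{\varphi}{e_k}$, the condition $\abs{\braket{\varphi}{\psi_i}}^2=g$ for all $i$ becomes the vanishing of the diagonal of $u(gD-cc^{\dagger})u^{\dagger}$, and since $gD-cc^{\dagger}$ is Hermitian and traceless (precisely because $g=\bra{\varphi}\rho\ket{\varphi}$ and $\tr D=1$), the zero vector is majorized by its eigenvalues and the Horn half of Schur--Horn supplies the required unitary; positivity of $p_i=(uDu^{\dagger})_{ii}$ is immediate from $D>0$. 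All steps check out, including the majorization claim (partial sums of a decreasingly ordered real sequence with zero total are non-negative). The trade-off is that the paper's proof is elementary and self-contained, needing only continuity and the HJW unitary freedom, whereas yours is shorter and non-inductive but imports Schur--Horn as a black box; your reformulation also makes transparent exactly where the hypothesis $g=\bra{\varphi}\rho\ket{\varphi}$ enters, namely as the tracelessness of $gD-cc^{\dagger}$.
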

\begin{proof}
  We use induction on the $\rank$ of $\rho$. The claim is trivial for
  $\rank \rho = 1$. Suppose, it is true for $\rank \rho = r$. Consider
  a general state $\rho$ with $\rank \rho = r+1$ and spectral
  decomposition $\rho = \sum^{r+1}_{i=1} p_i \proj{\psi_i}$, with
  $\braket{\psi_i}{\psi_j} = 0$ for $i \neq j$. Since $\bra{\varphi}
  \rho \ket{\varphi} = g$, we can assume
  $\abs{\braket{\varphi}{\psi_1}}^2 \leq g \leq
  \abs{\braket{\varphi}{\psi_2}}^2$ without loss of generality.
  Denote $\braket{\varphi}{\psi_j} = s_j \E^{\I \theta_j}$, $s_j \geq
  0$ for $j=1, 2$.  Using \eqref{unitaryconversion}, we rewrite the
  sum of the first two terms as
  \begin{equation*}
    p_1 \proj{\psi_1} + p_2 \proj{\psi_2} =
    q_1 \proj{\phi_1} + q_2 \proj{\phi_2} \, ,
  \end{equation*}
  with $\sqrt{q_i} \ket{\phi_{i}} = u_{i1} \sqrt{p_1} \ket{\psi_1} +
  u_{i2} \sqrt{p_2} \ket{\psi_2}$ for $i = 1, 2$, and where we define
  the unitary matrix as
  \begin{equation*}
    U ( \vartheta ) =
    [u_{ij}]=
    \begin{pmatrix}
      \co & \si \E^{\I (\theta_1 - \theta_2 )} \\
      - \si \E^{- \I (\theta_1 - \theta_2 )} & \co
    \end{pmatrix}
    \, , \quad \vartheta \in (0, \pi) \, ,
  \end{equation*}
  with $\co := \cos \frac{\vartheta}{2}$ and $\si := \sin
  \frac{\vartheta}{2}$. From $\braket{\psi_1}{\psi_2} = 0$ and
  $\braket{\phi_1}{\phi_1} = 1$, we obtain $q_1 = \co^2 p_1 + \si^2
  p_2$, hence
  \begin{equation*}
    \ket{\phi_{1}} = \frac{\co \sqrt{p_1} \ket{\psi_1}
      + \si \E^{\I (\theta_1 - \theta_2)} \sqrt{p_2}
      \ket{\psi_2}}{(\co^2 p_1 + \si^2 p_2)^{\frac{1}{2}}}
    \, , \: \text{and} \quad
    \abs{\braket{\varphi}{\phi_1}} =
    \frac{\co \sqrt{p_1} s_1 + \si \sqrt{p_2} s_2}{(\co^2 p_1 +
      \si^2 p_2)^{\frac{1}{2}}} \, .
  \end{equation*}
  We see that $\lim_{\vartheta \rightarrow 0}
  \abs{\braket{\varphi}{\phi_1}} = s_1$, and $\lim_{\vartheta
    \rightarrow \pi} \abs{\braket{\varphi}{\phi_1}} = s_2$.  Since
  $s_1^2 \leq g \leq s_2^2$ and $\abs{\braket{\varphi}{\phi_1}}$ is
  continuous in $\vartheta$, there must be some $\vartheta$ such that
  $\abs{\braket{\varphi}{\phi_1}}^2 = g$.  Denoting the corresponding
  $\{ q_i , \ket{\phi_i} \}$ as $\{ Q_i , \ket{\Phi_i} \}$, we see
  that the state
  \begin{equation*}
    \rho_1 := \frac{\rho - Q_1 \proj{\Phi_1}}{1 - Q_1} =
    \frac{Q_2 \proj{\Phi_2} + \sum^{r+1}_{i=3} p_i
      \proj{\psi_i}}{1 - Q_1}
  \end{equation*}
  is a state of rank $r$ that satisfies $\bra{\varphi} \rho_1
  \ket{\varphi} = g$. Using the induction assumption on $\rho_1$,
  there is a decomposition $\rho_1 = \sum^r_{i=1} p_i'
  \proj{\psi_i'}$, such that $\abs{\braket{\varphi}{\psi_i'}}^2 = g$
  for all $i$.  Now the claim follows for $\rho = (1 - Q_1) \rho_1 +
  Q_1 \proj{\Phi_1} = (1 - Q_1) \sum^r_{i=1} p_i' \proj{\psi_i'} + Q_1
  \proj{\Phi_1}$. This completes the proof.
  \qed
\end{proof}

\begin{theorem}\label{gmlgclinequality}
  $\Gml ( \rho ) \geq \Gcl ( \rho )$ holds for all states $\rho$.
\end{theorem}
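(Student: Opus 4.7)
The plan is to exploit Lemma \ref{le:GMequalPURE}, which provides a decomposition of $\rho$ into pure states each of which has a uniform overlap with a chosen reference product state. With that lemma in hand the inequality reduces to a one-line calculation.

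First, I would choose $\ket{\varphi} \in \pro$ to be a closest product state for the trace-inner-product extension, so that $\bra{\varphi} \rho \ket{\varphi} = \Lamm(\rho)$. Then I would apply Lemma \ref{le:GMequalPURE} with $g = \Lamm(\rho)$ to produce a decomposition $\rho = \sum_i p_i \proj{\psi_i}$ with the property that $\abs{\braket{\varphi}{\psi_i}}^2 = \Lamm(\rho)$ for every $i$.

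The crucial observation is now that $\ket{\varphi}$ is itself a product state, so for every component $\ket{\psi_i}$ of the decomposition one has
\begin{equation*}
  \Lam(\ket{\psi_i}) \;=\; \max_{\ket{\phi}\in\pro}\abs{\braket{\phi}{\psi_i}}^2 \;\geq\; \abs{\braket{\varphi}{\psi_i}}^2 \;=\; \Lamm(\rho).
\end{equation*}
Taking $-\log$ of both sides and averaging with weights $p_i$ gives $\sum_i p_i \Gl(\ket{\psi_i}) \leq -\log \Lamm(\rho) = \Gml(\rho)$. Since $\Gcl(\rho)$ is the minimum of $\sum_i p_i \Gl(\ket{\psi_i})$ over all decompositions, this particular decomposition provides the upper bound $\Gcl(\rho) \leq \Gml(\rho)$, which is the claim.

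There is no real obstacle once Lemma \ref{le:GMequalPURE} is invoked; that lemma is where the work has already been done (via the continuity argument on the unitary mixing angle $\vartheta$). If one wanted to avoid the lemma, the hard part would be to ensure that the maximizing product state $\ket{\varphi}$ for $\Lamm(\rho)$ can be ``spread uniformly'' across some decomposition of $\rho$ — otherwise one only obtains $\sum_i p_i \abs{\braket{\varphi}{\psi_i}}^2 = \Lamm(\rho)$, and the concavity of $-\log$ would push the inequality in the wrong direction. The uniform-overlap decomposition is precisely what bypasses this difficulty, allowing the estimate $-\log\Lam(\ket{\psi_i}) \leq -\log\Lamm(\rho)$ to be applied termwise rather than on an average.
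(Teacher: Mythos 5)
Your proof is correct and follows essentially the same route as the paper: both fix a closest product state $\ket{\varphi}$ for $\Lamm(\rho)$, invoke Lemma \ref{le:GMequalPURE} to obtain a decomposition with uniform overlap $\abs{\braket{\varphi}{\psi_i}}^2 = \Lamm(\rho)$, and then bound $\Lam(\ket{\psi_i})$ from below termwise before minimizing over decompositions. Your closing remark about why the uniform-overlap decomposition is needed to dodge the concavity of $-\log$ is exactly the point of the lemma.
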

\begin{proof}
  Let $\ket{\varphi} \in \pro$ be a closest product state of $\rho$ in
  accordance with \eqref{eq:EE}, and let $\rank \rho = r$. By virtue
  of Lemma \ref{le:GMequalPURE}, there exists a decomposition $\rho =
  \sum^r_{i=1} P_i \proj{\Psi_i}$ such that
  $\abs{\braket{\varphi}{\Psi_i}}^2 = \bra{\varphi} \rho
  \ket{\varphi}$ for all $i$. Then we have
  \begin{multline}\label{ea:Gml>=Gcl}
    \Gml ( \rho ) =
    - \log \bra{\varphi} \rho \ket{\varphi}
    = - \log \sum^r_{i=1} P_i
    \abs{\braket{\varphi}{\Psi_i}}^2
    =  - \sum^r_{i=1} P_i \log \abs{\braket{\varphi}{\Psi_i}}^2 \\
    \geq  - \sum^r_{i=1} P_i \log \Lam (\ket{\Psi_i})
    \geq \min\limits_{ \{ p_{i}, \ket{\psi_{i}} \} } - \sum\limits_i
    p_i \log \Lam (\ket{\psi_i})
    = \Gcl ( \rho ) \, ,
  \end{multline}
  where the third equality follows from the fact that the
  $\abs{\braket{\varphi}{\Psi_i}}$ have the same value for all $i$.
  \qed
\end{proof}

\begin{corollary}\label{gmgcinequality}
  $\Gm ( \rho ) \geq \Gfc ( \rho )$ holds for all states $\rho$.
\end{corollary}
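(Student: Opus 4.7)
The plan is to derive the corollary by chaining the just-proved Theorem \ref{gmlgclinequality} with the convexity inequality \eqref{ea:convexineq}. Theorem \ref{gmlgclinequality} gives $\Gml(\rho) \geq \Gcl(\rho)$, while \eqref{ea:convexineq} gives $\Gcl(\rho) \geq \Gfl(\rho)$, so chaining the two yields $\Gml(\rho) \geq \Gfl(\rho)$. Unfolding the definitions, this is $-\log \Lamm(\rho) \geq -\log \Lamf(\rho)$, and since $-\log$ is strictly decreasing on $(0,1]$, this is equivalent to $\Lamm(\rho) \leq \Lamf(\rho)$. Subtracting from $1$ then gives $\Gm(\rho) = 1 - \Lamm(\rho) \geq 1 - \Lamf(\rho) = \Gf(\rho) \equiv \Gfc(\rho)$, which is the claim.

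As a sanity check, the same inequality can also be seen directly from the variational formulas without invoking Theorem \ref{gmlgclinequality}. From \eqref{eq:EE} we have $\Lamm(\rho) = \max_{\ket{\varphi}\in\pro} \bra{\varphi}\rho\ket{\varphi}$, and by \eqref{pf1} each $\bra{\varphi}\rho\ket{\varphi}$ equals $F^{2}(\rho,\proj{\varphi})$. Since $\proj{\varphi} \in \sep$ for every $\ket{\varphi} \in \pro$, one is maximizing $F^{2}(\rho,\cdot)$ over a subset of the set used to define $\Lamf(\rho)$ in \eqref{LamFid}, so $\Lamm(\rho) \leq \Lamf(\rho)$ follows immediately, giving $\Gm(\rho) \geq \Gfc(\rho)$. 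This alternative makes the corollary essentially a one-line consequence of set inclusion $\pro \subset \sep$.

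Given the placement of the statement immediately after Theorem \ref{gmlgclinequality}, the first route is presumably the intended proof, highlighting that the linear inequality is a logical consequence of the logarithmic one via the monotonic translation established in Lemma \ref{le:weakmonotonicity}. There is no real obstacle here: once Theorem \ref{gmlgclinequality} is in hand, the only step with any content is observing that the monotonic relation between the linear and logarithmic versions of both the m-type and f-type measures lets one pass from the logarithmic inequality to the linear one without further work.
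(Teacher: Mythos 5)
Your proposal is correct. Your first route is essentially the paper's own proof: the paper chains Theorem \ref{gmlgclinequality} with Lemma \ref{le:weakmonotonicity} and item 2 of Lemma \ref{gcandgcl}, writing $\Gm(\rho) = 1 - 2^{-\Gml(\rho)} \geq 1 - 2^{-\Gcl(\rho)} \geq \Gc(\rho)$; you substitute \eqref{ea:convexineq} (i.e.\ $\Gfl \leq \Gcl$) for Lemma \ref{gcandgcl} and land on $\Gf$ rather than $\Gc$, which is the same argument up to relabeling via $\Gf \equiv \Gc$ (the only structural difference being that Lemma \ref{gcandgcl} is proved from concavity of the logarithm alone, whereas \eqref{ea:convexineq} rests on the convexity of $\Gfl$ and hence on Proposition \ref{prop:Gc=Gf}). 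Your ``sanity check,'' however, is a genuinely different and strictly more elementary proof that deserves to be flagged as such: by \eqref{eq:EE}, $\Lamm(\rho) = \max_{\ket{\varphi}\in\pro} F^2(\rho,\ket{\varphi})$, and since every $\proj{\varphi}$ with $\ket{\varphi}\in\pro$ lies in $\sep$, this maximizes $F^2(\rho,\cdot)$ over a subset of the feasible set of \eqref{LamFid}, so $\Lamm(\rho) \leq \Lamf(\rho)$ and $\Gm(\rho) = 1 - \Lamm(\rho) \geq 1 - \Lamf(\rho) = \Gf(\rho) = \Gfc(\rho)$ follows at the level of definitions, with no need for Theorem \ref{gmlgclinequality} or the decomposition machinery of Lemma \ref{le:GMequalPURE}. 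The paper's route buys the observation that the linear inequality is the image of the logarithmic one under the monotone translation of Lemma \ref{le:weakmonotonicity}, which fits the section's narrative; your elementary route buys logical independence and a one-line, set-inclusion explanation of why the inequality must hold.
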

\begin{proof}
  Using Lemma \ref{le:weakmonotonicity}, Theorem
  \ref{gmlgclinequality} and Lemma \ref{gcandgcl}, we obtain $\Gm (
  \rho ) = 1 - 2^{- \Gml ( \rho )} \geq 1 - 2^{- \Gcl ( \rho )} \geq
  \Gc ( \rho )$ .
  \qed
\end{proof}
The following Theorem \ref{thm:Gcl=Gml} establishes necessary and
sufficient conditions for $\Gml ( \rho ) = \Gcl ( \rho )$ in form of a
straightforward relationship between the CPS for $\Lamm ( \rho )$ and
the optimal decomposition of $\Gcl ( \rho )$. Hence, this theorem
bears resemblance to Theorem \ref{thm:Gfl=Gcl}, as well as to
Proposition 5 of \cite{skb11NJP}.
\begin{theorem}\label{thm:Gcl=Gml}
  For any state $\rho$ the following two conditions are equivalent:
  \begin{enumerate}
  \item $\Gml (\rho ) = \Gcl (\rho )$ holds.

  \item For every CPS $\phim$ of $\Lamm ( \rho )$ there exists an
    optimal decomposition $\{P_i, \ket{\Psi_{i}}\}$ of $\Gcl ( \rho )$
    for which $\Lamm ( \rho ) = \Lam (\ket{\Psi_i}) =
    \abs{\braket{\phi_{\text{m}}}{\Psi_i}}^2$ holds for all $i$.
  \end{enumerate}
\end{theorem}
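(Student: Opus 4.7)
The plan is to exploit the proof of Theorem \ref{gmlgclinequality} directly: that proof exhibits a chain of two inequalities connecting $\Gml(\rho)$ to $\Gcl(\rho)$, and Theorem \ref{thm:Gcl=Gml} is essentially a statement about when each link in that chain becomes tight. Accordingly I would recast the proof of Theorem \ref{gmlgclinequality} with slightly more care, starting from an arbitrary CPS $\phim$ of $\Lamm(\rho)$ (rather than simply ``some'' closest product state), invoke Lemma \ref{le:GMequalPURE} to obtain a decomposition $\rho = \sum_i P_i \proj{\Psi_i}$ with $|\braket{\phim}{\Psi_i}|^2 = \Lamm(\rho)$ for every $i$, and then analyze the two inequalities
\[
-\sum_i P_i \log |\braket{\phim}{\Psi_i}|^2 \;\geq\; -\sum_i P_i \log \Lam(\ket{\Psi_i}) \;\geq\; \Gcl(\rho)
\]
for tightness.

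For the implication $1.\Rightarrow 2.$, I would fix any CPS $\phim$ of $\Lamm(\rho)$ and produce the Lemma~\ref{le:GMequalPURE} decomposition $\{P_i,\ket{\Psi_i}\}$ as above. Substituting $|\braket{\phim}{\Psi_i}|^2 = \Lamm(\rho)$ collapses the left-hand side to $-\log \Lamm(\rho) = \Gml(\rho)$. Under the hypothesis $\Gml(\rho) = \Gcl(\rho)$ both inequalities above must saturate. Saturation of the first, together with $|\braket{\phim}{\Psi_i}|^2 \leq \Lam(\ket{\Psi_i})$, forces $\Lam(\ket{\Psi_i}) = |\braket{\phim}{\Psi_i}|^2 = \Lamm(\rho)$ for every $i$; saturation of the second forces $\{P_i,\ket{\Psi_i}\}$ to be optimal for $\Gcl(\rho)$. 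This is exactly condition 2.

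For $2.\Rightarrow 1.$, I would simply insert the prescribed decomposition $\{P_i,\ket{\Psi_i}\}$ into the definition of $\Gcl$ and chain the equalities:
\[
\Gcl(\rho) \;=\; \sum_i P_i \bigl[-\log \Lam(\ket{\Psi_i})\bigr] \;=\; -\log \Lamm(\rho) \;=\; \Gml(\rho),
\]
where the first equality uses the postulated optimality and the second uses $\Lam(\ket{\Psi_i})=\Lamm(\rho)$ for all $i$. Combined with the universal bound $\Gml(\rho) \geq \Gcl(\rho)$ from Theorem \ref{gmlgclinequality}, this yields equality.

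The only nontrivial ingredient is Lemma \ref{le:GMequalPURE}, which guarantees the existence of a decomposition in which all pure components have the same overlap with a prescribed $\phim$; without it, the first inequality in the chain could fail to saturate even when the optimality of the decomposition for $\Gcl$ holds. I do not anticipate any real obstacle beyond carefully tracking the ``for every CPS'' quantifier in condition 2, which is precisely what Lemma \ref{le:GMequalPURE} delivers uniformly.
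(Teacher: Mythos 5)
Your proposal is correct and follows essentially the same route as the paper: both proofs run the inequality chain from Theorem \ref{gmlgclinequality} starting from an arbitrary CPS, use Lemma \ref{le:GMequalPURE} to equalize the overlaps $\abs{\braket{\phi_{\text{m}}}{\Psi_i}}^2$, and read off condition 2 from the saturation of the two inequalities (and conversely chain the equalities for $2.\Rightarrow 1.$). Your explicit remark about tracking the ``for every CPS'' quantifier matches how the paper handles it.
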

\begin{proof}
  \begin{itemize}
  \item $1. \Rightarrow 2.$: Because of $\Gcl (\rho ) = \Gml (\rho )$,
    the two inequalities in \eqref{ea:Gml>=Gcl} must become
    equalities, from which it follows that for every CPS
    $\phim$ there exists a decomposition $\{P_i,
    \ket{\Psi_{i}}\}$ of $\rho$ which is optimal for $\Gcl ( \rho )$
    and for which $\Lam (\ket{\Psi_i}) =
    \abs{\braket{\phi_{\text{m}}}{\Psi_i}}^2 = \bra{\phi_{\text{m}}}
    \rho \ket{\phi_{\text{m}}} = \Lamm ( \rho )$ holds for all $i$.

  \item $2. \Rightarrow 1.$: Let $\phim$ be a CPS for $\Lamm ( \rho )$
    and let $\{P_i, \ket{\Psi_{i}}\}$ be an optimal decomposition of
    $\Gcl ( \rho )$ for which $\Lamm ( \rho ) = \Lam (\ket{\Psi_i}) =
    \abs{\braket{\phi_{\text{m}}}{\Psi_i}}^2$ holds for all $i$.
    Then, $\Gcl (\rho ) = - \sum_i P_i \log \Lamm ( \rho ) = - \log
    \Lamm ( \rho ) = \Gml ( \rho )$.
    \qed
\end{itemize}
\end{proof}
In Section \ref{partitioning} this theorem will be demonstrated by
means of the maximally correlated states.  With regard to the linear
measures, it will be shown in Theorem \ref{thm:equivalence} that $\Gm
( \rho ) = \Gfc ( \rho )$ holds iff $\rho$ is pure.

\subsection{Comparison between $\Gt$ and $\ET$}\label{relgtet}

\begin{theorem}\label{etgfc}
  $\Gfc ( \rho ) \geq \ETw ( \rho )$ holds for all states $\rho$.
\end{theorem}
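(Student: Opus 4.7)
The plan is to combine the upper bound on trace distance by fidelity from equation (tf1) with the observation that the CSS for $\Gfc$ is a specific separable state, and hence provides an upper bound for the minimization defining $\ETw$.

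More concretely, the right-hand inequality in \eqref{tf1} gives
\begin{equation*}
  \DT^2 ( \rho , \sigma ) \leq 1 - F^2 ( \rho , \sigma )
\end{equation*}
for any states $\rho, \sigma$. First I would let $\sigma_{\text{f}} \in \sep$ denote a CSS for $\Gfc$, that is, a separable state achieving $F^2 ( \rho , \sigma_{\text{f}} ) = \Lamf ( \rho ) = 1 - \Gfc ( \rho )$. Applying the above inequality to this particular $\sigma_{\text{f}}$ yields
\begin{equation*}
  \DT^2 ( \rho , \sigma_{\text{f}} )
  \leq 1 - F^2 ( \rho , \sigma_{\text{f}} )
  = 1 - \Lamf ( \rho )
  = \Gfc ( \rho ) \, .
\end{equation*}

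Second, since $\sigma_{\text{f}}$ is separable, it is a valid (but in general suboptimal) candidate for the minimization in $\ETw ( \rho ) = \min_{\sigma \in \sep} \DT^2 ( \rho , \sigma )$. Hence $\ETw ( \rho ) \leq \DT^2 ( \rho , \sigma_{\text{f}} )$, and chaining this with the previous display gives the desired inequality $\ETw ( \rho ) \leq \Gfc ( \rho )$.

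I do not anticipate any serious obstacle here: the result is essentially a one-line consequence of \eqref{tf1} and the fact that different distance-based entanglement measures are minimized over the same set $\sep$, so minimizing a smaller-or-equal quantity in general produces a smaller-or-equal minimum. The only subtlety is conceptual, namely that the CSS for $\Gfc$ need not coincide with the CSS for $\ETw$, which is precisely why the comparison is an inequality rather than an equality.
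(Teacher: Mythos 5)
Your argument is correct and is essentially identical to the paper's proof: both apply the upper bound $\DT^2(\rho,\sigma) \leq 1 - F^2(\rho,\sigma)$ from \eqref{tf1} to the closest separable state $\sigma_{\text{f}}$ achieving $\Lamf(\rho)$, and then note that $\sigma_{\text{f}}$ is a feasible point for the minimization defining $\ETw$. No gaps.
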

\begin{proof}
  $\Gf ( \rho ) =
  1 - F^2 ( \rho , \sigma_{\text{f}} )
  \stackrel{\eqref{trfidineq}}{\geq}
  \DT^2 ( \rho , \sigma_{\text{f}} ) \geq
  \min\limits_{\sigma \in \sep} \DT^2 ( \rho , \sigma ) =
  \ETw ( \rho )$ .
  \qed
\end{proof}

\begin{theorem}\label{gtgm}
  $\Gt ( \rho ) \leq \Gm ( \rho )$ holds for all states $\rho$.
\end{theorem}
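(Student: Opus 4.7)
The plan is to exploit the upper bound in the trace distance--fidelity inequality \eqref{tf2}, applied at the closest product state picked out by $\Lamm(\rho)$. Let $\phim \in \pro$ be a CPS for $\Gm(\rho)$, so that $\bra{\phi_{\text{m}}}\rho\ket{\phi_{\text{m}}} = \Lamm(\rho)$, and hence by \eqref{pf1}, $F^2(\rho, \phim) = \Lamm(\rho)$, giving $\Gm(\rho) = 1 - F^2(\rho, \phim)$.

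Next, since the minimization in \eqref{ea:Gtr} is taken over all product states, we have the trivial bound $\Gt(\rho) \leq \DT^2(\rho, \phim)$. The point is now to relate this trace-distance-squared to $1 - F^2(\rho, \phim)$. Precisely this is supplied by \eqref{tf2}: because $\phim$ is a pure (product) state, one of the two arguments is pure, and so $\DT(\rho, \phim) \leq \sqrt{1 - F^2(\rho, \phim)}$, or equivalently $\DT^2(\rho, \phim) \leq 1 - F^2(\rho, \phim)$.

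Stringing the two bounds together yields
\begin{equation*}
\Gt(\rho) \leq \DT^2(\rho, \phim) \leq 1 - F^2(\rho, \phim) = 1 - \Lamm(\rho) = \Gm(\rho),
\end{equation*}
which is the claim. There is no real obstacle here: the entire argument is a one-line chaining of the trivial upper bound $\Gt(\rho) \leq \DT^2(\rho, \phim)$ with the standard pure-vs-mixed trace--fidelity inequality \eqref{tf2}, together with the observation that fidelity against a pure state reduces to the expectation value of $\rho$ in that state. The only thing to be mindful of is to use the tighter pure-state form of the inequality (as opposed to the general form $1 - F \leq \DT$ in \eqref{tf1}), since it is the pure state $\phim$ appearing on one side that makes the bound $\DT^2 \leq 1 - F^2$ (rather than a weaker quadratic-in-$F$ version) available.
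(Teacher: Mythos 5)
Your proof is correct, but it takes a genuinely different route from the paper's. The paper decomposes $\rho = \sum_i p_i \proj{\psi_i}$ into pure states, uses the convexity of $\DT^2$ in its first argument to bound $\DT^2(\rho,\ket{\varphi}) \leq \sum_i p_i \DT^2(\ket{\psi_i},\ket{\varphi})$, and then invokes the exact pure-state identity \eqref{tf3} to turn each term into $1-\abs{\braket{\varphi}{\psi_i}}^2$, which resums to $1-\bra{\varphi}\rho\ket{\varphi}$. You avoid the decomposition entirely: you work directly with the mixed state $\rho$ and the CPS $\phim$, and use the Fuchs--van de Graaf upper bound $\DT(\rho,\phim)\leq\sqrt{1-F^2(\rho,\phim)}$ together with $F^2(\rho,\phim)=\bra{\phi_{\text{m}}}\rho\ket{\phi_{\text{m}}}=\Lamm(\rho)$ from \eqref{pf1}. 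Your argument is shorter and needs neither convexity of the trace distance nor any ensemble decomposition; the paper's argument is more elementary in that it only ever evaluates distances between pure states, where \eqref{tf3} is an equality rather than an inequality. One small correction to your closing remark: the upper bound $\DT\leq\sqrt{1-F^2}$ already holds for two arbitrary mixed states, as stated in \eqref{tf1} --- purity of $\phim$ is not what makes that bound available (it is the lower bounds of \eqref{tf1} and \eqref{tf2} that differ). Where purity of $\phim$ genuinely enters your argument is in the identification $F^2(\rho,\phim)=\Lamm(\rho)$, without which the chain would not terminate at $\Gm(\rho)$.
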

\begin{proof}
  Let $\rho = \sum_i p_i \proj{\psi_i}$ be an arbitrary decomposition
  of $\rho$. Then,
  \begin{multline*}
    \Gt ( \rho ) = \min_{\ket{\varphi} \in \pro}
    \DT^2 ( \rho , \ket{\varphi} ) \leq
    \min_{\ket{\varphi}\in \pro} \bigg[ \sum_i p_i
    \DT^2 ( \ket{\psi_i} , \ket{\varphi} ) \bigg] \\
    \stackrel{\eqref{trfidineq}}{=}
    \min_{\ket{\varphi}\in \pro} \bigg[ \sum_i p_i \left( 1 -
        \abs{\braket{\varphi}{\psi_i}}^2 \right) \bigg]
      = 1 - \max_{\ket{\varphi} \in \pro}
      \bra{\varphi} \rho \ket{\varphi} =
    \Gm ( \rho ) \, ,
  \end{multline*}
  where the inequality follows from the convexity of $\DT^2$.
  \qed
\end{proof}
The inequality $\Gt ( \rho ) \leq \Gm ( \rho )$ can be strict, as seen
for $\rho = \one / d^2$, the maximally mixed state of two qudits: $\Gt
( \rho ) = ( 1 - \frac{1}{d^2} )^2$ (cf.~Appendix
\ref{ap:gtpconcave}), which is always smaller than $\Gm ( \rho ) = 1 -
\frac{1}{d^2}$.

\subsection{Inequalities and hierarchies}\label{hierarchy}

Using the results from the preceding sections, we find the following
inequality chains that include all the GM definitions considered in
this paper.  These inequality hierarchies are summarized and
visualized in Figure \ref{GMhierarchy}.
\begin{theorem}\label{th:ineqalities}
  The following inequalities hold for all states $\rho$:
  \begin{enumerate}
  \item $\ETw ( \rho ) \leq \Gt ( \rho ) \leq \Gm ( \rho )$

  \item $\ETw ( \rho ) \leq \Gfc ( \rho ) \leq \Gm ( \rho )$

  \item $\Gm ( \rho ) \leq \Gml ( \rho ) \leq \ER ( \rho ) + S ( \rho
    )$

  \item $\Gfc ( \rho ) \leq \Gfl ( \rho ) \leq \Gcl ( \rho ) \leq \Gml
    ( \rho )$
  \end{enumerate}
\end{theorem}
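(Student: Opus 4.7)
The plan is to observe that the statement is essentially an assembly theorem: all but one of the seven inequalities in the four chains are already established earlier in the paper, so the bulk of the proof is bookkeeping, and only the bound $\Gml(\rho)\le \ER(\rho)+S(\rho)$ in item 3 requires a genuinely new argument.

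First I would dispatch the routine inequalities. For item 1, $\ETw(\rho)\le \Gt(\rho)$ is the remark directly following \eqref{ea:GTpET} (the minimization in $\ETw$ is over a superset of the one in $\Gt$), and $\Gt(\rho)\le \Gm(\rho)$ is Theorem \ref{gtgm}. For item 2, $\ETw(\rho)\le \Gfc(\rho)$ is Theorem \ref{etgfc} and $\Gfc(\rho)\le \Gm(\rho)$ is Corollary \ref{gmgcinequality}. For item 4, $\Gfc(\rho)\le \Gfl(\rho)$ is item 1 of Lemma \ref{le:weakmonotonicity} applied to $\Lamx=\Lamf$; $\Gfl(\rho)\le \Gcl(\rho)$ is exactly \eqref{ea:convexineq}; and $\Gcl(\rho)\le \Gml(\rho)$ is Theorem \ref{gmlgclinequality}. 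Finally, the first inequality of item 3, $\Gm(\rho)\le \Gml(\rho)$, is again item 1 of Lemma \ref{le:weakmonotonicity}, now applied to $\Lamx=\Lamm$.

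The only step that needs a fresh argument is $\Gml(\rho)\le \ER(\rho)+S(\rho)$. My plan is the following. Let $\sigma^\star\in\sep$ be a closest separable state for the REE, so that by \eqref{eq:REEdef}--\eqref{eq:QREdef}
\begin{equation*}
  \ER(\rho)+S(\rho) = -\tr\bigl(\rho\log\sigma^\star\bigr).
\end{equation*}
(Since $\ER(\rho)$ is finite, the support of $\sigma^\star$ contains that of $\rho$, so this quantity is well defined.) Diagonalise $\sigma^\star=\sum_i\lambda_i\proj{e_i}$ and set $p_i:=\bra{e_i}\rho\ket{e_i}\ge 0$ with $\sum_i p_i=1$. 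Then $\tr(\rho\log\sigma^\star)=\sum_i p_i\log\lambda_i$, and Jensen's inequality for the concave function $\log$ yields
\begin{equation*}
  \tr(\rho\log\sigma^\star)=\sum_i p_i\log\lambda_i
  \;\le\;\log\sum_i p_i\lambda_i=\log\tr(\rho\sigma^\star).
\end{equation*}
Combining this with $\tr(\rho\sigma^\star)\le \max_{\sigma\in\sep}\tr(\rho\sigma)=\Lamm(\rho)$, obtained from \eqref{eq:EE}, gives $-\tr(\rho\log\sigma^\star)\ge -\log\Lamm(\rho)=\Gml(\rho)$, which is the desired bound.

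The main (mild) obstacle is purely technical: ensuring the operator-logarithm manipulation is legitimate on the support of $\sigma^\star$, and checking that the spectral reduction to a scalar Jensen inequality is valid because the eigenbasis used belongs to $\sigma^\star$ rather than to $\rho$. This is handled by the spectral theorem together with the support condition $\operatorname{supp}(\rho)\subseteq\operatorname{supp}(\sigma^\star)$ mentioned above. No further ideas beyond these are needed; everything else is a direct appeal to the earlier lemmas, theorems and equations.
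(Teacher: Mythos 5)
Your proposal is correct, and for three and a half of the four chains it follows exactly the paper's route: the paper's proof is likewise a pure assembly, citing the remark after \eqref{ea:GTpET} and Theorem \ref{gtgm} for item 1, Theorem \ref{etgfc} and Corollary \ref{gmgcinequality} for item 2, Lemma \ref{le:weakmonotonicity}, \eqref{ea:convexineq} and Theorem \ref{gmlgclinequality} for item 4, and Lemma \ref{le:weakmonotonicity} for the first half of item 3. The one place you diverge is the inequality $\Gml(\rho) \leq \ER(\rho) + S(\rho)$: the paper does not prove it at all, but simply cites \cite{hmm06,weg04}. You instead supply a self-contained derivation, and it is correct: writing $\ER(\rho)+S(\rho) = -\tr(\rho\log\sigma^\star)$ for a closest separable state $\sigma^\star$, diagonalising $\sigma^\star$, applying Jensen's inequality for the concave $\log$ to the induced distribution $p_i = \bra{e_i}\rho\ket{e_i}$, and then bounding $\tr(\rho\sigma^\star) \leq \Lamm(\rho)$ via \eqref{eq:EE} gives exactly $-\tr(\rho\log\sigma^\star) \geq -\log\Lamm(\rho) = \Gml(\rho)$. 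Your handling of the support issue is also sound, since the maximally mixed state is separable with full support, so $\ER(\rho)$ is finite and $\operatorname{supp}(\rho)\subseteq\operatorname{supp}(\sigma^\star)$, which makes the restriction of the sum to nonzero eigenvalues a genuine probability distribution. The net effect is that your write-up is slightly stronger than the paper's: it makes the theorem self-contained where the paper relies on an external reference, at the cost of a short extra argument.
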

\begin{proof}
  \begin{itemize}
  \item 1: The first inequality was shown in Section \ref{sub:gt}, and
    the second one in Theorem \ref{gtgm}.
    
  \item 2: The first inequality was shown in Theorem \ref{etgfc}, and
    the second one in Corollary \ref{gmgcinequality}.
    
  \item 3: The first inequality follows from Lemma
    \ref{le:weakmonotonicity}, and the second one was shown in
    \cite{hmm06,weg04}.
    
  \item 4: The first inequality follows from Lemma
    \ref{le:weakmonotonicity}, the second was shown in
    \eqref{ea:convexineq}, and the third in Theorem
    \ref{gmlgclinequality}.
    \qed
  \end{itemize}
\end{proof}

\begin{figure}
  \centering
  \begin{overpic}[scale=.14]{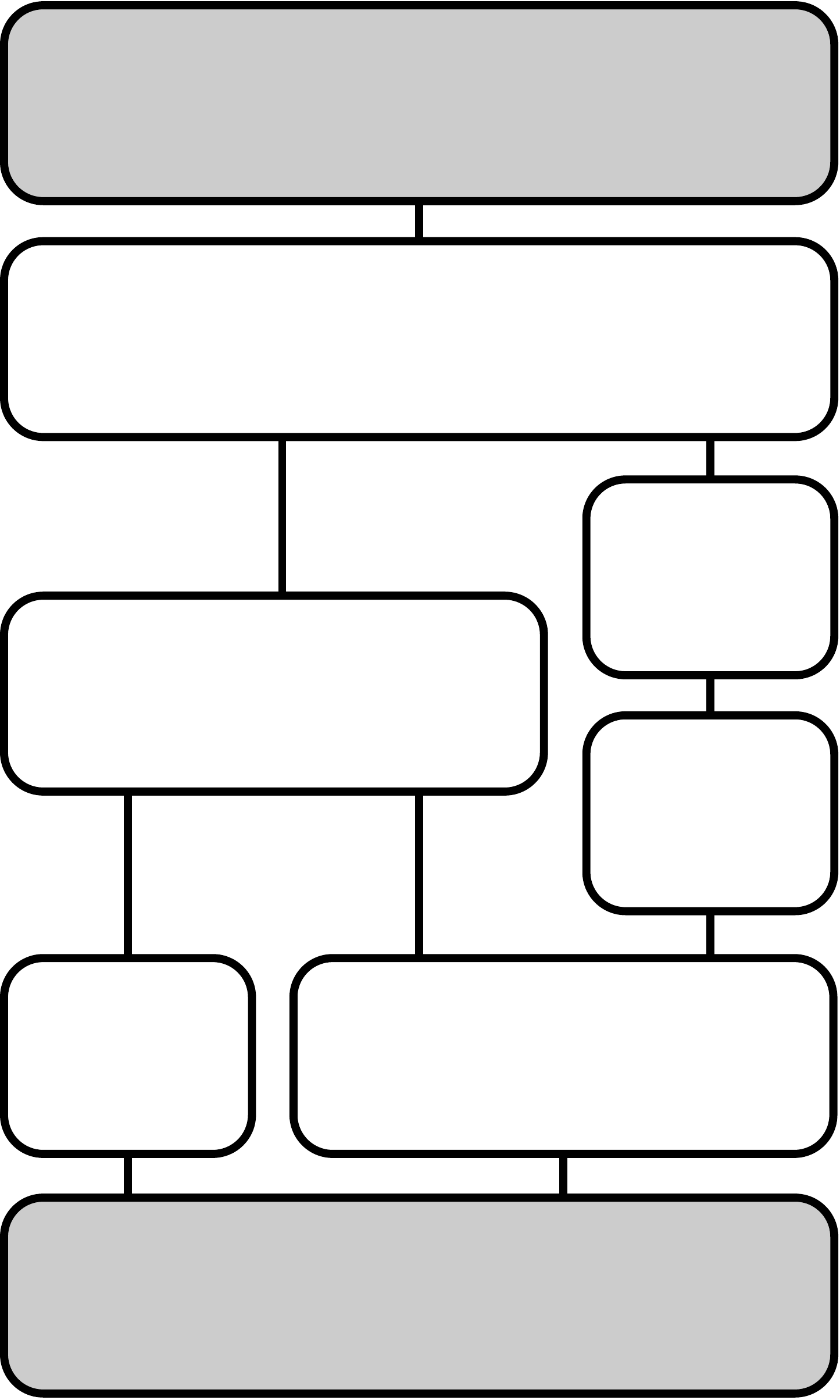}
    \put(16,90.5){$\ER + S$}
    \put(25,74){$\Gml$}
    \put(16,47.5){$\Gm$}
    \put(5,21.5){$\Gt$}
    \put(46,56.5){$\Gcl$}
    \put(46,39.5){$\Gfl$}
    \put(33,21.5){$\Gfc$}
    \put(24,04){$\ETw$}
  \end{overpic}
  \caption{\label{GMhierarchy} The quantitative hierarchy of the
    different measures is shown, with the six distinct extensions of
    GM in white boxes.  For a given state $\rho \in \SH$ the value of
    the measures increases monotonically from bottom to top along the
    vertical lines, and measures that are not vertically connected are
    not in an inequality relationship to each other.  The quantities
    $\ETw$ and $\ER + S$ are not extensions of GM, but they provide
    lower and upper bounds, respectively.}
\end{figure}

To verify that no inequality relationship exists for measures that are
not vertically connected in Figure \ref{GMhierarchy}, e.g.~$\Gt$ and
$\Gcl$, we need to find $\rho_1$, $\rho_2 \in \SH$ so that $\Gt
(\rho_1) < \Gcl(\rho_1)$, and $\Gt (\rho_2) > \Gcl(\rho_2)$. The
absence of an inequality relationship will be denoted as $\Gt \gtrless
\Gcl$.

\begin{proposition}\label{pr:incomparable}
  $\Gm \gtrless \Gcl$ , $\Gm \gtrless \Gfl$ , $\Gt \gtrless \Gcl$ ,
  $\Gt \gtrless \Gfl$ , and $\Gt \gtrless \Gfc$.
\end{proposition}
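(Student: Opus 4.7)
The plan is to prove each of the five non-comparability claims by producing, for each pair, two explicit states on which the two measures give opposite orderings. I will handle the first four pairs with a single uniform construction, then dedicate a separate argument to the fifth pair $\Gt \gtrless \Gfc$, which is the most delicate.

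For the four pairs $\Gm \gtrless \Gcl$, $\Gm \gtrless \Gfl$, $\Gt \gtrless \Gcl$, $\Gt \gtrless \Gfl$, I use two states. First, any pure entangled bipartite state $\ket{\Psi}$ (say the MES of $\bCd \otimes \bCd$) produces the ordering in which the linear measures lie strictly below the logarithmic ones: on pure states every linear extension of GM reduces to $1 - \Lam(\ket{\Psi})$ and every logarithmic one to $-\log \Lam(\ket{\Psi})$, and the elementary inequality \eqref{elin} then gives $1 - \Lam(\ket{\Psi}) < -\log \Lam(\ket{\Psi})$ whenever $\Lam(\ket{\Psi}) < 1$. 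Second, I evaluate the maximally mixed state $\one / \dim \cH$: being separable, it forces $\Gcl = \Gfl = 0$ by Axiom~1, while a direct spectral calculation (the matrix $\one/\dim \cH - \proj{\varphi}$ has eigenvalues $1/\dim \cH - 1$ with multiplicity one and $1/\dim \cH$ with multiplicity $\dim \cH - 1$) gives $\Gm = 1 - 1/\dim \cH$ and $\Gt = (1 - 1/\dim \cH)^2$, both strictly positive. Together these two states witness all four non-comparability relations.

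For the pair $\Gt \gtrless \Gfc$ the same maximally mixed state already provides $\Gt > 0 = \Gfc$. The reverse direction $\Gt < \Gfc$ cannot come from any pure state, on which $\Gt$ and $\Gfc$ agree, so a genuinely mixed entangled state is required. The plan is to construct $\rho$ whose closest separable state in fidelity is strictly mixed and produces a fidelity with $\rho$ noticeably higher than what any pure product state can achieve, while at the same time every pure product state stays comparatively far from $\rho$ in trace distance. The main obstacle is that the simplest candidates of the form $\rho_\epsilon = (1-\epsilon)\proj{\psi} + \epsilon \proj{\phi}$ with $\ket{\phi}$ the unique CPS of $\ket{\psi}$ fail, because in such perturbations $\Gt$ drops quadratically in $\epsilon$ while $\Gfc$ drops only linearly, yielding the wrong ordering $\Gt > \Gfc$. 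Overcoming this requires exploiting either a state with a continuous family of CPSs (so that a properly mixed CSS obtains a genuine fidelity advantage over every single pure product state) or a higher-dimensional / multipartite construction in which the separable states available to $\Gfc$ are substantially richer than the pure product states available to $\Gt$. Once such a $\rho$ is identified, the argument reduces to an explicit comparison of $\Gt(\rho)$ and $\Gfc(\rho)$.
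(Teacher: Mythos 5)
Your treatment of the first four relations is correct and essentially identical to the paper's: a genuinely mixed separable state (your maximally mixed state) gives $\Gm,\Gt>0=\Gcl=\Gfl$, and a pure entangled state gives $\Gm=\Gt=\G<\Gl=\Gcl=\Gfl$ by the elementary inequality. The computations $\Gm(\one/\dim\cH)=1-1/\dim\cH$ and $\Gt(\one/\dim\cH)=(1-1/\dim\cH)^2$ are both right.

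However, the fifth relation $\Gt\gtrless\Gfc$ is left unproven. You establish only the direction $\Gt>\Gfc$ (on separable mixed states) and then describe a \emph{plan} for finding a state with $\Gt<\Gfc$ without ever exhibiting one; "once such a $\rho$ is identified" is precisely the missing content. Moreover, your diagnosis points in a direction that is more complicated than necessary. The family you rule out perturbs an \emph{entangled} state by its closest product state; the example that actually works is the opposite: perturb a \emph{product} state by a small admixture of an orthogonal entangled state. Concretely, take the two-qutrit maximally correlated state $\rho=q\proj{00}+(1-q)\proj{\psi}$ with $\ket{\psi}=\tfrac{1}{\sqrt2}(\ket{11}+\ket{22})$ and $q\in(\tfrac12,1)$. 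By Lemma \ref{le:Gcl>Gfl} one has $\Gfc(\rho)=\tfrac{1-q}{2}$, which is linear in the admixture $1-q$, whereas choosing $\ket{\varphi}=\ket{00}$ in \eqref{ea:Gtr} gives $\Gt(\rho)\leq\DT^2(\rho,\ket{00})=(1-q)^2$, which is quadratic. Hence $\Gt(\rho)<\Gfc(\rho)$ for $q>\tfrac12$. No continuous family of closest product states and no multipartite construction is needed; the asymmetry between the linear decay of the convex roof and the quadratic decay of the squared trace distance does all the work, just applied to the admixture you did not consider.
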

\begin{proof}
  Let $\rho_1$ be a genuinely mixed separable state and $\rho_2 =
  \proj{\psi}$ a pure entangled state.  Then, $\Gm ( \rho_1 ) > \Gcl (
  \rho_1 ) = \Gfl ( \rho_1 ) = 0$ and $\Gm ( \rho_2 ) = G( \ket{\psi}
  ) < \Gl( \ket{\psi} ) = \Gcl ( \rho_2 ) = \Gfl ( \rho_2 )$, from
  which it follows that $\Gm \gtrless \Gcl$ and $\Gm \gtrless \Gfl$.

  Equivalently, $\Gt ( \rho_1 ) > \Gcl ( \rho_1 ) = \Gfl ( \rho_1 ) =
  0$ and $\Gt ( \rho_2 ) = G( \ket{\psi} ) < \Gl( \ket{\psi} ) = \Gcl
  ( \rho_2 ) = \Gfl ( \rho_2 )$, from which it follows that $\Gt
  \gtrless \Gcl$ and $\Gt \gtrless \Gfl$.

  Let $\rho_1$ again be a genuinely mixed separable state, and $\rho_2
  = q \proj{00} + (1-q) \proj{\psi}$ the two-qutrit mixed entangled
  state of Proposition \ref{pp:twoqutritGcl}.  Then, $\Gt ( \rho_1 ) >
  \Gfc ( \rho_1 ) = 0$, and from item 2 of Lemma \ref{le:Gcl>Gfl} it
  follows that $\Gfc ( \rho_2 ) = \frac{1-q}{2}$. By choosing
  $\ket{\phi} = \ket{00}$ in \eqref{ea:Gtr}, we have $\Gt( \rho_2 )
  \leq (1-q)^2$. So, $\Gt ( \rho_2 ) < \Gfc ( \rho_2 )$ for $q \in
  (\frac{1}{2}, 1)$.  Hence, $\Gt \gtrless \Gfc$.
  \qed
\end{proof}

From Theorem \ref{th:ineqalities} and Figure \ref{GMhierarchy} we see
that $\Gm$ and $\Gml$ are upper bounds for all linear and logarithmic
GM definitions, respectively.  Since $\Lamm ( \rho )$ can be computed
for many prominent states (see e.g.~\cite{zch10}), these upper bounds
are readily accessible. For pure states the bounds are strict, and the
more mixed a given state $\rho$ is, the weaker the bounds are.

\subsection{Partitioning of state space}\label{partitioning}

As shown in the previous subsection, the inequalities
\begin{gather}
  0 \leq \Gfc ( \rho ) \leq \Gm ( \rho ) \, , \label{partineq1} \\
  0 \leq \Gfl ( \rho ) \leq \Gcl ( \rho ) \leq \Gml ( \rho )
  \, , \label{partineq2}
\end{gather}
hold for all states. Here we will see that these inequalities provide
a physically meaningful partitioning of state space. For this we first
prove that the measures in the above inequalities coincide iff $\rho$
is pure.
\begin{theorem}\label{thm:equivalence}
  For any state $\rho$ the following three conditions are equivalent:
  \begin{enumerate}
  \item $\Gfc (\rho ) = \Gm(\rho )$
    
  \item $\Gfl (\rho ) = \Gcl (\rho ) = \Gml (\rho )$
    
  \item $\rho$ is pure, i.e. $\rho = \proj{\psi}$
  \end{enumerate}
\end{theorem}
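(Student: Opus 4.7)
The implications $3 \Rightarrow 1$ and $3 \Rightarrow 2$ are immediate: for a pure state $\rho = \proj{\psi}$, the identity $F^2(\proj{\psi},\sigma) = \bra{\psi}\sigma\ket{\psi}$ reduces both $\Lamf(\rho)$ and $\Lamm(\rho)$ to $\Lam(\ket{\psi})$, while the only decomposition of $\proj{\psi}$ into pure states is trivial, forcing $\Gcl(\rho) = \Gl(\ket{\psi})$ as well. The equivalence $1 \Leftrightarrow 2$ is then formal: since $\Gfc = 1-\Lamf$, $\Gm = 1-\Lamm$, $\Gfl = -\log\Lamf$ and $\Gml = -\log\Lamm$, both equalities $\Gfc(\rho) = \Gm(\rho)$ and $\Gfl(\rho) = \Gml(\rho)$ are equivalent to $\Lamf(\rho) = \Lamm(\rho)$, and the sandwich $\Gfl \leq \Gcl \leq \Gml$ from Theorem \ref{th:ineqalities} then pins $\Gcl$ to the common value.

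The substance lies in $1 \Rightarrow 3$, which I argue by contradiction. Assume $\Lamf(\rho) = \Lamm(\rho) = L$ but $\rank\rho \geq 2$. Choose a CPS $\ket{\varphi_0}$, so $\bra{\varphi_0}\rho\ket{\varphi_0} = L > 0$ and in particular $\sqrt\rho\ket{\varphi_0}\neq 0$. Because the product states span $\cH$ while $\sqrt\rho$ has rank at least two, the image $\{\sqrt\rho\ket{\varphi}:\ket{\varphi}\in\pro\}$ cannot be contained in any one-dimensional subspace; hence a product state $\ket{\varphi_1}$ exists for which $\sqrt\rho\ket{\varphi_0}$ and $\sqrt\rho\ket{\varphi_1}$ are linearly independent.

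Set $\sigma_t := (1-t)\proj{\varphi_0} + t\proj{\varphi_1} \in \sep$ for $t \in [0,1]$, and write $M(t) := \sqrt\rho\,\sigma_t\,\sqrt\rho = \tilde V_t \tilde V_t^{\dagger}$ with $\tilde V_t$ the matrix whose columns are $\sqrt{1-t}\,\sqrt\rho\ket{\varphi_0}$ and $\sqrt{t}\,\sqrt\rho\ket{\varphi_1}$. The two nonzero eigenvalues $\mu_\pm(t)$ of $M(t)$ coincide with those of the $2\times 2$ Gram matrix $\tilde V_t^{\dagger} \tilde V_t$, whose trace equals $(1-t)L + t\bra{\varphi_1}\rho\ket{\varphi_1}$ and whose determinant equals $t(1-t)D$, where
\begin{equation*}
  D := \bra{\varphi_0}\rho\ket{\varphi_0}\bra{\varphi_1}\rho\ket{\varphi_1} - \abs{\bra{\varphi_0}\rho\ket{\varphi_1}}^2 > 0
\end{equation*}
is strictly positive by linear independence. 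Therefore
\begin{equation*}
  F^2(\rho, \sigma_t) = \bigl(\sqrt{\mu_+}+\sqrt{\mu_-}\bigr)^{2} = (1-t)L + t\bra{\varphi_1}\rho\ket{\varphi_1} + 2\sqrt{t(1-t)D} \, .
\end{equation*}

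The decisive step, and the principal obstacle, is to exploit the non-analytic behaviour of this expression at $t=0$: the linear piece $(1-t)L + t\bra{\varphi_1}\rho\ket{\varphi_1} = L - t\bigl(L - \bra{\varphi_1}\rho\ket{\varphi_1}\bigr)$ is bounded above by $L$, but the square-root contribution $2\sqrt{t(1-t)D}$ is of order $\sqrt{t}$ and dominates the $O(t)$ deficit for sufficiently small $t>0$. Consequently $F^2(\rho,\sigma_t) > L$ on some interval $(0,t_0)$, and since $\sigma_t \in \sep$ this gives $\Lamf(\rho) \geq F^2(\rho,\sigma_t) > L = \Lamm(\rho)$, contradicting our assumption. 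Hence $\rho$ must be pure.
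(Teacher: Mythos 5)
Your proof is correct, and the hard implication is handled by a genuinely different argument than the paper's. The paper proves $2 \Rightarrow 3$: it uses the concavity of $\Lamf$ to show that equality $\Lamf(\rho)=\Lamm(\rho)$ forces a single product state $\phim$ to be a common CPS of every pure component of \emph{every} decomposition of $\rho$, then invokes Theorem \ref{thm:Gfl=Gcl} (via $\Gfl=\Gcl$) to obtain an equally entangled optimal decomposition, and finally uses the unitary freedom \eqref{unitaryconversion} to manufacture a component orthogonal to $\phim$ --- contradicting the common-CPS property. You instead prove $1 \Rightarrow 3$ directly, bypassing $\Gcl$, Theorem \ref{thm:Gfl=Gcl} and decompositions entirely: you perturb the closest product state by mixing in a second, judiciously chosen product state and compute $F^2(\rho,\sigma_t)$ exactly through the $2\times 2$ Gram matrix, so that the $O(\sqrt{t})$ gain $2\sqrt{t(1-t)D}$ (with $D>0$ by strict Cauchy--Schwarz for the linearly independent vectors $\sqrt{\rho}\ket{\varphi_0}$, $\sqrt{\rho}\ket{\varphi_1}$) overwhelms the $O(t)$ loss in the linear term. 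All the supporting steps check out: the existence of $\ket{\varphi_1}$ follows because the product states span $\cH$ while $\rank\sqrt{\rho}\geq 2$, and $\bra{\varphi_1}\rho\ket{\varphi_1}\leq L$ bounds the deficit by $tL$. Your route is more elementary and self-contained, and it yields the sharper quantitative fact that $\Lamf(\rho)>\Lamm(\rho)$ strictly for every genuinely mixed state, exhibited by an explicit separable witness; the paper's route, in exchange, extracts structural information about optimal decompositions that is reused elsewhere.
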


\begin{proof}
  \begin{itemize}
  \item 1 $\Rightarrow$ 2: If $\Gf (\rho ) = \Gm (\rho )$, then $\Lamf
    (\rho ) = \Lamm (\rho )$, and $\Gfl (\rho ) = \Gml (\rho )$. From
    \eqref{partineq2} it follows that $\Gfl ( \rho ) = \Gcl ( \rho ) =
    \Gml ( \rho )$.

  \item 2 $\Rightarrow$ 1: If $\Gfl (\rho ) = \Gml (\rho )$, then
    $\Lamf (\rho ) = \Lamm (\rho )$, and $\Gf (\rho ) = \Gm (\rho )$.

  \item 3 $\Rightarrow$ 2: Obvious from the definition of GM for pure
    states.

  \item 2 $\Rightarrow$ 3: Let $\{ p_i , \ket{\psi_i} \}$ be some
    decomposition of $\rho$, and let $\phim$ be a CPS of $\Lamm ( \rho
    )$.  Using the concavity of $\Lamf (\rho )$, we have
    \begin{equation}\label{lem17proof}
      \begin{split}
        \Lamf ( \rho ) &\geq \sum_i p_i \Lam (\ket{\psi_i}) =
        \sum_i p_i \max_{\ket{\varphi_i} \in \pro}
        \abs{\braket{\varphi_i}{\psi_i}}^2 \\
        &\geq
        \sum_i p_i
        \abs{\braket{\phi_{\text{m}}}{\psi_i}}^2 =
        \bra{\phi_{\text{m}}} \rho \phim
        = \max_{\ket{\varphi} \in \pro}
        \bra{\varphi} \rho \ket{\varphi} =
        \Lamm ( \rho ) \, .
      \end{split}
    \end{equation}
    From $\Gfl (\rho) = \Gml (\rho)$ it follows that $\Lamf (\rho) =
    \Lamm (\rho)$, and therefore the inequalities in
    \eqref{lem17proof} must become equalities.  From the second
    inequality of \eqref{lem17proof} it follows that there exists a
    $\phim \in \pro$ that is a CPS for all $\ket{\psi_i}$.
    Furthermore, since the choice of decomposition was arbitrary, this
    $\phim$ must be a CPS for all $\ket{\psi_i}$ of all conceivable
    decompositions $\{ p_i , \ket{\psi_i} \}$ of $\rho$. As shown
    below, this is possible only for pure states, i.e.~$\rank \rho =
    1$.

    Assume $r = \rank \rho \geq 2$. Because of $\Gfl (\rho) = \Gcl
    (\rho)$, it follows from Theorem \ref{thm:Gfl=Gcl} that there
    exists a decomposition $\{ P_i , \ket{\Psi_i} \}$ of $\rho$ where
    the $\ket{\Psi_i}$ are all equally entangled. Since $\phim$
    is a common CPS, we can write $\ket{\Psi_1}$ and $\ket{\Psi_2}$ as
    \begin{align*}
      \ket{\Psi_1} &= \alpha \phim +
      \sqrt{1- \alpha^2} \ket{\phi^{\perp}_{1}} \, , \\
      \ket{\Psi_2} &= \alpha \phim +
      \sqrt{1- \alpha^2} \ket{\phi^{\perp}_{2}} \, ,
    \end{align*}
    with $\alpha \in (0,1)$. Here, $\ket{\phi^{\perp}_{1}}$ and
    $\ket{\phi^{\perp}_{2}}$ are states orthogonal to $\phim$, and
    because of $\rank \rho \geq 2$, we can assume that
    $\ket{\phi^{\perp}_{1}} - \ket{\phi^{\perp}_{2}}$ is a non-zero
    vector. Clearly, $\Lam (\ket{\Psi_1}) = \Lam (\ket{\Psi_2}) =
    \alpha^2 > 0$.  Starting with $\{ P_i , \ket{\Psi_i} \}$, we use
    \eqref{unitaryconversion} to construct a new decomposition $\{ q_i
    , \ket{\Theta_i} \}$ of $\rho$ by using the $r \times r$ unitary matrix
    \begin{equation*}
      U =
      \begin{pmatrix}
        \sqrt{\frac{P_2}{P_1 + P_2}} & - \sqrt{\frac{P_1}{P_1 + P_2}}
        & 0 \\
        \sqrt{\frac{P_1}{P_1 + P_2}} & \sqrt{\frac{P_2}{P_1 + P_2}}
        & 0 \\
        0 & 0 & \one_{r-2}
      \end{pmatrix}
      \, ,
    \end{equation*}
    where $\one_{r-2}$ denotes the $(r-2)$-dimensional unit matrix.
    With this we have
    \begin{equation*}
      \ket{\Theta_1} \propto
      \sqrt{P_2} \sqrt{P_1} \ket{\Psi_1} -
      \sqrt{P_1} \sqrt{P_2} \ket{\Psi_2}
      = \sqrt{P_1 P_2} \sqrt{1- \alpha^2}
      \big( \ket{\Phi^{\perp}_1} - \ket{\Phi^{\perp}_2} \big) \, ,
    \end{equation*}
    and therefore, $\braket{\phi_{\text{m}}}{\Theta_1} = 0$. This is a
    contradiction to the requirement that $\phim$ is also a CPS for
    $\ket{\Theta_1}$. This completes the proof.
    \qed
  \end{itemize}
\end{proof}
According to Theorem \ref{thm:equivalence}, $\Gm$ (or $\Gml$)
coincides with the proper entanglement measure $\Gfc$ (or $\Gfl$) only
for pure states, thus further reinforcing the observation that $\Lamm
( \rho )$ assesses the entanglement as well as the mixedness of a
state $\rho$.  Together with the known fact that $\Lamf ( \rho ) < 1$
iff $\rho$ is entangled, we can use \eqref{partineq1} to partition the
state space into four subsets, $\SH = A \cup B \cup C \cup D$,
corresponding to pure separable, pure entangled, mixed separable and
mixed entangled states, respectively.  As shown in Table
\ref{tab:GMinequalities} and Figure \ref{statespace}, this
partitioning is done by determining whether the inequalities in
\eqref{partineq1} are strict or become equalities.  The inequalities
\eqref{partineq1} between the logarithmic measures can also be used
for partitioning $\SH$, and in that case, the subset of mixed
entangled states is further divided into three subsets, $D = D_1 \cup
D_2 \cup D_3$, because $\Gfc ( \rho ) < \Gm ( \rho )$ corresponds to
the three possible cases $\Gfl ( \rho ) < \Gcl ( \rho ) < \Gml ( \rho
)$, $\Gfl ( \rho ) = \Gcl ( \rho ) < \Gml ( \rho )$ and $\Gfl ( \rho )
< \Gcl ( \rho ) = \Gml ( \rho )$.

\begin{table}
  \caption{\label{tab:GMinequalities} The inequalities
    \protect\eqref{partineq1} facilitate a partitioning of state
    space $\SH$ into four subsets with clear physical meaning,
    and the inequalities \eqref{partineq2} allow for a further
    division of the genuinely mixed entangled states
    $D = D_1 \cup D_2 \cup D_3$ into three subsets.
    Theorem~\ref{thm:Gfl=Gcl} and Theorem~\ref{thm:Gcl=Gml} describe
    necessary and sufficient conditions for states belonging to $D_2$ and $D_3$, respectively.}
  \centering
  \begin{tabular}{|cc|ccc|}
    \hline
    \hline
    \multicolumn{2}{|c}{\multirow{2}{*}{Subset}} &
    Characterization &
    \multirow{2}{*}{Linear measures} &
    \multirow{2}{*}{Logarithmic measures} \\
    \multicolumn{2}{|c}{} &
    of states in subset & & \\
    \hline
    \multicolumn{2}{|c|}{$A$} &
    pure separable & $0 = \Gfc = \Gm$ & $0 = \Gfl = \Gcl = \Gml$ \\
    \hline
    \multicolumn{2}{|c|}{$B$} &
    pure entangled & $0 < \Gfc = \Gm$ & $0 < \Gfl = \Gcl = \Gml$ \\
    \hline
    \multicolumn{2}{|c|}{$C$} &
    genuinely mixed separable & $0 = \Gfc < \Gm$ & $0 = \Gfl = \Gcl < \Gml$ \\
    \hline
    \multicolumn{1}{|c|}{\multirow{3}{*}{$D$}} & $D_1$ &
    \multirow{3}{*}{genuinely mixed entangled} &
    \multirow{3}{*}{$0 < \Gfc < \Gm$} &
    \multicolumn{1}{|c|}{$0 < \Gfl < \Gcl < \Gml$} \\ \cline{2-2} \cline{5-5}
    \multicolumn{1}{|c|}{} & $D_2$ & & &
    \multicolumn{1}{|c|}{$0 < \Gfl = \Gcl < \Gml$} \\ \cline{2-2} \cline{5-5}
    \multicolumn{1}{|c|}{} & $D_3$ & & &
    \multicolumn{1}{|c|}{$0 < \Gfl < \Gcl = \Gml$} \\
    \hline
    \hline
  \end{tabular}
\end{table}

Since $\Gfl ( \rho ) = \Gcl ( \rho )$ holds for isotropic states and
all 2 qubit states, these states belong to the set $D_2$.  In
particular, for the special case of 2 qubits ($\cH = \bC^2 \otimes
\bC^2$) we have $D = D_2$, i.e. $D_1$ and $D_3$ are empty.  From this
one could conjecture that generic mixed entangled states belong to
$D_2$.  However, for maximally correlated states, it is clear from
item 5 of Lemma \ref{le:Gcl>Gfl} that most states do not belong to
$D_2$.  The following theorem elucidates the relationship between the
parameters of rank-two maximally correlated states and the subgroups
$D_1$, $D_2$ and $D_3$.  For this we note that $\Lamm ( \rho )$ can be
easily calculated for maximally correlated states of the form in
Proposition \ref{pp:twoquditGcl} as $\Lamm ( \rho ) = \max \{
\frac{q}{m} , \frac{1-q}{n} \}$.
\begin{theorem}\label{th:Dsubset}
  Depending on the value of the parameters $m,n \in \mathbb{N}$ with
  $\frac{m}{n} \leq 1$ and $q \in (0,1)$, the rank-two maximally
  correlated states of Proposition \ref{pp:twoquditGcl} belong to
  either of the three subsets of genuinely mixed entangled states:
  \begin{itemize}
  \item $D_2$, i.e.~$\Gfl ( \rho ) = \Gcl ( \rho ) < \Gml ( \rho )$:
    for $\frac{m}{n} = 1 \, ,$

  \item $D_3$, i.e.~$\Gfl ( \rho ) < \Gcl ( \rho ) = \Gml ( \rho )$:
    for $\frac{m}{n} < \frac{1}{\E}$ and $q \geq \frac{\E m}{n} \, ,$

  \item $D_1$, i.e.~$\Gfl ( \rho ) < \Gcl ( \rho ) < \Gml ( \rho )$:
    for all other parameter values.
  \end{itemize}
\end{theorem}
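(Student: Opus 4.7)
The plan is to compare $\Gfl(\rho)$, $\Gcl(\rho)$ and $\Gml(\rho)$ directly on each parameter regime, using three closed-form ingredients that are already at hand: the explicit expression for $\Gcl(\rho)$ supplied by Proposition~\ref{pp:twoquditGcl}, the formula $\Gfl(\rho)=-\log\bigl(q/m+(1-q)/n\bigr)$ from item~3 of Lemma~\ref{le:Gcl>Gfl}, and the immediate $\Lamm(\rho)=\max\{q/m,(1-q)/n\}$, which gives $\Gml(\rho)=\min\bigl\{\log(m/q),\log\bigl(n/(1-q)\bigr)\bigr\}$. By item~5 of Lemma~\ref{le:Gcl>Gfl}, $\Gfl(\rho)=\Gcl(\rho)$ holds iff $m=n$, so the $\Gfl$--$\Gcl$ comparison is already fixed throughout, and only the $\Gcl$--$\Gml$ comparison remains to be settled region by region.

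The $D_2$ case $m/n=1$ is then immediate: both $\Gfl$ and $\Gcl$ equal $\log m$, while $\Gml=\log m-\max\{\log q,\log(1-q)\}>\log m$ for every $q\in(0,1)$. For $m/n<1$ one has $m\neq n$, so the state lies in $D_1\cup D_3$ and the only remaining question is whether $\Gcl=\Gml$ or $\Gcl<\Gml$. In the regime $m/n<1/\E$ with $q\geq\E m/n$, Proposition~\ref{pp:twoquditGcl} yields $\Gcl(\rho)=\log(m/q)$; the hypothesis $q\geq\E m/n$ easily implies $q(m+n)\geq m$, i.e.~$q/m\geq(1-q)/n$, so $\Gml(\rho)=\log(m/q)=\Gcl(\rho)$, placing $\rho$ in $D_3$ exactly as claimed.

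All remaining parameter combinations are the $D_1$ claim, and here the key task is the strict inequality $\Gcl(\rho)<\Gml(\rho)$. I would split each region according to the branch of the minimum defining $\Gml$, using the threshold $q=m/(m+n)$ (one checks $m/(m+n)<\E m/n$, so both sub-branches can occur). This yields four sub-cases. In the two sub-cases with $q<m/(m+n)$, direct substitution gives $\Gml-\Gcl$ as a manifestly positive sum: $q\log(n/m)-\log(1-q)$ for $1/\E\leq m/n<1$, and $-\log(1-q)+q\cdot\frac{n\log\E}{m\E}$ for $m/n<1/\E$. Both summands are strictly positive for $q\in(0,1)$, so these cases are automatic.

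The genuine obstacle is the remaining pair of sub-cases, where both terms of the difference depend on $q$ in an essential way. The delicate one is $m/n<1/\E$ with $m/(m+n)\leq q<\E m/n$, where $f(q):=\Gml-\Gcl=\log(m/n)-\log q+q\cdot\tfrac{n\log\E}{m\E}$. The plan is to verify, by continuity of $\Gcl$ across the regimes of Proposition~\ref{pp:twoquditGcl}, that $f(\E m/n)=0$ and also $f'(\E m/n)=0$. Since $f''(q)=1/(q^2\ln 2)>0$, $f$ is strictly convex, so $\E m/n$ is its unique minimum and $f(q)>0$ on $[m/(m+n),\E m/n)$. The final sub-case $1/\E\leq m/n<1$ with $q\geq m/(m+n)$ is handled in the same convexity spirit: the difference $h(q)=(1-q)\log(m/n)-\log q$ is strictly convex, vanishes at $q=1$, and satisfies $h'(1)=\log(n/m)-\log\E\leq 0$ precisely because $m/n\geq 1/\E$, which forces $h(q)>0$ for $q<1$. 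Combining the four sub-cases completes the $D_1$ classification.
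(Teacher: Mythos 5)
Your proposal is correct, and its skeleton coincides with the paper's: item~5 of Lemma~\ref{le:Gcl>Gfl} settles the $\Gfl$ versus $\Gcl$ comparison once and for all, $\Gml(\rho)=\min\{\log(m/q),\log(n/(1-q))\}$ follows from $\Lamm(\rho)=\max\{q/m,(1-q)/n\}$, and the $D_3$ case is the observation that $q\geq \E m/n$ forces the $\log(m/q)$ branch of the minimum, which then matches the second case of Proposition~\ref{pp:twoquditGcl}. Where you genuinely depart from the paper is the $D_1$ step, i.e.\ the strictness of $\Gcl(\rho)<\Gml(\rho)$ in the remaining regimes. The paper disposes of this by pointing back into the proof of Lemma~\ref{le:fhs}: the two ``constant'' boundaries of $f(h,s)$ have values $\log(m/q)$ and $\log(n/(1-q))$, and in the first and third cases the minimum is attained strictly below both, so $\Gcl<\Gml$ follows without any further computation. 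You instead compute $\Gml-\Gcl$ explicitly as a one-variable function of $q$ in four sub-cases split along $q=m/(m+n)$, and establish positivity either by inspection (two sub-cases are sums of manifestly positive terms) or by strict convexity anchored at the point where the difference and its derivative both vanish ($q=\E m/n$ in one sub-case, $q=1$ in the other; both verifications check out, using $\log_2\E=1/\ln 2$). Your route is longer but self-contained and arguably more rigorous at exactly the point where the paper's argument is an appeal to the internal structure of an auxiliary proof; the paper's route is shorter but requires the reader to re-examine Lemma~\ref{le:fhs} to confirm that the attained minimum really is strictly below both boundary constants. Both are valid proofs of the theorem.
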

\begin{proof}
  From item 5 of Lemma \ref{le:Gcl>Gfl} it follows that $\Gfl ( \rho )
  = \Gcl ( \rho )$ iff $m = n$.  Hence, states belong to $D_2$ iff
  $\frac{m}{n} = 1$.  For $\frac{m}{n} < 1$ one can distinguish
  between $\rho \in D_1$ and $\rho \in D_3$ by determining whether
  $\Gcl ( \rho ) \leq \Gml ( \rho )$ is strict or not.

  Recall that $\Gml ( \rho ) = \min \{ \log ( \frac{m}{q} ) , \log (
  \frac{n}{1-q} ) \}$.  If $q \geq \frac{\E m}{n}$, then $\frac{m}{q}
  \leq \frac{n}{\E} < \frac{n}{1-q}$, yielding $\Gml ( \rho ) = \log (
  \frac{m}{q} )$.  Therefore, if $\frac{m}{n} < \frac{1}{\E}$ and $q
  \geq \frac{\E m}{n}$ (second case of Proposition
  \ref{pp:twoquditGcl}), then $\Gcl ( \rho ) = \Gml ( \rho ) = \log (
  \frac{m}{q} )$, yielding $\rho \in D_3$.

  Regarding the first and third case of Proposition
  \ref{pp:twoquditGcl}, it is seen from its proof (including Lemma
  \ref{le:fhs} and its proof) that for parameter values in the
  interior of the domain (i.e.~excluding $\frac{m}{n} = 1$) the value
  of $\Gcl ( \rho )$ is strictly smaller than both $\log ( \frac{m}{q}
  )$ and $\log ( \frac{n}{1-q} )$.  Therefore, $\Gcl ( \rho ) < \Gml (
  \rho ) = \min \{ \log ( \frac{m}{q} ) , \log ( \frac{n}{1-q} ) \}$
  holds, so $\rho \in D_1$.
  \qed
\end{proof}

\begin{figure}
  \centering
  \begin{overpic}[scale=.3]{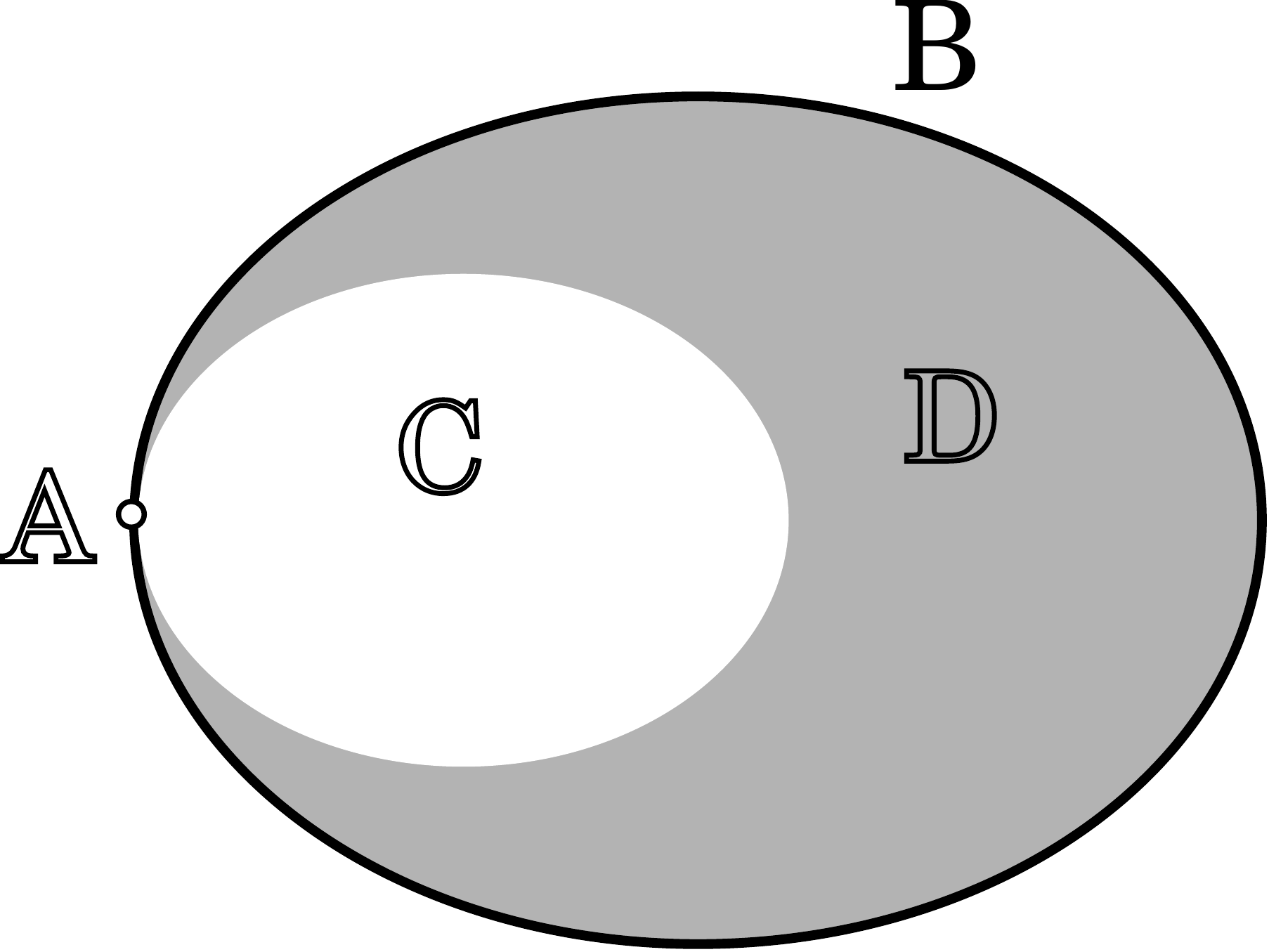}
    \put(-25,23){($0 \! = \! \Gf \! = \! \Gm$)}
    \put(-25,15){($0 \! = \! \Gfl \! = \! \Gml$)}
    \put(83,70){($0 \! < \! \Gf \! = \! \Gm$)}
    \put(83,62){($0 \! < \! \Gfl \! = \! \Gml$)}
    \put(63,32){($0 \! < \! \Gf \! < \! \Gm$)}
    \put(63,24){($0 \! < \! \Gfl \! < \! \Gml$)}
    \put(19,30){($0 \! = \! \Gf \! < \! \Gm$)}
    \put(19,22){($0 \! = \! \Gfl \! < \! \Gml$)}
  \end{overpic}
  \caption{\label{statespace} Illustration of the partitioning of
    state space $\SH = A \cup B \cup C \cup D$ into four pairwise
    disjoint sets. The pure states ($A \cup B$) lie at the boundary,
    and the separable states ($A \cup C$) form a closed, convex subset
    of $\SH$. The set a state $\rho$ belongs to is uniquely determined
    by whether the inequalities in \protect\eqref{partineq1} are
    strict or equalities. By means of $\Lamf (\rho)$ and $\Lamm
    (\rho)$, the same partitioning can be facilitated by the
    logarithmic quantities $\Gfl$ and $\Gml$. When taking $\Gcl$ into
    account, as well, the set of genuinely mixed entangled states $D =
    D_1 \cup D_2 \cup D_3$ is further subdivided into three pairwise
    disjoint sets.}
\end{figure}

From Theorem \ref{th:Dsubset} we see that the states belonging to
$D_3$ precisely coincide with those outlined in the second case of
Proposition \ref{pp:twoquditGcl}.  This allows us to demonstrate
Theorem \ref{thm:Gcl=Gml}: Let $\rho \in D_3$ be a rank-two maximally
correlated state. Then every CPS $\phim$ of $\Lamm ( \rho )$
necessarily has the form $\phim = ( \sum_{i=1}^{m} q_i \ket{i} )
\otimes ( \sum_{i=1}^{m} q_i^{*} \ket{i} )$ with $\sum_i \abs{q_i}^2 =
1$.  According to Proposition~\ref{pp:twoquditGcl}, an optimal
decomposition for $\Gcl(\rho)$ is $\rho = \frac{1}{2} \proj{\Psi_{+}}
+ \frac{1}{2} \proj{\Psi_{-}}$ with $\ket{\Psi_{\pm}} = \sqrt{q}
\ket{\psi_m} \pm \sqrt{1-q} \ket{\psi_n}$.  Using this decomposition,
we obtain $\Lamm ( \rho ) = \Lam (\ket{\Psi_{\pm}}) =
\abs{\braket{\phi_{\text{m}}}{\Psi_{\pm}}}^2 = \frac{q}{m}$, thus
verifying Theorem \ref{thm:Gcl=Gml}.

For maximally correlated qutrit states the only possible value of
$\frac{m}{n}$ is $\frac{1}{2}$, so all states lie in $D_1$. In
contrast to this, for four levels ($d = 4$), there are rank-two
maximally correlated states in each of the three sets $D_1$, $D_2$ and
$D_3$.

In the following we determine whether the various subsets of the
genuinely mixed states, $C \cup D = C \cup D_1 \cup D_2 \cup D_3$, are
convex or not.  $C$ and $C \cup D$ are clearly convex sets, while $D$
is not. All other subsets are investigated in the following lemma.
\begin{proposition}\label{nonconvex}
  The following sets are not convex: $D_1$, $D_2$, $D_3$, $D_1 \cup
  D_2$, $D_1 \cup D_3$, $D_2 \cup D_3$, $C \cup D_1$, $C \cup D_2$, $C
  \cup D_3$, and $C \cup D_1 \cup D_3$. Regarding $C \cup D_1 \cup
  D_2$ and $C \cup D_2 \cup D_3$, at least one of the two sets is not
  convex.
\end{proposition}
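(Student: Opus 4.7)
The plan is to disprove convexity of each listed set $S$ by exhibiting two states $\rho_1,\rho_2\in S$ and a weight $p\in(0,1)$ for which $p\rho_1+(1-p)\rho_2\notin S$. I would separate the cases into two regimes based on whether $S\subseteq D$ (entangled) or $S\supseteq C$ (contains mixed separable states). For the former, I would exploit the fact that a convex combination of entangled states can be separable; for the latter, I would leverage the fragility of the defining equalities $\Gfl=\Gcl$ (for $D_2$) and $\Gcl=\Gml$ (for $D_3$) under mixing.

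For the subsets of $D$, the constructions are explicit. For $D_2$, take the noisy two-qubit Bell states $\rho_\pm=(1-\epsilon)\proj{\Phi^\pm}+\frac{\epsilon}{4}\one$ for small $\epsilon>0$; by Proposition~\ref{pr:fctwoqubits} both lie in $D_2$, yet their average collapses to the classical diagonal state $\frac{1-\epsilon}{2}(\proj{00}+\proj{11})+\frac{\epsilon}{4}\one\in C$. Since $C$ is disjoint from every $D_i$, the same counterexample also refutes convexity of $D_1\cup D_2$ and $D_2\cup D_3$. For $D_1$, I would take the two-qutrit rank-two maximally correlated states $\rho_\pm=q\proj{00}+(1-q)\proj{\psi_\pm}$ with $\ket{\psi_\pm}=\frac{1}{\sqrt{2}}(\ket{11}\pm\ket{22})$; Theorem~\ref{th:Dsubset} places both in $D_1$ since $m/n=1/2$ is neither equal to $1$ nor less than $1/\E$, while their average $q\proj{00}+\frac{1-q}{2}(\proj{11}+\proj{22})$ is classical and hence in $C$. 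The same construction handles $D_1\cup D_3$. For $D_3$, I would take two LU-equivalent rank-two mc states satisfying $m/n<1/\E$ and $q\geq \E m/n$, related by a local phase unitary on one subsystem, whose average becomes a higher-rank mc state for which the strict inequality $\Gcl<\Gml$ holds, placing it in $D_1$ outside $D_3$.

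For the sets containing $C$, I would combine a state in the included subset with a companion chosen so the mixture falls into the excluded $D_j$. Concrete constructions employ either two-qutrit rank-two mc states in $D_1$ combined with perturbations tuned to enforce $\Gfl=\Gcl$ at the mixture (pushing it into $D_2$), or isotropic states in $D_2$ combined with off-axis perturbations designed to break $\Gfl=\Gcl$ (pushing it into $D_1$); such constructions address $C\cup D_1$, $C\cup D_2$, $C\cup D_3$, and $C\cup D_1\cup D_3$. The final weaker claim, that at least one of $C\cup D_1\cup D_2$ and $C\cup D_2\cup D_3$ fails convexity, follows by an intersection argument: their intersection is exactly $C\cup D_2$, already shown to be non-convex, so the two supersets cannot both be convex, as the intersection of convex sets is convex.

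The principal obstacle is certifying subset membership of the constructed mixtures, which requires computing $\Gfl$, $\Gcl$, and $\Gml$ for states outside the rank-two mc family treated by Proposition~\ref{pp:twoquditGcl}. For the $D_3$ counterexample this demands extending that proposition's minimization to the rank-three setting. For the $C\cup D_i$ counterexamples the difficulty compounds: one must track how the three measures evolve along the mixing path from the included region to $C$ and verify that this path genuinely crosses into the specific targeted $D_j$ rather than sidestepping it, which typically requires explicit parameter tuning within a concrete family of states and is the most delicate part of the argument.
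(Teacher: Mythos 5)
Your overall strategy (exhibit two states in the set whose mixture escapes it) matches the paper's, and several of your explicit constructions are sound and even simpler than the paper's: the noisy two-qubit Bell states whose average is diagonal dispose of $D_2$, $D_1\cup D_2$ and $D_2\cup D_3$, and the two-qutrit states $q\proj{00}+(1-q)\proj{\psi_\pm}$ with $\ket{\psi_\pm}=\tfrac{1}{\sqrt2}(\ket{11}\pm\ket{22})$ dispose of $D_1$ and $D_1\cup D_3$, since their average lands in $C$. Your intersection argument for the final claim is also a clean reformulation of the paper's logic. However, there is a genuine gap covering roughly half the statement: your mixtures for $D_1$ land in $C$, so they say nothing about $C\cup D_1$ or $C\cup D_1\cup D_3$; your $D_3$ construction is only a sketch whose certification you yourself identify as requiring a rank-three extension of Proposition \ref{pp:twoquditGcl}; and for $C\cup D_1$, $C\cup D_2$, $C\cup D_3$ and $C\cup D_1\cup D_3$ you offer only a programme of ``tuned perturbations'' with no concrete states and no way to certify which subset $D_j$ the mixtures fall into. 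Since $C\cup D_2$ is among the unproven cases, your intersection argument for the final claim is also left hanging.

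The idea you are missing, and which makes the paper's proof short, is to aim the mixtures at $D_2$ rather than at $C$ or $D_1$: membership in $D_2$ for maximally correlated states is certified by the purely combinatorial criterion of item 5 of Lemma \ref{le:Gcl>Gfl} (all blocks $\ket{\Theta_i}$ of equal dimension give $\Gfl=\Gcl$, and Theorem \ref{thm:equivalence} then forces $\Gcl<\Gml$ for a genuinely mixed entangled state), so no new minimization is needed. Concretely, the paper takes $\rho_\pm=\tfrac12\proj{\Psi_{12}}+\tfrac12\proj{\Psi_{3456}^{\pm}}$ with $\ket{\Psi_{3456}^{\pm}}=\tfrac{1}{\sqrt2}(\ket{\Psi_{34}}\pm\ket{\Psi_{56}})$; these are LU-equivalent rank-two mc states in $D_1$ ($m/n=\tfrac12$), and their average $\tfrac12\proj{\Psi_{12}}+\tfrac14\proj{\Psi_{34}}+\tfrac14\proj{\Psi_{56}}$ has all blocks of size $2$, hence lies in $D_2$ --- settling $D_1$, $D_1\cup D_3$, $C\cup D_1$ and $C\cup D_1\cup D_3$ at once. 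The analogous trick with three cube-root-of-unity phases sends three $D_3$ states (with $m/n=\tfrac13<\tfrac1\E$, $q\ge\tfrac{\E}{3}$) into $D_2$, settling $D_3$ and $C\cup D_3$; and mixing $\tfrac12\proj{\Psi_{12}^{\pm}}+\tfrac12\proj{\Psi_{34}}\in D_2$ gives a state in $D\setminus D_2$ (blocks of sizes $1,1,2$), settling $C\cup D_2$ and the final claim. You would need to supply constructions of this kind to close your gaps; the perturbative route you describe for the $C\cup D_j$ cases is unlikely to be tractable because tracking $\Gcl$ along a generic mixing path is exactly the computation the paper's block-size criterion is designed to avoid.
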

\begin{proof}
  To prove that $D_1$, $D_1 \cup D_3$, $C \cup D_1$ and $C \cup D_1
  \cup D_3$ are not convex, it suffices to find $\rho_1, \rho_2 \in
  D_1$ so that $\frac{1}{2} (\rho_1 + \rho_2) \in D_2$.  Using the
  notation $\ket{\Psi_{a \ldots z}} := \frac{1}{\sqrt{z-a+1}}
  (\ket{aa} + \ldots + \ket{zz})$ for MES, we choose $\rho_1 , \rho_2$
  to be
  \begin{equation*}
    \rho_{\pm} = \frac{1}{2} \proj{\Psi_{12}} +
    \frac{1}{2} \proj{\Psi_{3456}^{\pm}}
    \, , \quad \text{with} \quad
    \ket{\Psi_{3456}^{\pm}} := \frac{1}{\sqrt{2}}
    \big( \ket{\Psi_{34}} \pm \ket{\Psi_{56}} \big) \, .
  \end{equation*}
  Evidently, $\rho_{+}$ and $\rho_{-}$ are LU-equivalent, so they lie
  in the same set $D_i$ ($i = 1,2,3$).  Since $\frac{m}{n} =
  \frac{1}{2}$ for $\rho_{+}$, it follows from Theorem
  \ref{th:Dsubset} that $\rho_{\pm} \in D_1$.  On the other hand,
  \begin{equation*}
    \rho = \tfrac{1}{2} (\rho_{+} + \rho_{-}) =
    \frac{1}{2} \proj{\Psi_{12}} + \frac{1}{4} \proj{\Psi_{34}} +
    \frac{1}{4} \proj{\Psi_{56}} \, ,
  \end{equation*}
  so it follows from item 5 of Lemma \ref{le:Gcl>Gfl} that $\rho \in
  D_2$.

  Next, we prove that $D_3$ and $C \cup D_3$ are not convex by finding
  $\sigma_1, \sigma_2, \sigma_3 \in D_3$ that yield $\sigma =
  \frac{1}{3} (\sigma_1 + \sigma_2 + \sigma_3 ) \in D_2$. For this,
  consider
  \begin{gather*}
    \sigma_{i} = q \proj{\Psi_{12}} +
    (1-q) \proj{\Psi_{345678}^{i}} \, ,
    \quad \text{with }  \tfrac{\E}{3} \leq q < 1 \, , \\
    \text{and} \quad
    \ket{\Psi_{345678}^{i}} := \frac{1}{\sqrt{3}}
    \big( \ket{\Psi_{34}} + \E^{\I \frac{2 \pi i}{3}} \ket{\Psi_{56}} +
    \E^{\I \frac{4 \pi i}{3}} \ket{\Psi_{78}} \big) \, ,
  \end{gather*}
  for $i = 1,2,3$.  Evidently, the $\sigma_i$ are LU-equivalent, and
  since $\frac{m}{n} = \frac{1}{3}$ and $q \geq \frac{\E}{3}$, it follows
  from Theorem \ref{th:Dsubset} that $\sigma_i \in D_3$.  On the other
  hand,
  \begin{equation*}
    \sigma = \frac{1}{3} (\sigma_1 + \sigma_2 + \sigma_3 ) =
    q \proj{\Psi_{12}} + \frac{1-q}{3} \big( \proj{\Psi_{34}} +
    \proj{\Psi_{56}} + \proj{\Psi_{78}} \big) \, ,
  \end{equation*}
  so it follows from item 5 of Lemma \ref{le:Gcl>Gfl} that $\sigma \in
  D_2$.

  To prove that $D_2$, $D_1 \cup D_2$ and $D_2 \cup D_3$ are not
  convex, we consider $\rho_{\pm} = \frac{1}{2} \proj{\Psi_{12}^{\pm}}
  + \frac{1}{2} \proj{\Psi_{34}^{\pm}} \in D_2$, which yields $\rho =
  \frac{1}{2} ( \rho_{+} + \rho_{-}) = \frac{1}{4} \one \in C$.

  To prove that $C \cup D_2$ is not convex, we consider $\rho_{\pm} =
  \frac{1}{2} \proj{\Psi_{12}^{\pm}} + \frac{1}{2} \proj{\Psi_{34}}
  \in D_2$, and the genuinely mixed entangled state $\rho =
  \frac{1}{2} ( \rho_{+} + \rho_{-}) = \frac{1}{4} \proj{11} +
  \frac{1}{4} \proj{22} + \frac{1}{2} \proj{\Psi_{34}} \in D$. From
  item 5 of Lemma \ref{le:Gcl>Gfl} it follows that $\rho \notin D_2$,
  hence $\rho \in D_1 \cup D_3$. Although we do not know whether $\rho
  \in D_1$ or $\rho \in D_3$, we can ascertain that no more than one
  of the two sets $C \cup D_1 \cup D_2$ and $C \cup D_2 \cup D_3$ can
  be convex.
  \qed
\end{proof}
Although partially answered by Proposition \ref{nonconvex}, it is
still unknown whether $C \cup D_1 \cup D_2$ or $C \cup D_2 \cup D_3$
are convex or not.

\section{Graph states and cluster states}\label{graphstates}

Graph states are an important class of states for quantum information
\cite{heb04}. A subset of them, the cluster states, are the central
ingredient for one-way quantum computation \cite{br01}.  Here we show
that a large class of graph states, including all cluster states, have
a ``universal'' closest separable state that minimizes several
inequivalent distance measures.  This property helps to prove the
previous Corollary \ref{cor:traceent}.

Consider a general pure graph state $\ket{G}$ with underlying graph $G
= (V, E)$, where $V$ is the set of vertices and $E$ is the set of
edges. The \emph{maximum independent set} $\alpha$ is the largest
possible set of non-adjacent vertices, and the \emph{minimum vertex
  cover} $\beta$ is the complement of $\alpha$, i.e. $\alpha + \beta =
V$.  The minimum vertex cover can be thought of as the minimal set of
qubits that needs to be measured in the computational basis to
completely disentangle the graph state.

As outlined in \cite{hm13}, the stabilizer $\cS$ of $\ket{G}$ is
generated by $n$ generators $\{ g_j \}_{j=1}^{n}$, and these
generators stabilize a unique state, namely $\ket{G}$.  If one or more
of the generators from the generating set of $\cS$ are discarded, the
smaller set generates a new Abelian group $\cS'$ which now stabilizes
a set of states $\{ \ket{\psi_i} \}$ rather than a unique $\ket{G}$.
Depending on the structure of the generating set of $\cS'$, the states
$\{ \ket{\psi_i} \}$ may or may not be entangled. In \cite{hm13} it is
shown that the optimal way of discarding generators, such that the
stabilized states $\{ \ket{\psi_i} \}$ are product states, is to
discard generators corresponding to the vertices of the minimum vertex
cover. So, if we only keep the generators corresponding to the maximum
independent set $\{ g_j \vert j \in \alpha \}$, denoting the
correspondingly generated Abelian group as $\cS_{\alpha}$, the states
it stabilizes $\{ \ket{\psi_i^{\alpha}} \}$ are all product
states. These states form the basis vectors used below.

Ignoring possible negative amplitudes that are not important here,
$\ket{G}$ can be written as an equal superposition of the basis
vectors,
\begin{equation}\label{graph_form}
  \ket{G} = \frac{1}{\sqrt{D_{\alpha}}} \sum_{i=1}^{D_{\alpha}}
  \ket{\psi_i^{\alpha}} \, ,
\end{equation}
where $D_{\alpha}$ is the number of states $\ket{\psi_i^{\alpha}}$ and
is related to the cardinality of the minimum vertex cover as
$D_{\alpha} = 2^{\abs{\beta}}$. In other words, for each generator
discarded from the generating set, the size of the set of stabilized
states doubles.  The decomposition \eqref{graph_form} is of minimal
rank for graph states whose underlying graphs satisfy certain
conditions \cite{hm13}. This is the case for all bipartite
($2$-colorable) graphs, which includes all cluster states of arbitrary
size and dimension.  However, there also exist many non-$2$-colorable
states that satisfy the conditions.

In the following we assume that $\ket{G}$ satisfies the conditions,
i.e.~the decomposition \eqref{graph_form} is of minimal rank.  Since
the $\ket{\psi_i^{\alpha}}$ are product states, it immediately follows
from \eqref{graph_form} that any of the $\ket{\psi_i^{\alpha}}$ is a
CPS, i.e.~$\Lam ( \ket{G} ) = \abs{\braket{G}{\psi_i^{\alpha}}}^2 =
D_{\alpha}^{-1} = 2^{- \abs{\beta}}$.  Correspondingly, the separable
state
\begin{equation}\label{graph_css}
  \delta = \frac{1}{D_{\alpha}} \sum_{i=1}^{D_{\alpha}}
  \proj{\psi_i^{\alpha}} \, ,
\end{equation}
was found to be a CSS for the REE \cite{hm13}.  Here we show that it
is also the CSS in terms of the Bures distance \eqref{bures} and the
trace distance \eqref{trdist}.

\begin{theorem}\label{th:traceent}
  Let $\ket{G}$ be a graph state of the form \eqref{graph_form} with
  minimal rank. Then, \eqref{graph_css} is a closest separable state
  with respect to the quantum relative entropy, the Bures distance,
  and the trace distance.
\end{theorem}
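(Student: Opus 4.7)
The plan is to verify the closest-separable-state property for each of the three distance measures separately. The relative entropy case is already handled in \cite{hm13}, so the remaining work lies with the Bures and trace distances, and both arguments start from the same observation: the $\ket{\psi_i^\alpha}$ form an orthonormal set (being distinct joint eigenstates of the commuting Hermitian generators of $\cS_\alpha$), so that \eqref{graph_form} gives $\abs{\braket{G}{\psi_i^\alpha}}^2 = D_\alpha^{-1} = 2^{-\abs{\beta}}$ for each $i$, and a one-line computation yields $\bra{G} \delta \ket{G} = 2^{-\abs{\beta}} = \Lam(\ket{G}) = \Lamf(\ket{G})$.

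For the Bures distance, since $\ket{G}$ is pure one has $F^2 ( \ket{G}, \sigma ) = \bra{G} \sigma \ket{G} \leq \Lamf ( \ket{G} )$ for every $\sigma \in \sep$, so $\delta$ attains the maximal fidelity and hence, by monotonicity of \eqref{bures}, the minimal Bures distance. Nothing more is needed in this case.

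For the trace distance, I plan to combine the lower bound in \eqref{tf2}, which gives $\DT ( \ket{G}, \sigma ) \geq 1 - F^2 ( \ket{G}, \sigma ) \geq 1 - 2^{-\abs{\beta}}$ for every $\sigma \in \sep$, with an explicit evaluation $\DT ( \ket{G}, \delta ) = 1 - 2^{-\abs{\beta}}$ that saturates the bound. The latter follows by diagonalising $\proj{G} - \delta = \frac{1}{D_\alpha} \sum_{i \neq j} \ketbra{\psi_i^\alpha}{\psi_j^\alpha}$: on the $D_\alpha$-dimensional span of $\{ \ket{\psi_i^\alpha} \}$ this operator is $\frac{1}{D_\alpha}(J - I)$ with $J$ the all-ones matrix, whose spectrum is $1 - 2^{-\abs{\beta}}$ with multiplicity $1$ and $-2^{-\abs{\beta}}$ with multiplicity $D_\alpha - 1$; summing half the absolute values produces exactly $1 - 2^{-\abs{\beta}}$.

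The main obstacle is this trace-distance step, since it requires the explicit spectral analysis above rather than a one-line fidelity manipulation. A subtle point that must be verified along the way is that $\proj{G} - \delta$ is genuinely supported on the orthonormal span of the $\ket{\psi_i^\alpha}$, so that no additional contributions to the trace norm arise from outside this subspace; the orthonormality itself, and thus the identification of $\abs{\braket{G}{\psi_i^\alpha}}^2$ with $D_\alpha^{-1}$, is what ultimately makes the whole argument go through and is where the minimum-rank hypothesis on \eqref{graph_form} enters.
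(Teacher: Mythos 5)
Your proof is correct and follows essentially the same route as the paper: maximal fidelity via $\Lam(\ket{G}) = D_\alpha^{-1}$ handles the Bures distance, and for the trace distance the lower bound \eqref{tf2} is saturated by the explicit spectrum of $\proj{G}-\delta$ on the orthonormal span of the $\ket{\psi_i^\alpha}$. The only minor imprecision is your attribution of the minimal-rank hypothesis: it is needed to guarantee $\Lam(\ket{G}) = D_\alpha^{-1}$ (i.e.\ that no product state achieves a larger overlap), not for the orthonormality of the $\ket{\psi_i^\alpha}$, which holds regardless since they are distinct joint eigenstates of $\cS_\alpha$.
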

\begin{proof}
  For the quantum relative entropy this was shown in \cite{hv10}, and
  we also know that $\Lam ( \ket{G} ) = D_{\alpha}^{-1}$.  Hence,
  $\max_{\sigma \in \sep} F^2 ( \ket{G} , \sigma ) = \max_{\ket{\phi}
    \in \pro} \abs{\braket{G}{\phi}}^2 = D_{\alpha}^{-1}$.  From $F^2
  ( \ket{G} , \delta ) = \abs{\bra{G} \delta \ket{G}} =
  D_{\alpha}^{-1}$ it then follows that $\delta$ minimizes the Bures
  distance \eqref{bures}.

  Next, consider the trace distance.  The inequality $\min_{\sigma \in
    \sep} D_{\text{T}} ( \ket{G} , \sigma ) \geq 1 - \max_{\sigma \in
    \sep} F^2 ( \ket{G} , \sigma ) = 1 - D_{\alpha}^{-1}$ follows from
  \eqref{tf2}.  To show that $\delta$ minimizes the trace distance, it
  therefore suffices to show that $D_{\text{T}} ( \ket{G} , \delta ) =
  1 - D_{\alpha}^{-1}$.  From \eqref{trdist} we know that
  $D_{\text{T}} ( \ket{G} , \delta ) = \frac{1}{2}
  \sum_{i=1}^{D_{\alpha}} \abs{\lambda_i}$, where the $\lambda_i$ are
  the eigenvalues of $A := \proj{G} - \delta = D_{\alpha}^{-1} \sum_{i
    \neq j} \ketbra{\psi_i^{\alpha}}{\psi_j^{\alpha}}$.  Taking the
  states $\{ \ket{\psi_i^{\alpha}} \}$ as the basis, elementary linear
  algebra yields the non-zero eigenvalues of $A$ as $\lambda_1 =
  \frac{D_{\alpha} - 1}{D_{\alpha}}$ and $\lambda_2 = \ldots =
  \lambda_{D_{\alpha}} = - \frac{1}{D_{\alpha}}$, hence $D_{\text{T}}
  ( \ket{G} , \delta ) = \frac{D_{\alpha} - 1}{D_{\alpha}} = 1 -
  D_{\alpha}^{-1}$. This completes the proof.
  \qed
\end{proof}

Theorem \ref{th:traceent} applies to all cluster states.  In
particular, for $n$ qubit cluster states $\ket{\text{C}_n}$ the
minimum vertex cover has the size $\abs{\beta} = \lfloor \frac{n}{2}
\rfloor$, yielding the cardinality $D_{\alpha} = 2^{\lfloor
  \frac{n}{2} \rfloor}$.  For even $n$, this yields $\Lam (
\ket{\text{C}_n} ) = 2^{-\frac{n}{2}}$ and $D_{\text{T}} (
\ket{\text{C}_n} , \delta ) = 1 - 2^{-\frac{n}{2}}$.

For graph states that do not satisfy the minimal rank condition, the
state \eqref{graph_css} generally does not minimize the three distance
measures, but it nevertheless yields upper bounds on the distances and
on the corresponding entanglement measures, the REE, the BE, and TE.

\section{Conclusion}\label{sec:con}

In this paper we reviewed and studied seven different definitions of
GM for arbitrary multipartite systems.  Five of these are known
($\Gc$, $\Gf$, $\Gcl$, $\Gm$, $\Gml$), one has previously received
only little interest ($\Gfl$), and one has not been studied before
($\Gt$).  The entanglement axioms of the measures were investigated,
and are summarized in Table \ref{GMproperties}.  A remaining open
question is whether $\Gcl$ satisfies weak monotonicity, something we
showed to be true at least for two-qubit states and isotropic states.
A complete quantitative hierarchy between the measures was derived
(shown in Figure \ref{GMhierarchy}), and it was found that this
hierarchy can be employed to partition the state space into pairwise
disjoint sets with clear physical properties (pure vs. mixed, and
separable vs. entangled). This is summarized in Table
\ref{tab:GMinequalities} and Figure \ref{statespace}.

As a byproduct of Corollary \ref{cor:traceent}, we found that for pure
input states $\rho = \proj{\psi}$ the trace distance $\DT ( \ket{\psi}
, \cdot )$ has in general no pure CSS.  This is in stark contrast to
the Bures distance, for which \eqref{purelambda2} implies that $\DB (
\ket{\psi} , \cdot )$ always has at least one pure CSS.  It is
therefore not trivial to find states for whom the Bures and trace
distance have a common CSS, something we did for a large class of
graph states in Theorem \ref{th:traceent}.

With regard to the convex roof-based measures $\Gc$ and $\Gcl$, it was
found that -- unlike $\Gf$ and $\Gfl$, or $\Gm$ and $\Gml$ -- these
two measures are not simple functions of each other, and in fact do
not even have the same ordering.  Nevertheless, some connections
between the two measures and their optimal decompositions could be
made (Lemma \ref{gcandgcl}, Theorem \ref{thm:Gfl=Gcl}, Corollary
\ref{coroptgcgcl}).  For this, the maximally correlated states were
particularly helpful, because their optimal decompositions for $\Gcl$
depend qualitatively on their parameters (Lemma \ref{le:Gcl>Gfl},
Proposition \ref{pp:twoqutritGcl}, Proposition \ref{pp:twoquditGcl}).
This way it could be shown in Corollary \ref{exclusiv} that for some
states $\Gc$ and $\Gcl$ do not share any common optimal decomposition.

For the linear GM it is known that the problem of finding the optimal
convex roof decomposition is equivalent to finding the closest
separable state for the fidelity ($\Gf \equiv \Gc$) \cite{skb11NJP}.
Somewhat surprisingly, we found that this is not the case for the
logarithmic GM. Already for bipartite systems the two problems are in
general inequivalent ($\Gfl \not\equiv \Gcl$) and need to be solved
separately.  Nevertheless, for two-qubit systems and for some classes
of states, such as all isotropic states and some maximally correlated
states, the two problems coincide.  While $\Gfl$ could be verified to
be weakly monotonous, the weak monotonicity of $\Gcl$ remains an open
problem.  At least for two-qubit systems this question can be answered
in the affirmative, because $\Gfl \equiv \Gcl$ then holds.  Another
open question is whether there exist states $\rho$ with $\Gcl( \rho )
= \Gfl ( \rho )$ for which the set of optimal decompositions of $\Gc (
\rho )$ is strictly larger than that of $\Gcl ( \rho )$, cf.~Theorem
\ref{thm:Gfl=Gcl}.

\begin{table}[htbp]
  \centering
  \caption{\label{GMproperties} Overview of the axioms fulfilled
    for the various definitions of GM. Subtable (a) lists the
    linear and logarithmic GM for pure states, and whether axioms
    are fulfilled when considering quantum operations between pure
    states only. Subtable (b)  lists the six distinct extensions
    of GM to mixed states. An axiom being fulfilled on pure states
    (as indicated in (a)) is necessary, but not sufficient
    for that axiom being fulfilled for mixed state extensions. The
    only exception is normalization, which is defined by pure states
    only. The properties of $\Gt$ and $\Gfl$ have not been studied
    before, and we found that $\Gcl$ satisfies Axiom 2(a) for
    two-qubit systems. For higher dimensions it is still unknown
    whether $\Gcl$ satisfies Axiom 2(a).}
  \subtable[Pure states $\ket{\Psi}$]{
    \begin{tabular}{c|cc}
      \hline
      \hline
      Properties & $\G$ & $\Gl$ \\
      \hline
      Axiom 1 & \yestick & \yestick \\
      Axiom 2(a) & \yestick & \yestick \\
      Axiom 2(b) & \yestick & \notick \\
      \hline
      Normalization & \notick & \yestick \\
      \hline
      \hline
    \end{tabular}
  }
  \hfill
  \subtable[Extensions to mixed states $\rho$]{
    \begin{tabular}{c|ccc|ccc}
      \hline
      \hline
      Properties &
      $\Gfc$ & $\Gm$ & $\Gt$ & $\Gfl$ & $\Gcl$ & $\Gml$ \\
      \hline
      Axiom 1 &
      \yestick & \notick & \notick & \yestick & \yestick & \notick \\
      Axiom 2(a) &
      \yestick & \notick & \notick & \yestick & ? & \notick \\
      Axiom 2(b) &
      \yestick & \notick & \notick & \notick & \notick & \notick \\
      \hline
      Convexity &
      \yestick & \notick & \notick & \yestick & \yestick & \notick \\
      Concavity &
      \notick & \yestick & \notick & \notick & \notick & \notick \\
      \hline
      \hline
    \end{tabular}
  }
\end{table}

As $\Gm$ and $\Gml$ assess the entanglement as well as the mixedness
of states \cite{hmm08}, they are not entanglement measures.  On the
other hand, $\Gfc$ is the only known definition of GM that yields a
strong entanglement measure. Because of its strong monotonicity and
convexity, $\Gfc$ never increases as states become more mixed.
Recalling that $\Gfc (\ket{\psi}) = \Gm (\ket{\psi})$ for pure states,
the inequality $\Gfc (\rho) \leq \Gm (\rho)$ becomes intuitively
clear.  What is more, Theorem \ref{thm:equivalence} states that this
inequality turns into an equality only if $\rho$ is pure.  Therefore,
$\Delta \G (\rho) := \Gm (\rho) - \Gfc (\rho)$ can be considered an
entropic quantity depending on the mixedness of $\rho$, akin to the
linear entropy. Equivalently, $\Delta \Gl (\rho) := \Gml (\rho) - \Gfl
(\rho)$ can be considered an alternative to the von Neumann entropy.
In contrast to this, $\Gml (\rho) - \Gcl (\rho)$ cannot be expected to
be a meaningful entropy quantifier, because this quantity can be zero
for genuinely mixed states (e.g.~maximally correlated states belonging
to set $D_3$, cf.~Theorem \ref{th:Dsubset}).  For such states the
relationship between their closest product state and optimal
decomposition was derived in Theorem \ref{thm:Gcl=Gml}.  It should
also be noted that $\Gm$ and $\Gml$ are readily accessible upper
bounds for all linear and logarithmic GM definitions, respectively.
These bounds can be easily computed from $\Lamm ( \rho )$, but they
become weaker as $\rho$ becomes more mixed.

The newly introduced extension of the linear GM by means of the trace
distance, $\Gt$, is not an entanglement measure, and it is yet unclear
what its benefits or operational implications for the study of
multipartite entanglement are. In contrast to this, the little-known
quantity $\Gfl$, which is closely related through $\Lamf$ to the
well-known definitions $\Gf$ and $\Gc$, has many desirable properties:
It satisfies normalization, convexity, weak monotonicity and is zero
for separable states. Most importantly, $\Gfl$ is so far the only
known definition of GM that yields a normalized entanglement measure
(for $\Gcl$ the question of weak monotonicity remains open).  We
therefore propose $\Gfl$ as the preferred definition of GM for studies
where normalization is a desirable feature, such as quantitative
entanglement characterization, entanglement scaling, or comparison
with other multipartite entanglement measures.

Even if $\Gcl$ should be verified as a weak entanglement measure,
$\Gfl$ has some other benefits over the logarithmic convex-roof: The
value of $\Gfl (\rho)$ is immediately known if $\Lamf (\rho)$ is
known, which is the case if either $\Gf(\rho)$ or $\Gc(\rho)$ have
been computed. Furthermore, $\Gfl$ is based on the fidelity, a widely
studied and physically meaningful distance in quantum information
theory, whereas $\Gcl$ is based on an abstract mathematical
definition.

In conclusion, for most situations either of the two fidelity-based
definitions $\Gfc$ or $\Gfl$ could be regarded as the best choice of
GM.  Since the two definitions have the same ordering and are closely
related via $\Lamf$, the choice depends merely on whether
normalization or strong monotonicity is more desirable.  The
entanglement of a state $\rho$ can then be found either by computing
the maximal fidelity to separable states, $\Lamf (\rho)$, or by
finding the optimal decomposition for the linear convex-roof.
Furthermore, the Bures entanglement \eqref{tr2} and the Groverian
entanglement $\text{E}_{\text{Gr}} ( \rho ) = \Gf ( \rho )^{1/2}$ are
closely related to $\Lamf$, and therefore to $\Gf$ and $\Gfl$
themselves, with all measures having the same ordering.  Knowing the
value of either of these provides lower bounds to $\Gm (\rho)$, $\Gcl
(\rho)$ as well as $\Gml (\rho)$. Finally, defining GM though the
maximal fidelity is also the most straightforward definition from a
historical viewpoint, because GM was originally defined by the maximal
fidelity between pure states.

\begin{acknowledgement}
  We thank Mark Wilde for pointing out an operational meaning of
  \eqref{LamFid}. LC was mainly supported by MITACS and NSERC. The CQT
  is funded by the Singapore MoE and the NRF as part of the Research
  Centres of Excellence programme.  MH acknowledges financial support
  from the Japan Society for the Promotion of Science (JSPS) through
  grant No.~23-01770, No.~23540463 and No.~23240001.
\end{acknowledgement}

\appendix

\section{Weak monotonicity of $\Gcl$ for two qubits}
\label{ap:weak}

It is known that Wootters' concurrence $C( \rho )$ is a weak
entanglement monotone for two-qubit states, i.e.~$C( \rho ) \geq C(
\sigma )$ holds for any trace-preserving quantum operation $\rho
\mapsto \sigma$ \cite{wootters98,wg03}.  Using the monotonically
increasing function $f(x) = -\log{\frac{1 + \sqrt{1-x^2}}{2}}$ and
Lemma \ref{le:GCL=twoqubit} below, it follows that $\Gcl ( \rho ) = f
( C ( \rho )) \geq f ( C ( \sigma )) = \Gcl ( \sigma )$.  So, $\Gcl$
is weakly monotonous for two qubit systems.
\begin{lemma}\label{le:GCL=twoqubit}
  Let $\rho$ be an arbitrary two-qubit state. Then, $\Gcl ( \rho ) =
  -\log{\frac{1 + \sqrt{1 - C( \rho )^2}}{2}}$, where $C( \rho )$ is
  the concurrence.
\end{lemma}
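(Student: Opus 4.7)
The proof hinges on reducing the optimization defining $\Gcl(\rho)$ to a one-parameter problem in the concurrence and then invoking Wootters' characterization of the optimal decomposition for two-qubit states. Here is the plan.

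First, I would establish the pure-state identity $\Gl(\ket{\psi}) = f(C(\ket{\psi}))$ where $f(x) = -\log \frac{1+\sqrt{1-x^2}}{2}$. Writing a two-qubit pure state in its Schmidt form $\ket{\psi} = \sqrt{\lambda_+}\ket{00} + \sqrt{\lambda_-}\ket{11}$ with $\lambda_+ \geq \lambda_-$, the maximum fidelity with a product state is achieved by $\ket{00}$, so $\Lam(\ket{\psi}) = \lambda_+$. On the other hand, the concurrence satisfies $C(\ket{\psi})^2 = 4\lambda_+\lambda_- = 4\lambda_+(1-\lambda_+)$, which inverts to $\lambda_+ = \frac{1+\sqrt{1-C(\ket{\psi})^2}}{2}$. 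Taking $-\log$ yields the claimed relation $\Gl(\ket{\psi}) = f(C(\ket{\psi}))$.

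Next I would verify that $f$ is increasing and convex on $[0,1]$. Monotonicity is transparent: $\sqrt{1-x^2}$ is decreasing, so the argument of the logarithm decreases and $f$ increases. For convexity, a direct computation of $f'(x) = \frac{x}{\sqrt{1-x^2}\,(1+\sqrt{1-x^2})}$ followed by a second differentiation shows $f''(x) > 0$ on $(0,1)$; this is the main technical step but reduces to a routine calculus exercise.

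With these two facts in hand, the proof splits into an upper and a lower bound for $\Gcl(\rho)$. For the upper bound I invoke Wootters' theorem \cite{wootters98}: every two-qubit state $\rho$ admits a decomposition $\{p_i, \ket{\psi_i}\}$ in which every pure component has the same concurrence, namely $C(\ket{\psi_i}) = C(\rho)$ for all $i$. Evaluating the convex roof on this decomposition gives $\Gcl(\rho) \leq \sum_i p_i f(C(\rho)) = f(C(\rho))$. For the lower bound, let $\{q_j, \ket{\phi_j}\}$ be any decomposition of $\rho$. Jensen's inequality applied to the convex function $f$ gives $\sum_j q_j f(C(\ket{\phi_j})) \geq f\bigl(\sum_j q_j C(\ket{\phi_j})\bigr)$; because $f$ is non-decreasing and the convex-roof definition of the concurrence yields $\sum_j q_j C(\ket{\phi_j}) \geq C(\rho)$, the right-hand side is at least $f(C(\rho))$. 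Minimizing over decompositions gives $\Gcl(\rho) \geq f(C(\rho))$, and together with the upper bound this completes the proof.

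The main obstacle I expect is strictly bookkeeping: verifying convexity of $f$ requires care in the computation of $f''$, but no conceptual difficulty arises once the Wootters reduction and Jensen's inequality are set up as above.
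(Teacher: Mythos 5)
Your proposal is correct and follows essentially the same route as the paper's own proof: establish the pure-state identity $\Gl(\ket{\psi}) = f(C(\ket{\psi}))$ with $f(x) = -\log\frac{1+\sqrt{1-x^2}}{2}$, prove the upper bound by evaluating the convex roof on Wootters' equal-concurrence decomposition, and prove the lower bound by Jensen's inequality for the convex increasing $f$ combined with the convex-roof property of the concurrence. Your treatment is if anything slightly more explicit than the paper's, which leaves the pure-state identity and the convexity of $f$ as assertions.
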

\begin{proof}
  The proof is similar to the derivation of $\Gc ( \rho )$ in
  \cite{wg03}.  First, it is easy to verify that the claim holds for
  pure states. The function $f(x) = -\log{\frac{1 + \sqrt{1-x^2}}{2}}$
  is monotonically increasing and convex for $x \in [0,1]$. Suppose
  $\rho = \sum_i P_i \proj{\Psi_i}$ is an optimal decomposition for
  $\Gcl ( \rho )$. Then
  \begin{equation}\label{ap12}
    \Gcl ( \rho ) =
    \sum_i P_i \Gl (\ket{\Psi_i}) =
    \sum_i P_i f(C(\ket{\Psi_i}))
    \geq f \Big( \sum_i P_i C(\ket{\Psi_i}) \Big) \geq
    f(C( \rho )) \, ,
  \end{equation}
  where the inequalities follow from the convexity of $f(x)$ and $C(
  \rho )$ \cite{wootters98}, respectively. On the other hand, Wootters
  found an optimal decomposition $\rho = \sum_i S_i \proj{\Phi_i}$ for
  the entanglement of formation, such that each $\ket{\Phi_i}$ has the
  same concurrence as $\rho$.  With this decomposition we obtain
  \begin{equation}\label{ap11}
    \Gcl ( \rho ) \leq
    \sum_{i} S_i \Gl ( \ket{\Phi_{i}} ) =
    \sum_{i} S_i f ( C ( \ket{\Phi_{i}} )) =
    f(C( \rho )) \, .
  \end{equation}
  From \eqref{ap11} and \eqref{ap12} it follows that $\Gcl ( \rho ) =
  f (C ( \rho ))$.
  \qed
\end{proof}

\section{Counterexample for concavity of $\Gt$}
\label{ap:gtpconcave}

Consider the isotropic state $\rho_{\iso} = p \frac{\one}{d^2} + (1-p)
\proj{\Psi}$, where $\ket{\Psi} = \frac{1}{\sqrt d} \sum^d_{i=1}
\ket{ii}$.  A counterexample for the concavity of $\Gt$ is found, if
\begin{equation}\label{ineqcond}
  \Gt ( \rho_{\iso} ) < p \Gt ( \one / d^2 ) + (1-p) \Gt ( \proj{\Psi} )
\end{equation}
holds for some $p \in (0,1)$ and $d \geq 2$.  Obviously, $\Gt (
\proj{\Psi} ) = \G ( \ket{\Psi} ) = 1 - \frac{1}{d}$, and because of
the isotropic nature of the maximally mixed state, for any $\ket{\phi}
\in \cH$ we can represent $\frac{\one}{d^2} - \proj{\phi}$ in matrix
form as $\text{diag} \big( \frac{1}{d^2} -1 , \frac{1}{d^2} , \ldots ,
\frac{1}{d^2} \big)$ by choosing a basis with $\ket{\phi}$ as the
first basis vector. Therefore,
\begin{equation*}
  \Gt \Big( \frac{\one}{d^2} \Big) =
  \! \min_{ \ket{\varphi} \in \pro} \! \frac{1}{4}
  \Big( \! \tr \Big\lvert \frac{\one}{d^2} -
  \proj{\varphi} \Big\rvert \Big)^2 =
  \frac{1}{4} \Big( \Big\lvert \frac{1}{d^2} - 1 \Big\rvert +
  (d^2 - 1) \frac{1}{d^2}  \Big)^2 = \Big( 1 - \frac{1}{d^2} \Big)^2 .
\end{equation*}
Now consider $\Gt ( \rho_{\iso} )$. From a geometric viewpoint the
relationship between $p \frac{\one}{d^2} + (1-p) \proj{\Psi}$ and
$\ket{\varphi} \in \pro$ is entirely determined by the angle between
$\ket{\Psi}$ and $\ket{\varphi}$. We therefore parameterize
$\ket{\varphi} = \sqrt{\alpha} \ket{\Psi} + \sqrt{1- \alpha}
\ket{\Psi^{\perp}}$, where $\ket{\Psi^{\perp}}$ is some state
orthogonal to $\ket{\Psi}$.  From $\braket{\Psi}{\varphi} =
\sqrt{\alpha}$, $\ket{\varphi} \in \pro$ and $\Lam ( \ket{\Psi} ) =
\frac{1}{d}$, it follows that $\alpha \leq \frac{1}{d}$.  To show that
this bound can be reached, we construct an example: The states $\{
\ket{\Psi_j} \}_{j=0}^{d-1}$ with $\ket{\Psi_j} = \frac{1}{\sqrt{d}}
\sum_{k=0}^{d-1} \E^{\I \frac{2 \pi jk}{d}} \ket{kk}$ form an
orthonormalized basis of MES. Obviously $\ket{\Psi} = \ket{\Psi_0}$,
and defining $\ket{\Psi^{\perp}} := \frac{1}{\sqrt{d-1}}
\sum_{i=1}^{d-1} \ket{\Psi_i}$, we obtain $\ket{\varphi} = \sqrt{1/d}
\ket{\Psi} + \sqrt{(d-1)/d} \ket{\Psi^{\perp}} = \frac{1}{d}
\sum_{j=0}^{d-1} \sum_{k=0}^{d-1} \E^{\I \frac{2 \pi jk}{d}} \ket{kk}
= \ket{00} \in \pro$.

We now write
\begin{multline}\label{isoform}
  \rho_{\iso} - \proj{\varphi} =
  p \frac{\one}{d^2} + (1-p) \proj{\Psi} - \alpha \proj{\Psi} \\ -
  \sqrt{\alpha (1- \alpha )} \Big( \ketbra{\Psi}{\Psi^{\perp}}
  + \ketbra{\Psi^{\perp}}{\Psi} \Big)
  - (1- \alpha ) \proj{\Psi^{\perp}} \, ,
\end{multline}
and using $\ket{\Psi}$ and $\ket{\Psi^{\perp}}$ as the first two basis
vectors for the matrix representation of \eqref{isoform}, we obtain
\begin{equation*}
  \rho_{\iso} - \proj{\varphi} =
  \begin{pmatrix}
    \frac{p}{d^2} + 1-p - \alpha &
    - \sqrt{ \alpha ( 1 - \alpha) } & 0 & \cdots & 0 \\
    - \sqrt{ \alpha ( 1 - \alpha) } &
    \frac{p}{d^2} - 1 + \alpha & 0 & \cdots & 0 \\
    0 & 0 & \frac{p}{d^2} & \cdots & 0 \\
    \vdots & \vdots & \vdots & & \vdots \\
    0 & 0 & 0 & \cdots & \frac{p}{d^2}
  \end{pmatrix}
  \, .
\end{equation*}
The eigenvaules of this matrix are $\{ \frac{p}{d^2} - \frac{p}{2} \pm
\frac{1}{2} \sqrt{(2-p)^2 - 4 \alpha (1-p) } , \frac{p}{d^2} , \ldots
, \frac{p}{d^2} \}$.  The radicand in the first two eigenvalues is
positive for $p \in (0,1)$ and $\alpha \in [0, \frac{1}{d} ]$. The
first and second eigenvalue are positive and negative, respectively,
for all $p \in (0,1)$, $\alpha \in [0 , \frac{1}{d} ]$ and $d \geq
2$. Hence,
\begin{multline*}
  \Gt ( \rho_{\iso} ) =
  \min_{\ket{\varphi} \in \pro} \frac{1}{4}
  \Big( \tr \Big\lvert p \frac{\one}{d^2} +
  (1-p) \proj{\Psi} - \proj{\varphi} \Big\rvert \Big)^2 \\
  = \min_{\alpha \in [0, \frac{1}{d}]}
  \frac{1}{4} \Big[ \sqrt{(2-p)^2 - 4 \alpha (1-p) } +
  (d^2 - 2) \frac{p}{d^2} \Big]^2 \, ,
\end{multline*}
and the minimum is obviously reached for $\alpha = \frac{1}{d}$.  We
can now rewrite \eqref{ineqcond} as
\begin{equation*}
  \frac{1}{4} \bigg[ \sqrt{(2-p)^2 - \frac{4}{d} (1-p) } +
  (d^2 - 2) \frac{p}{d^2} \bigg]^2 <
  p \Big( 1 - \frac{1}{d^2} \Big)^2 +
  (1-p) \Big( 1 - \frac{1}{d} \Big) \, ,
\end{equation*}
and it can be easily verified numerically that this inequality is
satisfied for all $p \in ( 0 , 1)$ and all $d \geq 2$.  Hence, $\Gt$
is not concave.

\section{Auxiliary results for calculation of $\Gcl$ for maximally correlated states}\label{ap:minimum}

\begin{lemma}\label{le:inequality}
  Let $k \in \mathbb{N}$, $n > 0$, $X > 0$ and $Y \geq 0$ be
  constants, and let $x_1, \ldots , x_k > 0$, $y_1, \ldots , y_k \geq
  0$ be variables with the restrictions $\sum^{k}_{i=1} x_i = X$ and
  $\sum^{k}_{i=1} y_i = Y$.  Then
  \begin{equation}\label{ea:inequality}
    \min_{\substack{ \{ x_1 , \ldots , x_k \} \\
        \{ y_1 , \ldots , y_k \} }}
    \sum^{k}_{i=1} (x_i+y_i) \log \bigg[ n \Big( 1 +
    \frac{y_i}{x_i} \Big) \bigg]  =
    ( X + Y )
    \log \bigg[ n \Big( 1 + \frac{Y}{X} \Big) \bigg] \, .
  \end{equation}
\end{lemma}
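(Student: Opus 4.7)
The plan is to rewrite the sum as a Jensen-type average and invoke convexity of a single-variable auxiliary function. Pulling out $x_i$ from each summand gives
\begin{equation*}
(x_i + y_i)\log\!\left[n\!\left(1 + \tfrac{y_i}{x_i}\right)\right] = x_i \cdot g(t_i), \qquad t_i := \tfrac{y_i}{x_i} \geq 0,
\end{equation*}
where $g(t) := (1+t)\log\!\left[n(1+t)\right]$ on $[0,\infty)$. A direct computation yields $g''(t) = (\log_b \E)/(1+t) > 0$ for any $b > 1$, so $g$ is strictly convex (the logarithm base is immaterial, as it enters only as a positive multiplicative constant).

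With the weights $w_i := x_i/X$ forming a probability distribution, Jensen's inequality applied to $g$ gives
\begin{equation*}
\sum_{i=1}^{k} \frac{x_i}{X}\, g(t_i) \;\geq\; g\!\left(\sum_{i=1}^{k} \frac{x_i}{X}\, t_i\right) = g\!\left(\frac{Y}{X}\right),
\end{equation*}
where I used $\sum_i x_i t_i = \sum_i y_i = Y$ to simplify the barycentric argument. Multiplying through by $X$ and reinserting the definition of $g$ produces the claimed lower bound
\begin{equation*}
\sum_{i=1}^{k} (x_i+y_i)\log\!\left[n\!\left(1 + \tfrac{y_i}{x_i}\right)\right] \;\geq\; (X+Y)\log\!\left[n\!\left(1 + \tfrac{Y}{X}\right)\right].
\end{equation*}
The bound is attained by the admissible choice $y_i = (Y/X)\,x_i$ (for any $x_i > 0$ with $\sum_i x_i = X$), so the infimum is realized and equals the right-hand side.

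There is no substantial obstacle here; the only mildly delicate point is verifying convexity of $g$ and checking the edge case $Y = 0$, where all $t_i$ vanish and both sides collapse to $X \log n$. Strict convexity of $g$ additionally yields that for $Y > 0$ the equality case forces $t_i = Y/X$ uniformly, which is consistent with the constraints and pins down the minimizer (up to how the total $X$ is split among the $x_i$).
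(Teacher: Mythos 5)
Your proof is correct, and it takes a genuinely different and more economical route than the paper's. The paper fixes the $x_i$, eliminates one $y_j$ via the constraint, computes first and second partial derivatives of the reduced functions, uses positivity of the second derivatives to rule out boundary minima and establish uniqueness of the stationary point, and extracts the condition $x_j' y_l' = y_j' x_l'$, hence $y_j/x_j = Y/X$; only then does it observe that the resulting value is independent of the initially fixed $x_i$. Your substitution $t_i = y_i/x_i$, the identity $(x_i+y_i)\log[n(1+y_i/x_i)] = x_i\, g(t_i)$ with $g(t)=(1+t)\log[n(1+t)]$, and a single application of Jensen's inequality with weights $x_i/X$ deliver the lower bound in one step and handle the minimization over both families of variables simultaneously, since the bound is uniform in the $x_i$ and is attained by the admissible choice $y_i = (Y/X)x_i$. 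The convexity check $g''(t) = (\log_b \E)/(1+t) > 0$ is correct and independent of $n$, and your treatment of the edge case $Y=0$ is fine. What the Jensen route buys is brevity and a transparent characterization of the equality case (all ratios $y_i/x_i$ equal, with the split of $X$ among the $x_i$ left free); what the paper's calculus route buys is essentially nothing extra here, beyond making explicit that interior critical points are unique for each fixed $\{x_i'\}$.
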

\begin{proof}
  Because of $\log [ n (1 + \frac{y_i}{x_i}) ] = \log n + \log (1 +
  \frac{y_i}{x_i})$, it suffices to consider $n=1$.
  Eq.~\eqref{ea:inequality} clearly holds for $k=1$, so we assume $k
  \geq 2$.  First, we consider the $x_i$ to be fixed, with their
  values denoted as $\{ x_1' , \ldots , x_k' \}$.  Then, the function
  \begin{equation}\label{fform1}
    f(y_1, \ldots , y_k ) :=
    \sum^{k}_{i=1} (x_i' + y_i) \log \Big(1+ \frac{y_i}{x_i'} \Big)
  \end{equation}
  is a function of $k$ variables.  We use this function to define the
  $k$ functions of $k-1$ variables,
  \begin{equation}\label{fform2}
    g_{j} (y_1 , \ldots , y_{j-1} , y_{j+1} , \ldots , y_k ) :=
    f(y_1 , \ldots , y_{j-1}, Y_j , y_{j+1}, \ldots , y_k ) \, ,
  \end{equation}
  with $j \in [1,k]$, and where $Y_j$ is short-hand for $Y_j := Y -
  \sum_{i \neq j} y_i$.  Obviously, if \eqref{fform1} is minimized at
  $( y_1' , \ldots , y_k' )$ under the condition $\sum^{k}_{i=1} y_i =
  Y$, then \eqref{fform2} is minimized at $( y_1' , \ldots , y_{j-1}'
  , y_{j+1}' ,\ldots , y_k' )$.  In particular, all partial
  derivatives must vanish at this point, and from \eqref{fform1} and
  \eqref{fform2} it follows that
  \begin{equation}\label{fform3}
    \frac{\partial g_{j}}{\partial y_l} =
    \log \frac{x_j' x_l' + x_j' y_l}{x_j' x_l' + Y_j x_l'} \, , \quad
    \text{for all }
    j, l \in [1,k] \text{ with } j \neq l \, .
  \end{equation}
  To rule out boundary points as solutions, we calculate the second
  derivatives,
  \begin{equation*}
    \frac{\partial^2 g_{j}}{\partial y^2_l} =
    \frac{x_l' + y_l + x_j' + Y_j}{(x_l' + y_l ) ( x_j' + Y_j )} > 0 \, ,
  \end{equation*}
  and find that they are strictly positive. From this it follows that
  \eqref{fform3} can be zero at only one point, and that the unique
  minimum is reached there. Since $x_j' y_l = Y_j x_l'$ results in
  $\frac{\partial g_{j}}{\partial y_l} = 0$, we obtain $x_j' y_l' =
  y_j' x_l'$ for all $j, l \in [1,k]$ with $j \neq l$.  From this it
  follows that $x_j' (\sum_i y_i') = y_j' (\sum_i x_i')$, and thus
  $\frac{y_j'}{x_j'} = \frac{Y}{X}$ for all $j \in [1,k]$. Inserting
  this in the l.h.s.~of \eqref{ea:inequality} yields the r.h.s.~of
  \eqref{ea:inequality}, regardless of the initial choice of the fixed
  $\{ x_1' , \ldots , x_k' \}$.  This completes the proof.
  \qed
\end{proof}

\begin{lemma}\label{le:fhs}
  For constants $m, n \in \mathbb{N}$ with $\frac{m}{n} \leq 1$ and $q
  \in (0,1)$, the minimum of the function
  \begin{equation*}
    f (h,s) = \frac{q - s(1-q)}{1 - hs} (1 + h)
    \log \big( m ( 1 + h ) \big) + \frac{(1-q) - hq}{1 - hs}
    (1 + s) \log \big( n ( 1 + s) \big)
  \end{equation*}
  with the domain given by $h\in [0,\frac{1-q}{q}]$ and $s\in
  [0,\frac{q}{1-q}]$, excluding the point $( \frac{1-q}{q},
  \frac{q}{1-q})$, is
  \begin{equation*}
    \min f (h,s) =
    \left\{
      \begin{array}{l}
        q \log m + (1-q) \log n \hfill \text{for }
        \tfrac{m}{n} \geq \frac{1}{\E} \\
        \log ( \tfrac{m}{q} ) \hfill \text{for }
        \tfrac{m}{n} < \tfrac{1}{\E} \text{ and } q \geq \tfrac{\E m}{n} \\
        \log n - q \tfrac{n \log \E}{m \E} \qquad \qquad \quad
        \text{for }
        \tfrac{m}{n} < \tfrac{1}{\E} \text{ and } q < \tfrac{\E m}{n}
      \end{array} \right.
  \end{equation*}
\end{lemma}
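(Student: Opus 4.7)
The plan is to show that the minimum of $f$ must lie on the boundary of its domain, and then to treat each of the four edges as a one-variable calculus problem. For brevity I write $F_1 := q\log m + (1-q)\log n$, $F_2 := \log(m/q)$, $F_3 := \log n - q(n \log \E)/(m\E)$ (the three target values, in the stated order), and $F_4 := \log(n/(1-q))$. First I would set $A := 1+h$, $B := 1+s$ and re-express the prefactors of the two logarithms as $\alpha := X(1+h) = (A - AB(1-q))/(A+B-AB)$ and $\beta := \widetilde{Y}(1+s) = (B - ABq)/(A+B-AB)$, with $X, \widetilde{Y}$ as in the proof of Proposition~\ref{pp:twoquditGcl}. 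Using $A + B - AB = 1 - hs$, a short computation gives $\alpha + \beta = 1$ with both coefficients non-negative on the entire domain, so $f(h,s) = \alpha \log(m(1+h)) + \beta \log(n(1+s))$. Along the edge $h = (1-q)/q$ (i.e.~$A = 1/q$) the factor $\widetilde{Y}$ vanishes identically, forcing $\alpha = 1$ and $f \equiv F_2$; symmetrically, $f \equiv F_4$ along the edge $s = q/(1-q)$.

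Next I would rule out interior critical points. Direct differentiation gives $\partial \alpha/\partial h = B(1-B(1-q))/(A+B-AB)^2$ and $\partial \beta/\partial s = A(1-Aq)/(A+B-AB)^2$; setting $\partial f/\partial h = \partial f/\partial s = 0$ and eliminating the common factor $\log(mA/(nB))$ produces the identity $-\alpha(1-Aq) = \beta(1-B(1-q))$. In the strict interior all four factors $\alpha, \beta, 1-Aq, 1-B(1-q)$ are positive, so the left-hand side is negative while the right-hand side is positive---a contradiction. Hence the infimum is attained on the boundary. On the two remaining edges: along $h = 0$ one computes $\partial f/\partial s|_{h=0} = (1-q)\log(\E n(1+s)/m)$, which is strictly positive since $m/n \leq 1 < \E$, so $f(0, \cdot)$ is increasing and attains its minimum $F_1$ at $s = 0$; along $s = 0$ the function $f(h, 0) = q(1+h)\log(m(1+h)) + ((1-q) - hq)\log n$ is convex in $h$ with derivative $q\log(\E m(1+h)/n)$, whose unique zero $h^{*} = n/(\E m) - 1$ lies inside $(0, (1-q)/q)$ exactly when $m/n < 1/\E$ and $q < \E m/n$, producing the value $F_3$. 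In the complementary regimes the minimum on this edge is pushed to an endpoint, giving $F_1$ for $m/n \geq 1/\E$ and $F_2$ for $q \geq \E m/n$.

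It remains to compare the four candidates $F_1, F_2, F_3, F_4$. The estimates $F_4 \geq F_1$ and $F_4 \geq F_3$ both follow immediately from $m \leq n$ and $1 - q < 1$, so $F_4$ is never the global minimum. The remaining comparisons all reduce via \eqref{elin} to the standard estimate $\ln x \leq x - 1$: the bound $F_2 - F_1 \geq 0$ on the Case~1 regime $m/n \geq 1/\E$ reduces to $1/q - 1 + \ln q \geq 0$; the bound $F_1 - F_3 \geq 0$ reduces to $\ln r + 1/(\E r) \geq 0$ for $r := m/n$, attaining its minimum $0$ at $r = 1/\E$; and $F_2 - F_3 \geq 0$ on the Case~2 regime reduces to $u/\E - \ln u \geq 0$ for $u := nq/m \geq \E$, with minimum $0$ at $u = \E$. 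The main obstacle is purely organizational---keeping track of which of $F_1, F_2, F_3$ dominates in each parameter regime. Once the non-existence of interior critical points is established, the rest of the argument is a routine case-analysis on the four boundary edges.
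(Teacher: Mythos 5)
Your proof is correct and follows essentially the same route as the paper's: rule out interior critical points, then reduce to one-variable minimization on the boundary edges, where the derivatives $(1-q)\log\big(\E n(1+s)/m\big)$ and $q\log\big(\E m(1+h)/n\big)$ reproduce exactly the paper's case distinction. Two minor remarks: when you ``eliminate the common factor $\log(mA/(nB))$'' you should add that this factor cannot vanish at an interior critical point anyway (otherwise $\partial f/\partial h=\alpha/(A\ln 2)>0$), and your explicit four-way comparison of $F_1,\dots,F_4$ is redundant --- since $F_1$ and $F_2$ are the endpoint values of $f(\cdot,0)$ and $F_4\geq F_1=f(0,0)$, the global boundary minimum is automatically $\min_h f(h,0)$, which is how the paper's proof avoids the comparison altogether.
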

\begin{proof}
  The function $f (h,s)$ is continuous and differentiable in its
  entire domain, with the partial derivatives
  \begin{align*}
    \frac{\partial f (h,s)}{\partial h} &=
    \frac{q - (1-q) s}{(1 - hs)^2 \ln 2}
    \bigg[ 1 - hs + (1+s) \ln
    \Big( \frac{m (1+h)}{n (1+s)} \Big) \bigg] \, , \\
    \frac{\partial f(h,s)}{\partial s} &=
    \frac{(1-q) - hq}{(1 - hs)^2 \ln 2}
    \bigg[ 1 - hs + (1+h) \ln \Big( \frac{n (1+s)}{m (1+h)} \Big)
    \bigg] \, .
  \end{align*}
  For $f (h,s)$ to attain a minimum at an interior point, both partial
  derivatives must vanish, i.e.
  \begin{equation*}
    1 - hs + (1+s) \ln \Big( \frac{m (1+h)}{n (1+s)} \Big) =
    1 - hs + (1+h) \ln \Big( \frac{n (1+s)}{m (1+h)} \Big) = 0 \, ,
  \end{equation*}
  from which it follows that $\frac{1+s}{1+h} = -1$. Since this
  cannot be true for any $h$ and $s$, the minimum of $f(h,s)$ must be
  attained at a boundary point. The boundaries of $f (h,s)$ are
    \begin{align*}
      f ( h, \tfrac{q}{1-q} ) &= \log ( \tfrac{n}{1-q} )
      &\text{for } h \in [0, \tfrac{1-q}{q}) \\
      f ( \tfrac{1-q}{q} , s ) &= \log ( \tfrac{m}{q} )
      &\text{for } s \in [0, \tfrac{q}{1-q}) \\
      f_h (h) := f (h,0) &=
      \log n + q (1 + h) \log \big( \tfrac{m ( 1 + h )}{n} \big)
      &\text{for } h \in [0, \tfrac{1-q}{q}] \\
      f_s (s) := f (0,s) &=
      \log m + (1-q) (1 + s) \log \big( \tfrac{n ( 1 + s)}{m} \big)
      &\text{for } s \in [0, \tfrac{q}{1-q}]
    \end{align*}
    Because $f$ is constant on the first two boundaries, and because
    $f_h (\frac{1-q}{q}) = \log (\frac{m}{q})$ and $f_s
    (\frac{q}{1-q}) = \log (\frac{n}{1-q})$, it suffices to find the
    minimum of $f_h$ and $f_s$.  From $\frac{n}{m} \geq 1$ it is clear
    that $f_s (s)$ is monotonically increasing in $s$, hence $\min_s
    f_s (s) = f_s (0) = f(0,0) = q \log m + (1-q) \log n \geq \min_h
    f(h,0) = \min_h f_h (h)$.  Therefore, it suffices to find the
    minimum of $f_h$.  The derivative of $f_h$ is
  \begin{equation*}
    f'_h (h) = \frac{\partial f (h,0)}{\partial h} =
    q \log \frac{\E m (1+h)}{n} \, .
  \end{equation*}
  For $\frac{m}{n} \geq \frac{1}{\E}$, we have $f'_h (h) \geq 0$,
  hence $f_h$ attains its minimum at $f_h (0) = f (0,0)$. For
  $\frac{m}{n} < \frac{1}{\E}$ and $q \geq \frac{\E m}{n}$, we have
  $f'_h (h) \leq 0$, hence the minimum is $f_h (\frac{1-q}{q}) = \log
  (\frac{m}{q})$.  On the other hand, for $\frac{m}{n} < \frac{1}{\E}$
  and $q < \frac{\E m}{n}$, then $f'_h ( \frac{n}{\E m} - 1 ) = 0$, so
  the minimum of $f_h$ is $f_h ( \frac{n}{\E m} - 1 ) = \log n - q
  \frac{n \log \E}{m \E}$.
  \qed
\end{proof}


\begin{thebibliography}{10}
\providecommand{\url}[1]{\texttt{#1}}
\providecommand{\urlprefix}{URL }
\providecommand{\eprint}[2][]{\url{#2}}

\bibitem{pv07}
Plenio, M.B. and Virmani, S.: An introduction to entanglement measures.
\newblock Quant. Inf. Comp. \textbf{7}, 1 (2007)

\bibitem{hhh09}
Horodecki, R., Horodecki, P., Horodecki, M., and Horodecki, K.: Quantum
  entanglement.
\newblock Rev. Mod. Phys. \textbf{81}, 865 (2009)

\bibitem{shimony95}
Shimony, A.: Degree of Entanglement.
\newblock Ann. NY. Acad. Sci. \textbf{755}, 675 (1995)

\bibitem{bh01}
Brody, D.C. and Hughston, L.P.: Geometric quantum mechanics.
\newblock J. Geom. Phys. \textbf{38}, 19 (2001)

\bibitem{bl01}
Barnum, H. and Linden, N.: Monotones and invariants for multi-particle quantum
  states.
\newblock J. Phys. A: Math. Gen. \textbf{34}, 6787 (2001)

\bibitem{wg03}
Wei, T.C. and Goldbart, P.M.: Geometric measure of entanglement and
  applications to bipartite and multipartite quantum states.
\newblock Phys. Rev. A \textbf{68}, 042307 (2003)

\bibitem{ssb06}
Shapira, D., Shimoni, Y., and Biham, O.: Groverian measure of entanglement for
  mixed states.
\newblock Phys. Rev. A \textbf{73}, 044301 (2006)

\bibitem{zch10}
Zhu, H., Chen, L., and Hayashi, M.: Additivity and non-additivity of
  multipartite entanglement measures.
\newblock New J. Phys. \textbf{12}, 083002 (2010)

\bibitem{skb11PRA}
Streltsov, A., Kampermann, H., and Bru\ss{}, D.: Simple algorithm for computing
  the geometric measure of entanglement.
\newblock Phys. Rev. A \textbf{84}, 022323 (2011)

\bibitem{bno02}
Biham, O., Nielsen, M.A., and Osborne, T.J.: Entanglement monotone derived from
  Grover's algorithm.
\newblock Phys. Rev. A \textbf{65}, 062312 (2002)

\bibitem{ssb04}
Shimoni, Y., Shapira, D., and Biham, O.: Characterization of pure quantum
  states of multiple qubits using the Groverian entanglement measure.
\newblock Phys. Rev. A \textbf{69}, 062303 (2004)

\bibitem{mmv07}
Markham, D., Miyake, A., and Virmani, S.: Entanglement and local information
  access for graph states.
\newblock New J. Phys. \textbf{9}, 194 (2007)

\bibitem{wh02}
Werner, R.F. and Holevo, A.S.: Counterexample to an additivity conjecture for
  output purity of quantum channels.
\newblock J. Math. Phys. \textbf{43}, 4353 (2002)

\bibitem{gfe09}
Gross, D., Flammia, S.T., and Eisert, J.: Most Quantum States Are Too Entangled
  To Be Useful As Computational Resources.
\newblock Phys. Rev. Lett. \textbf{102}, 190501 (2009)

\bibitem{ndm07}
{V}an~den Nest, M., D{\"u}r, W., Miyake, A., and Briegel, H.J.: Fundamentals of
  universality in one-way quantum computation.
\newblock New J. Phys. \textbf{9}, 204 (2007)

\bibitem{mpm10}
Mora, C.E., Piani, M., Miyake, A., {V}an~den Nest, M., D{\"u}r, W., and
  Briegel, H.J.: Universal resources for approximate and stochastic
  measurement-based quantum computation.
\newblock Phys. Rev. A \textbf{81}, 042315 (2010)

\bibitem{hmm08}
Hayashi, M., Markham, D., Murao, M., Owari, M., and Virmani, S.: Entanglement
  of multiparty-stabilizer, symmetric, and antisymmetric states.
\newblock Phys. Rev. A \textbf{77}, 012104 (2008)

\bibitem{chs00}
Carteret, H.A., Higuchi, A., and Sudbery, A.: Multipartite generalization of
  the Schmidt decomposition.
\newblock J. Math. Phys. \textbf{41}, 7932 (2000)

\bibitem{odv08}
Or\'{u}s, R., Dusuel, S., and Vidal, J.: Equivalence of Critical Scaling Laws
  for Many-Body Entanglement in the Lipkin-Meshkov-Glick Model.
\newblock Phys. Rev. Lett. \textbf{101}, 025701 (2008)

\bibitem{orus08}
Or\'{u}s, R.: Universal Geometric Entanglement Close to Quantum Phase
  Transitions.
\newblock Phys. Rev. Lett. \textbf{100}, 130502 (2008)

\bibitem{ow10}
Or\'{u}s, R. and Wei, T.C.: Visualizing elusive phase transitions with
  geometric entanglement.
\newblock Phys. Rev. B \textbf{82}, 155120 (2010)

\bibitem{weg04}
Wei, T.C., Ericsson, M., Goldbart, P.M., and Munro, W.J.: Connections between
  relative entropy of entanglement and geometric measure of entanglement.
\newblock Quant. Inf. Comp. \textbf{4}, 252 (2004)

\bibitem{hmm06}
Hayashi, M., Markham, D., Murao, M., Owari, M., and Virmani, S.: Bounds on
  Multipartite Entangled Orthogonal State Discrimination Using Local Operations
  and Classical Communication.
\newblock Phys. Rev. Lett. \textbf{96}, 040501 (2006)

\bibitem{skb11NJP}
Streltsov, A., Kampermann, H., and Bru\ss{}, D.: Linking a distance measure of
  entanglement to its convex roof.
\newblock New J. Phys. \textbf{12}, 123004 (2010)

\bibitem{nc00book}
Nielsen, M.A. and Chuang, I.L.: \emph{Quantum Computation and Quantum
  Information}.
\newblock Cambridge University Press, Cambridge (2000)

\bibitem{bz06book}
Bengtsson, I. and {\.{Z}}yczkowski, K.: \emph{Geometry of Quantum States: An
  Introduction to Quantum Entanglement}.
\newblock Cambridge University Press, Cambridge (2006)

\bibitem{vidal00}
Vidal, G.: Entanglement monotones.
\newblock J. Mod. Opt. \textbf{47}, 355 (2000)

\bibitem{vp98}
Vedral, V. and Plenio, M.B.: Entanglement measures and purification procedures.
\newblock Phys. Rev. A \textbf{57}, 1619 (1998)

\bibitem{plenio05}
Plenio, M.B.: Logarithmic Negativity: A Full Entanglement Monotone That is not
  Convex.
\newblock Phys. Rev. Lett. \textbf{95}, 090503 (2005)

\bibitem{vpr97}
Vedral, V., Plenio, M.B., Rippin, M.A., and Knight, P.L.: Quantifying
  Entanglement.
\newblock Phys. Rev. Lett. \textbf{78}, 2275 (1997)

\bibitem{pv01}
Plenio, M.B. and Vedral, V.: Bounds on relative entropy of entanglement for
  multi-party systems.
\newblock J. Phys. A: Math. Gen. \textbf{34}, 6997 (2001)

\bibitem{rains99}
Rains, E.M.: Bound on distillable entanglement.
\newblock Phys. Rev. A \textbf{60}, 179 (1999)

\bibitem{rains99erratum}
Rains, E.M.: Erratum: Bound on distillable entanglement.
\newblock Phys. Rev. A \textbf{63}, 019902(E) (2000)

\bibitem{aej01}
Audenaert, K., Eisert, J., Jan\'{e}, E., Plenio, M.B., Virmani, S., and {D}e
  Moor, B.: Asymptotic Relative Entropy of Entanglement.
\newblock Phys. Rev. Lett. \textbf{87}, 217902 (2001)

\bibitem{vw01}
Vollbrecht, K.G.H. and Werner, R.F.: Entanglement measures under symmetry.
\newblock Phys. Rev. A \textbf{64}, 062307 (2001)

\bibitem{rains01}
Rains, E.M.: A semidefinite program for distillable entanglement.
\newblock IEEE Trans. Inform. Theory \textbf{47}, 2921 (2001)

\bibitem{wei08}
Wei, T.C.: Relative entropy of entanglement for multipartite mixed states:
  Permutation-invariant states and D{\"u}r states.
\newblock Phys. Rev. A \textbf{78}, 012327 (2008)

\bibitem{hm13}
Hajdu\v{s}ek, M. and Murao, M.: Direct evaluation of pure graph state
  entanglement.
\newblock New J. Phys. \textbf{15}, 013039 (2013)

\bibitem{wag04}
Wei, T.C., Altepeter, J.B., Goldbart, P.M., and Munro, W.J.: Measures of
  entanglement in multipartite bound entangled states.
\newblock Phys. Rev. A \textbf{70}, 022322 (2004)

\bibitem{uhlmann76}
Uhlmann, A.: The ``transition probability'' in the state space of a *-algebra.
\newblock Rep. Math. Phys. \textbf{9}, 273 (1976)

\bibitem{jozsa94}
Jozsa, R.: Fidelity for mixed quantum states.
\newblock J. Mod. Opt. \textbf{41}, 2315 (1994)

\bibitem{cw07}
Cao, Y. and Wang, A.M.: Revised geometric measure of entanglement.
\newblock J. Phys. A: Math. Theor. \textbf{40}, 3507 (2007)

\bibitem{horodecki05}
Horodecki, M.: Simplifying Monotonicity Conditions for Entanglement Measures.
\newblock Open Systems \& Information Dynamics \textbf{12}, 231 (2005)

\bibitem{ozawa00}
Ozawa, M.: Entanglement measures and the Hilbert-Schmidt distance.
\newblock Phys. Lett. A \textbf{268}, 158 (2000)

\bibitem{cxz10}
Chen, L., Xu, A., and Zhu, H.: Computation of the geometric measure of
  entanglement for pure multiqubit states.
\newblock Phys. Rev. A \textbf{82}, 032301 (2010)

\bibitem{amm10}
Aulbach, M., Markham, D., and Murao, M.: The maximally entangled symmetric
  state in terms of the geometric measure.
\newblock New J. Phys. \textbf{12}, 073025 (2010)

\bibitem{aulbach12}
Aulbach, M.: Classification of Entanglement in Symmetric States.
\newblock Int. J. Quantum Inf. \textbf{10}, 1230004 (2012)

\bibitem{hkw09}
H{\"u}bener, R., Kleinmann, M., Wei, T.C., Gonz\'{a}lez-Guill\'{e}n, C., and
  G{\"u}hne, O.: Geometric measure of entanglement for symmetric states.
\newblock Phys. Rev. A \textbf{80}, 032324 (2009)

\bibitem{jhk08}
Jung, E., Hwang, M.R., Kim, H., Kim, M.S., Park, D., Son, J.W., and Tamaryan,
  S.: Reduced state uniquely defines the Groverian measure of the original pure
  state.
\newblock Phys. Rev. A \textbf{77}, 062317 (2008)

\bibitem{mgb10}
Martin, J., Giraud, O., Braun, P.A., Braun, D., and Bastin, T.: Multiqubit
  symmetric states with high geometric entanglement.
\newblock Phys. Rev. A \textbf{81}, 062347 (2010)

\bibitem{hjw93}
Hughston, L.P., Jozsa, R., and Wootters, W.K.: A complete classification of
  quantum ensembles having a given density matrix.
\newblock Phys. Lett. A \textbf{183}, 14 (1993)

\bibitem{wootters98}
Wootters, W.K.: Entanglement of Formation of an Arbitrary State of Two Qubits.
\newblock Phys. Rev. Lett. \textbf{80}, 2245 (1998)

\bibitem{hh99}
Horodecki, M. and Horodecki, P.: Reduction criterion of separability and limits
  for a class of distillation protocols.
\newblock Phys. Rev. A \textbf{59}, 4206 (1999)

\bibitem{hmw13}
Hayden, P., Milner, K., and Wilde, M.M.: Two-message quantum interactive proofs
  and the quantum separability problem.
\newblock In \emph{Proceedings of the 28th IEEE Conference on Computational
  Complexity}. Palo Alto, California, 156--167 (2013), \eprint{arXiv:1211.6120}

\bibitem{dw05}
Devetak, I. and Winter, A.: Distillation of secret key and entanglement from
  quantum states.
\newblock Proc. R. Soc. A \textbf{461}, 207 (2005)

\bibitem{tv00}
Terhal, B.M. and Vollbrecht, K.G.H.: Entanglement of Formation for Isotropic
  States.
\newblock Phys. Rev. Lett. \textbf{85}, 2625 (2000)

\bibitem{heb04}
Hein, M., Eisert, J., and Briegel, H.J.: Multiparty entanglement in graph
  states.
\newblock Phys. Rev. A \textbf{69}, 062311 (2004)

\bibitem{br01}
Briegel, H.J. and Raussendorf, R.: Persistent Entanglement in Arrays of
  Interacting Particles.
\newblock Phys. Rev. Lett. \textbf{86}, 910 (2001)

\bibitem{hv10}
Hajdu\v{s}ek, M. and Vedral, V.: Entanglement in pure and thermal cluster
  states.
\newblock New J. Phys. \textbf{12}, 053015 (2010)

\end{thebibliography}
\end{document}